\documentclass[letterpaper,11pt]{article} % 

\usepackage{amsmath}
\usepackage{amsfonts}
\usepackage{amssymb}
\usepackage{tikz}
\usepackage{amsthm}
\usepackage{fullpage}
\usepackage{caption}
\usepackage{bbm}
\usepackage{hyperref, color}
\hypersetup{colorlinks=true,citecolor=red, linkcolor=blue, urlcolor=blue}
\usepackage[linesnumbered,lined,boxed,commentsnumbered,ruled,vlined]{algorithm2e}
\usepackage{enumerate}
\usepackage{bm}
\usepackage{enumitem}
\usepackage{tabularx}
\usepackage{array}
\usepackage[title]{appendix}

\newcommand{\concept}[1]{\emph{{#1}}}

\newcommand{\todo}[1]{\typeout{TODO: \the\inputlineno: #1}\textbf{{\color{red}[[[ #1 ]]]}}}

%Algorithms

%Message
\newcommand{\MAC}{\textsc{Accept}}
\newcommand{\MRE}{\textsc{Reject}}

%natation in algorithm

\newcommand{\ResolveColoring}{\textsf{Resolve-Coloring}}

%Analysis
\newcommand{\F}{f^v_{c,c'}}
\newcommand{\FA}{f^v_{c^*,c'}}
\newcommand{\FB}{f^v_{c^\circ,c'}}
\newcommand{\FC}{f^v_{c^\star,c'}}
\newcommand{\OPT}{\text{OPT}}
\newcommand{\SOL}{\text{SOL}}

\newcommand{\PAC}{P_{\mathsf{AC}}}
\newcommand{\PRE}{P_{\mathsf{RE}}}

\newcommand{\Resolve}{\textsf{Resolve}}

\newcommand{\EE}[2]{\mathbb{E}_{#1}\left[{#2}\right]}

\newcommand{\HogWild}{\textsl{HogWild!}}

\newcommand{\Y}[2]{{\widehat{Y}_{#1}^{(#2)}}}
\newcommand{\ut}[2]{{t_{#1}^{#2}}}
\newcommand{\ppsl}[2]{{c_{#1}^{#2}}}
\newcommand{\CC}[2]{{\mathcal{C}_{#1}^{#2}}}
\newcommand{\bt}[2]{\beta_{#1}^{#2}}
\newcommand{\DP}[2]{\mathcal{D}_{(#1,#2)}}
\newcommand{\Msg}[2]{\mathsf{Msg}_{#1}^{\rightarrow #2}}
\newcommand{\UD}{\mathsf{UD}}

\newtheorem{theorem}{Theorem}[section]

\newtheorem*{observation*}{Observation}
\newtheorem{claim}[theorem]{Claim}

\newtheorem{lemma}[theorem]{Lemma}
\newtheorem{proposition}[theorem]{Proposition}
\newtheorem{corollary}[theorem]{Corollary}

\theoremstyle{definition}
\newtheorem{definition}{Definition}[section]
\newtheorem{condition}{Condition}
\newtheorem{remark}{Remark}[section]
\newtheorem*{remark*}{Remark}

%\title{Fully-Asynchronous Distributed Metropolis Sampler\\ with Optimal Speedup}
\title{Distributed Metropolis Sampler with Optimal Parallelism}
\date{}
\author{
Weiming Feng~\thanks{State Key Laboratory for Novel Software Technology, Nanjing University. Emails: {fengwm@smail.nju.edu.cn}, {yinyt@nju.edu.cn}. Supported by the National Key R\&D Program of China 2018YFB1003202 and the National Science Foundation of China under Grant Nos. 61722207 and 61672275.}
\and
Thomas P. Hayes~\thanks{ Department of Computer Science, University of New Mexico. Email:{hayes@cs.unm.edu}. Partially supported by NSF CAREER award CCF-1150281.}
\and
Yitong Yin~\footnotemark[1]%\thanks{State Key Laboratory for Novel Software Technology, Nanjing University. Email: {yinyt@nju.edu.cn}.}
%~\footnotemark[2]
%\thanks{Supported by the National Key R\&D Program of China 2018YFB1003202 and the National Science Foundation of China under Grant Nos. 61722207 and 61672275.}
}
  
%\coltauthor{%
% \Name{Weiming Feng} \Email{fengwm@smail.nju.edu.cn}\\
% \addr Address 1
% \AND
% \Name{Thomas P. Hayes} \Email{hayes@cs.unm.edu}\\
% \addr Address 2%
% \AND
%  \Name{Yitong Yin} \Email{yinyt@nju.edu.cn}\\
% \addr Address 3%
%}

\begin{document}

\maketitle
\begin{abstract}
The Metropolis-Hastings algorithm is a fundamental Markov chain Monte Carlo  (MCMC) method for sampling and inference. 
With the advent of Big Data, 
distributed and parallel variants of MCMC methods are attracting increased attention.
In this paper, we give a distributed algorithm that can correctly simulate sequential single-site Metropolis chains without any bias in a fully asynchronous message-passing model.
Furthermore, if a natural Lipschitz condition is satisfied by the Metropolis filters, our algorithm can simulate $N$-step Metropolis chains within $O(N/n+\log n)$  rounds of asynchronous communications, {where $n$ is the number of variables}. For sequential single-site dynamics, whose mixing requires $\Omega(n\log n)$ steps, this achieves an optimal linear speedup.
For several well-studied important graphical models, including proper graph coloring, hardcore model, and Ising model, our condition for linear speedup is weaker than the respective uniqueness (mixing) conditions.

The novel idea in our algorithm is to \emph{resolve updates in advance}: the local Metropolis filters can often be executed correctly before the full information about neighboring spins is available. 
This achieves optimal parallelism \mbox{without introducing any bias.}
\end{abstract}

%\begin{keywords}%
%Markov chain Monte Carlo, Metropolis-Hastings algorithm, distributed learning, asynchronous, big data.
%\end{keywords}

\setcounter{page}{0} \thispagestyle{empty} \vfill
\pagebreak

\tableofcontents{}
\setcounter{page}{0} \thispagestyle{empty} \vfill
\pagebreak

\section{Introduction}
\label{sec:introduction}
Sampling from joint distributions represented by graphical models is one of the central topics in various fields, including randomized algorithms, statistics, machine learning, and data analysis.
The Metropolis-Hastings method is a fundamental Markov chain Monte Carlo (MCMC) method for sampling. 
Let $\mu$ be a joint distribution for a set $V$ of $n$ random variables, each with domain $[q]$. 
The classic single-site Metropolis chain for sampling from $\mu$ is described as follows.

\vspace{-3pt}
\begin{algorithm}[h]
\SetKwComment{Comment}{$\triangleright$\ }{}
\SetKwInOut{Input}{Input}
\Input{initial configuration $X_0\in[q]^V$}
\For{$t=1$  to $N$}{ 
	pick $v\in V$ uniformly at random\;
	sample a random $c'\in[q]$ and construct $X'\in[q]^V$ by modifying $X_t(v)$ to $c'$\label{alg:metropolis-line-update-1}\;
	with probability $\min\left\{1,\frac{\mu(X')}{\mu(X_t)}\right\}$, set $X_{t+1}\gets X'$; otherwise, set $X_{t+1}\gets X_{t}$\label{alg:metropolis-line-update-2}\;
}
\SetKwInOut{Output}{output}
%\Return{$X_T$\;}
\caption{single-site Metropolis sampler for $\mu$}\label{Alg:Metropolis}
\end{algorithm}
\vspace{-3pt}

We assume the joint distribution $\mu$ is presented to us as a \concept{graphical model} on an undirected graph $G = (V,E)$.  By this, we mean (see, for instance, \cite{mezard2009information, koller2009probabilistic, feng2019dynamic}) that
the following \concept{conditional independence} property holds for the marginal distribution $\mu_v$ at each node $v\in V$:
\vspace{-5pt}
\[
%\forall \tau\in[q]^{V\setminus\{v\}},\quad \mu_v\left(\,\cdot\mid X_{V\setminus\{v\}}=\tau\right)=\mu_v\left(\,\cdot\mid X_{N(v)}=\tau_{N(v)}\right). %\label{eq:conditional-independence}
\vspace{-3pt}
\mu_v\left(\,\cdot\mid X_{V\setminus\{v\}}\right)=\mu_v\left(\,\cdot\mid X_{N(v)}\right).
\]
Here, $N(v) = \{\,u\in V\mid \{u,v\}\in E\,\}$ denotes the neighborhood of $v$ in $G$
and for any subset $S \subset V$, $X_S$ denotes the configuration $X$ restricted to $S$.
In other words, given the spins on its neighbors, the spin at $v$ is conditionally independent
of the spins at all the other vertices in the graph.
Therefore, the Metropolis filter $\min\left\{1,\frac{\mu(X')}{\mu(X_t)}\right\}$ in Algorithm~\ref{Alg:Metropolis} can be computed locally at $v$ by accessing its immediate neighbors as:
%\[
$\frac{\mu(X')}{\mu(X_t)}=\frac{\mu_v(X'(v)\mid X_{t}(N_v))}{\mu_v(X_t(v)\mid X_{t}(N_v))}$.
%\] 
%where $\mu_v(\,\cdot\mid X_{t}(N_v))$ stands for the marginal distribution at $v$ conditioning on the current configuration of the neighborhood $N_v\subseteq V$ of $v$ in the graphical model. Therefore, each update only needs access to the immediate neighbors of the variable being updated.

For example, consider \concept{uniform proper graph colorings}, which is a typical graphical model: a $q$-coloring $X\in[q]^V$ is proper for graph $G=(V,E)$ if every pair of adjacent nodes receives two distinct colors.  Here, $\mu$ is the uniform distribution over all proper $q$-colorings of $G$.
Algorithm~\ref{Alg:Metropolis} then instantiates to the following well known Metropolis chain on proper $q$-colorings: at step $t$, it samples a uniform color $c'\in[q]$, and updates the randomly picked vertex $v$'s color to $c'$ as long as $c'\not\in\{X_t(u)\mid u\in N_v\}$.

This classic Metropolis sampler is inherently sequential.
Meanwhile, the  boom in Big Data  applications in contemporary Machine Learning has been drawing increased attention to parallel and distributed algorithms for sampling~\cite{newman2007distributed, doshi2009large, smyth2009asynchronous,yan2009parallel,gonzalez2011parallel,ahmed2012scalable,sa2016ensuring,de2015rapidly,daskalakis2018hogwild,kandasamy2018parallelised}.
In this work, we aim to transform a classic family of  sequential sampling algorithms for graphical models, the single-site Metropolis samplers, to distributed algorithms with ideal parallel speed-up.

It is well known that the sequential Metropolis chain $(X_t)_{t \geq 0}$ can also be obtained from the following continuous-time process $(Y_t)_{t \in \mathbb{R}_{\geq 0}}$, which is parallel in nature:
\begin{itemize}
\item each node $v \in V$ is associated with an i.i.d.~rate-1 Poisson clock; %Recall that a rate-$k$ Poisson clock is defined as: assuming $\{x_i\}_{i=1}^\infty$ to be i.i.d.~exponential random variables with $\E{x_i} = 1/k$, the clock rings at time $t_1,t_2,t_3\ldots$, where $t_i = \sum_{j=1}^ix_j$.
\item when the Poisson clock at node $v$ rings, the value at $v$ is updated instantly in the same manner as in the discrete-time Metropolis chain (Line~\ref{alg:metropolis-line-update-1}--\ref{alg:metropolis-line-update-2} in Algorithm~\ref{Alg:Metropolis}). 
%\item More specifically, supposed the Poisson clock at node $v$ ring at time $t \in \mathbb{R}_{\geq 0}$ and $c=Y_{t-\epsilon}(v)$, node $v$ draws $c'\in[q]$ according to its proposal distribution $\nu_v$, and updates $Y_t(v)$ to the proposed value $c'$ with probability $f_{c,c'}^v(Y_t(N_v))$.
\end{itemize}
%In fact, the algorithm is a sequential discretization of the continuous-time parallel process, in which each variable holds an i.i.d.~Poisson clock, so that upon each ringing of a clock at a variable, that variable is updated instantly as in Algorithm~\ref{Alg:Metropolis}.
%
Such continuous-time parallel processes were defined by physicists to study natural evolutions and dynamics \cite{glauber1963time}, even before computer scientists discretized it to a sequential sampling algorithm. 
It is well known that the two processes $(X_t)_{t\ge 0}$ and $(Y_t)_{t \in \mathbb{R}_{\geq 0}}$ are equivalent in the sense that for any $T\ge 0$,  $Y_T$ is identically distributed as $X_{N}$ for $N \sim \mathrm{Pois}(nT)$. 
%More precisely,
%\begin{align}
%\label{eq-con-disc-chain}
%\forall N \geq 0:\quad \Pr[\,Y_T = X_N\,] = \frac{\mathrm{e}^{-nT}(nT)^N}{N!}.%\todo{\text{Why not $\ge$?}}
%\end{align}

Although this natural parallel continuous-time process has been known for more than half a century, people do not know how to run it correctly and efficiently in distributed systems.
A major issue arises is that
%A major issue arises when trying to discretize this natural continuous-time process as concurrent algorithms:
the updates to variables are non-atomic, %it requires to read its neighbors' current values and then modify its own value, 
so that concurrent accesses to {critical regions} consisting of adjacent variables may cause {race conditions} resulting in faulty sampling.
For example, for uniform proper graph coloring, when two adjacent nodes are concurrently updating their own colors based on their knowledges about others' current colors, there is a non-negligible chance that the new coloring becomes improper.

Such inaccuracy in parallel execution of the MCMC sampling can be avoided by a concurrency control that forbids the concurrent updates to adjacent variables; 
see~\cite{gonzalez2011parallel, feng2017sampling}.  However, this results in a 
suboptimal $O(n/\Delta)$ factor of parallel speedup, where $n=|V|$ is the number of variables and $\Delta=\max_{v\in V}|N_v|$ is the maximum degree.
%
%The issue was partly resolved by: 
Two main approaches have been proposed in order to achieve an optimal linear $O(n)$ speedup:
(1)~the {\HogWild}~method~\cite{smola2010architecture}
assumes stochastic asynchronous schedulers with independently random message delays, and is good for estimating marginal probabilities and Lipschitz functions with small biases~\cite{sa2016ensuring,daskalakis2018hogwild}, although does not always converge to the correct stationary distribution;
(2)~the \textsl{LocalMetropolis} chain~\cite{feng2017sampling, fischer2018simple}, which is a new synchronized parallel Markov chain inspired by the Metropolis chain, 
always converges to the correct stationary distribution, and in particular for sampling uniform proper graph coloring, is rapidly mixing with a $O(n)$-speedup under sufficient conditions for coupling~\cite{fischer2018simple,feng2018distributed}.
Although this second approach gives synchronous algorithms when introduced, by the universal synchronizer for reliable communications~\cite{awerbuch1985complexity}, such synchronous algorithms can be transformed to \concept{fully-asynchronous} algorithms which are correct against adversarial schedulers.
In contrast, the {\HogWild}~method assumes a limited asynchrony that crucially relies on independent random message delays.

%which can be fast and approximately accurate assuming stochastically asynchronous schedulers with independently random message delays, but fails in general against adversarial schedulers;
%(2)~the \textsl{LocalMetropolis} chain~\cite{feng2017sampling, fischer2018simple}, a well-synchronized parallel Markov chain that can draw correct samples with linear speedup, assuming Dobrushin-style conditions~\cite{fischer2018simple,feng2018distributed}.

%\pagebreak
\subsection{Our results}
\label{sec:our-result}
We give distributed algorithms for Metropolis sampler that achieve optimal linear parallel speedup under mild conditions.

To state the results in full generality, we present our algorithms as \emph{fully-asynchronous distributed algorithms}, in the message-passing model for distributed computing~\cite{lynch1996distributed, peleg2000distributed, attiya2004distributed},
where the communication network is the graph $G=(V, E)$ for the graphical model,  with nodes as processors and edges as channels, and all communications are asynchronous, with every message delivered within at most one \concept{time unit}.
%
%We study distributed Metropolis sampler in \emph{fully-asynchronous message-passing model}~\cite{lynch1996distributed, peleg2000distributed, attiya2004distributed}, which is the standard  model for asynchronous distributed computing.
%
%The network is the underlying graph $G=(V, E)$ for the graphical model, with nodes as processors and edges as channels.
%
%All local computations are atomic.
%
%Communications are {fully-asynchronous}\footnote{We use ``fully-asynchronous'' to emphasize the adversarial scheduler for communication, in contrast to the stochastically-asynchronous schedulers with independent message delays assumed by the  {\HogWild}~approach~\cite{smola2010architecture, sa2016ensuring,daskalakis2018hogwild}.}: 
%messages may travel through channels in FIFO order within unpredictable time determined by an {adversary}, assuming that each message is delivered within at most one \emph{time unit}. %(a.k.a~asynchronous round).
%
For more details of the communication model, see Section~\ref{section-model}.
%

%We give fully-asynchronous distributed algorithms that under mild conditions, can output a sample $X_N$ generated by the single-site Metropolis chain $(X_t)_{t \ge 0}$ with linear parallel speedup.

\paragraph{Uniform proper $q$-coloring.}
Let $(X_t^{\mathsf{col}})_{t \geq 0}$ denote the single-site Metropolis chain for uniform proper $q$-coloring on a graph $G=(V, E)$. 
We have the following theorem.
\begin{theorem}\label{thm:main-theorem-color}
%Let $G=(V,E)$ be a graph on $n$ nodes and with maximum degree $\Delta$.
%Let $(X_t)_{t \geq 0}$ be the single-site Metropolis chains for proper $q$-coloring on a graph $G=(V, E)$ on $n=|V|$ nodes with maximum degree $\Delta$. 
Let $n=|V|$ and $\Delta$ the maximum degree of $G$.
Assume that $q \geq \alpha \Delta$ for an arbitrary constant $\alpha > 0$. %where $\Delta$ denotes the maximum degree of graph $G=(V,E)$.
There is a fully-asynchronous distributed algorithm on network $G$ such that:
\begin{itemize}
\item each node $v\in V$ receives an arbitrary $T\ge 0$ as well as its initial color $X^{\mathsf{col}}_0(v)\in[q]$ as inputs;
\item upon termination the algorithm outputs a $X^{\mathsf{col}}_N\in[q]^V$ for some $N\ge nT$, with each node $v\in V$ returning $X^{\mathsf{col}}_N(v)$,  where $X^{\mathsf{col}}_N$ is the $N$-th step  of the single-site Metropolis chain $(X_t^{\mathsf{col}})_{t \geq 0}$;
\item with high probability the algorithm terminates within $O(T+\log n)$ time units, where $O(\cdot)$ hides the constant $\alpha$; and each message contains $O(\log n+\log \lceil T\rceil+\log q)$ bits.
%\item with high probability the algorithm terminates within $O(T+\log n)$ time units, where $O(\cdot)$ hides the constant $\alpha$,  and outputs a $q$-coloring $X^{\mathsf{col}}_N\in[q]^V$ for some $N\ge nT$, with each node $v\in V$ returning $X^{\mathsf{col}}_N(v)$,  where $X^{\mathsf{col}}_N$ is the $N$-th step  of the single-site Metropolis chain $(X_t^{\mathsf{col}})_{t \geq 0}$;
%\item each message contains $O(\log n+\log \lceil T\rceil+\log q)$ bits.
\end{itemize}
\end{theorem}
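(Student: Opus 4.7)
The plan is to take the continuous-time Poisson-clock version of the chain as the ground truth and to build a fully-asynchronous simulator around it. Each node $v$ independently precomputes, from its private randomness, its Poisson ring times $t_1^v < t_2^v < \cdots$ on $[0, T']$ for a suitable $T' = T + O(\sqrt{T\log n})$ chosen so that the total number of rings is at least $nT$ with high probability, together with the proposed colors $c_i^v \in [q]$ and the acceptance coins. For uniform proper $q$-coloring, the Metropolis filter at event $(v, i)$ is just the indicator that no neighbor of $v$ carries color $c_i^v$ at time $t_i^v$, so the event is an \MAC{} iff this condition holds and is otherwise a \MRE{}. The engine of the algorithm is a \textsf{Resolve-Coloring} primitive: node $v$ stores, for each neighbor $u$, the growing prefix of $u$'s committed color-change events that $u$ has forwarded, together with a progress stamp saying ``$u$'s history is fully known up to time $s$''. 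The event $(v, i)$ is committed as soon as these prefixes either exhibit a conflicting neighbor (reject) or certify $u(t_i^v) \neq c_i^v$ for every $u$ (accept). Whenever $v$ commits an \MAC, it pushes a single color-change message to each neighbor carrying its time stamp and the new color, meeting the $O(\log n + \log\lceil T\rceil + \log q)$ bit budget; occasional ``no-change up to time $s$'' heartbeats advance the progress stamps.

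I would establish correctness by induction on the global chronological order of the precomputed events. The inductive hypothesis is that the prefix of events already committed by the simulator agrees with the corresponding prefix of the sequential Metropolis chain under the canonical Poisson-ordered coupling. For the next event $(v, i)$, the Metropolis filter depends only on $c_i^v$ and on the neighbors' values at $t_i^v$, so any resolution procedure that commits $(v, i)$ only when the committed neighbor prefixes pin down those values (or at least certify them to differ from $c_i^v$) is faithful. Asynchrony only reorders which events get resolved first; it never alters the precomputed ring times or proposed colors, and hence the output distribution matches the sequential chain exactly.

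The heart of the proof, and where I expect the main obstacle, is the $O(T + \log n)$ runtime bound. I would define a \emph{dependency graph} on events in which $(v, i)$ points to a predecessor event $(u, j)$ of a neighbor $u$ precisely when $(v, i)$ cannot be committed until $(u, j)$ is resolved; absent any such dependency, $(v, i)$ is settled at $v$ after at most one unit of communication delay for the routine heartbeats. Conditioning on the history strictly before $(v, i)$, the proposal $c_i^v$ is uniform on $[q]$, so the probability that any fixed neighbor spawns a dependency is $O(1/q) \leq O(1/(\alpha\Delta))$. The plan is then to show, via a percolation/first-moment argument on directed paths in this dependency graph, that the longest such path has length $O(\log n)$ with high probability. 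Because ring times along a dependency path are strictly increasing and bounded by $T'$, the total asynchronous delay accumulated in resolving any event is bounded by $T' + O(\log n)$ time units. The subtle step is that random proposals are shared across events, so independence across path edges must be recovered by revealing randomness along each candidate path in a dependency-respecting order; once this conditioning is arranged, the constant inside the $O(\cdot)$ absorbs a polynomial dependence on $1/\alpha$ while the $O(\log n)$ scaling is preserved, and a union bound over the $O(nT')$ starting events closes the argument.
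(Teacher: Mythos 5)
Your proposal is missing the paper's central mechanism, and this breaks the key probabilistic step of your runtime argument. In your scheme a node $v$ only ever learns a neighbor $u$'s \emph{committed} color changes plus ``history known up to time $s$'' heartbeats; it never learns $u$'s pending update times and proposed colors. With only that information, $v$ can certify neither $u(t_i^v)\neq c_i^v$ nor $u(t_i^v)=c_i^v$ unless $u$'s progress stamp has already passed $t_i^v$, because an unresolved update of $u$ before $t_i^v$ could move $u$ onto or off the color $c_i^v$ in a way $v$ cannot rule out. Hence every prior adjacent update is a dependency with probability $1$, not $O(1/q)$, and your algorithm degenerates to the straightforward ``wait for all prior adjacent updates'' strategy, whose worst-case bound is $O(\Delta T+\log n)$ (under a synchronous scheduler adjacent nodes cannot resolve in the same round, so the $\Delta$ overhead is real). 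The paper avoids this by having each node send, in a first phase, its \emph{entire} list of update times and proposals to its neighbors; then $v$ maintains for each neighbor a set of possible states $\mathcal{S}_{t}(u)$ (last known color plus the pending proposals before $t$) and resolves an update in advance as soon as $c_i^v$ avoids all possible states (accept) or some $\mathcal{S}(u)$ is the singleton $\{c_i^v\}$ (reject). Only with this advance-resolution rule does a cross-node dependency require the random proposal $c_i^v$ to hit one specific color of the triggering neighbor, which is where the $O(1/q)\le O(1/(\alpha\Delta))$ per-neighbor probability actually comes from (and, in the general theorem, where Condition~\ref{condition-Lipschitz} enters through a coupled coin with thresholds $\PAC,\PRE$).

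Even granting the advance-resolution mechanism, your runtime accounting is off. The claim that the longest dependency path has length $O(\log n)$ w.h.p.\ is false for large $T$: a single node already has $\sim T$ updates, each with constant probability (about $2\Delta/q=O(1/\alpha)$ summed over neighbors) of being triggered by an adjacent resolution, so critical chains of length $\Theta(T)$ occur; the correct statement, and the one the paper proves, is that the whole dependency chain (self-triggered links included) has length $O_\alpha(T+\log n)$, shown by a union bound over paths with $s$ neighbor-jumps, paying $\binom{\ell-1}{s}\Delta^s$ for the path choices, $(2C/\Delta)^s$ for the jump probabilities, and a Poisson/ordering factor roughly $(T/\ell)^\ell$ for placing $\ell$ increasing update times in $[0,T]$. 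Your sentence deriving ``total delay $\le T'+O(\log n)$'' from ``ring times are increasing and bounded by $T'$'' conflates simulated continuous time with communication time units; message-delay units are counted per dependency link, not per unit of simulated time. Finally, the conditioning issue you flag (proposals shared across events, dependencies revealed adaptively) is exactly the technical crux: the paper handles it by fixing the adversarial delays and all randomness of earlier updates, showing the earlier execution is then determined, and bounding the conditional trigger probability by an optimization over how a single neighbor's possible-state set can shrink by one element (Claim~\ref{claim-conditional-prob} and Lemma~\ref{lemma-opt-problem}); your proposal acknowledges this obstacle but does not supply an argument for it.
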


%It is well known that the irreducibility condition for the Metropolis chain on proper $q$-colorings is $q \geq \Delta+2$ while the uniqueness  condition for proper $q$-coloring is $q \geq \Delta+1$~\cite{jonasson2002uniqueness,galanis2015inapproximability}.
%

Note that the condition $q=\Omega(\Delta)$ assumed by Theorem~\ref{thm:main-theorem-color} is much weaker than the irreducibility ($q \geq \Delta+2$) as well as the uniqueness ($q \geq \Delta+1$) conditions for proper $q$-colorings.
This is because the goal of the algorithm is to simulate the chain regardless of  mixing.

Given an arbitrarily specified threshold $T\ge 0$, the algorithm returns with high probability a sample $X_N^{\mathsf{col}}$, where $N\ge nT$, within $O(N/n+\log n)$ time units.
In fact, the algorithm achieves this by perfectly simulating the continuous-time chain $(Y_t^{\mathsf{col}})_{t \in \mathbb{R}_{\geq 0}}$ up to time $T'=2T+ 8\ln n$, so that it always returns a $Y_{T'}^{\mathsf{col}}$ upon termination, which with high probability is identical to $X_N^{\mathsf{col}}$ for some $N\ge nT$ due to concentration of $N\sim\mathrm{Pois}(nT')$.
Recall that $\Omega(n\log n)$ is a general lower bound for the mixing time of single-site dynamics~\cite{hayes2007general}. 
Therefore the time bound $(N/n+\log n)$ gives the optimal linear speedup for mixing chains.

\paragraph{General Metropolis samplers.} For general graphical models, we use the following abstract formulation of Metropolis samplers that generalizes Algorithm~\ref{Alg:Metropolis}.
%
%Assume that the graphical model is defined on a set $V$ of $n$ variables with finite domain $[q]$, on underlying graph structure $G=(V,E)$, such that for each $v\in V$, $N_v=N(v)$ denotes $v$'s neighborhood.
%
Let $[q]$ be the finite domain and $G=(V,E)$ the underlying graph of the graphical model, where for each $v\in V$, we use $N_v=N(v)$ to denote the neighborhood of $v$.
%
%For each $v\in V$, we still use $N_v=N(v)$ to denote $v$'s neighborhood.
%
A single-site Metropolis chain with state space $\Omega=[q]^V$ is specified by a sequence $(\nu_v)_{v\in V}$ of \concept{proposal distributions} and a sequence $(f_{c,c'}^v)_{c,c'\in[q],v\in V}$ of \concept{Metropolis filters}, where each $\nu_v$ is a distribution over $[q]$ and each $f_{c,c'}^v:[q]^{N_v}\to[0,1]$ maps local configurations $X_{N_v}\in[q]^{N_v}$ to acceptance probabilities.
The abstract single-site Metropolis chain is as below.

%\vspace{-3pt}
\begin{algorithm}[h]
\SetKwComment{Comment}{$\triangleright$\ }{}
\SetKwInOut{Input}{Input}
\Input{initial configuration $X_0\in[q]^V$}
\For{$t=1$  to $N$}{ 
	pick $v\in V$ uniformly at random and denote $c=X_t(v)$\;
	sample $c'\in[q]$ according to $\nu_v$ and construct $X'\in[q]^V$ by modifying $X_t(v)$ to $c'$\label{alg:abs-metropolis-line-update-1}\;
	%let $X'\in[q]^V$ obtained from modifying $X_t(v)$ to $c'$\;%\;
	with probability $f_{c,c'}^v\left(X_t(N_v)\right)$, set $X_{t+1}\gets X'$; otherwise, set $X_{t+1}\gets X_{t}$\label{alg:abs-metropolis-line-update-2}\;
}
\SetKwInOut{Output}{output}
%\Return{$X_T$\;}
\caption{single-site Metropolis sampler (abstract version)}\label{Alg:Metropolis-abstract}
\end{algorithm}
%\vspace{-3pt}

The Metropolis chain in Algorithm~\ref{Alg:Metropolis} for sampling from $\mu$ represented by a graphical model on $G=(V,E)$ is a special case of this abstract Metropolis sampler with each $\nu_v$ being the uniform distribution over $[q]$ and $f_{c,c'}^v:[q]^{N_v}\to[0,1]$ defined as $\forall \tau\in[q]^{N_v},\,\, f_{c,c'}^v(\tau) =  \min \left\{ 1, \frac{\mu_v\left(c' \mid \tau\right)}{\mu_v\left(c\mid \tau \right) } \right\}$.\footnote{To have $f_{c,c'}^v$ defined everywhere, we may take the conventions that $\mu_v(\,\cdot\mid\tau)=0$ for the illegal $\tau$ with 0 measure and also $0/0=1$.
Such extension of $f_{c,c'}^v$ to total functions will not affect the definition of the chain over legal states.}
%

%\subsection{A Lipschitz condition}
We define a Lipschitz condition for the Metropolis filters.
%Let $f:[q]^d\to \mathbb{R}$ be a $d$-variate total function. 
%Given any $1\le i\le d$,  $a,b\in[q]$, we define the operator $\delta_{i, a, b}$ on function $f$ as:
%\begin{align*}
%\delta_{i, a, b}\, f \triangleq \max_{\mathbf{x},\mathbf{y}} |f(\mathbf{x}) - f(\mathbf{y})|,
%\end{align*}
%where the  maximum is taken over all pairs $(\mathbf{x},\mathbf{y})$ satisfying $x_i=a$, $y_i=b$ and $x_j=y_j$ for all $j\neq i$.

%Let $(X_t)_{t\geq 0}$ be a single-site Metropolis chain with proposal distributions $(\nu_v)_{v\in V}$ and Metropolis filters $(f_{c,c'}^v)_{v\in V,\, c,c'\in[q]}$. Recall that each $f_{c,c'}^v$ is a function $f_{c,c'}^v: [q]^{N_v} \to [0,1]$.
%We define the following Lipschitz-style condition. %for Metropolis chains with proposal distributions $\nu_v$ and Metropolis filters $f_{c,c'}^v$.
\begin{condition}
\label{condition-Lipschitz}
There is a constant $C > 0$ such that for any $(u,v)\in E$, any $a,b,c\in [q]$,
\begin{align*}
\EE{c' \sim \nu_v}{\delta_{u, a, b}\,\F} \leq \frac{C}{\Delta},	
\end{align*}
where $\Delta=\Delta(G)$ denotes the maximum degree of $G$, and the operator $\delta_{u, a, b}$ is defined as 
\[
\delta_{u, a, b}\,\F\triangleq \max_{(\sigma,\tau)} \left|\F(\sigma) - \F(\tau)\right|
\] 
where %$B=\left\{(\sigma,\tau)\in[q]^{N_v}\times[q]^{N_v}\mid \sigma_u=a\wedge \tau_u=b \wedge \forall w\in N_v\setminus\{u\}, \sigma_w=\tau_w\right\}$
the maximum is taken over all such $\sigma,\tau\in[q]^{N_v}$ that $\sigma_u=a$, $\tau_u=b$ and $\sigma=\tau$ elsewhere.
%$\sigma_w=\tau_w$ for all $w\in N_v\setminus\{u\}$.
\end{condition}

%We show that any single-site Metropolis chain satisfying Condition~\ref{condition-Lipschitz} can be simulated efficiently (with a linear speedup) by a distributed algorithm, even against adversarial asynchronous schedulers.

Let $(X_t)_{t \geq 0}$ denote the abstract single-site Metropolis chain in Algorithm~\ref{Alg:Metropolis-abstract} on graph $G=(V,E)$.
We show that this chain can be parallelized with linear speedup as long as  Condition~\ref{condition-Lipschitz} holds.

\begin{theorem}[main theorem]\label{thm:main-theorem}
%Let $G=(V,E)$ be a graph on $n$ nodes and with maximum degree $\Delta$.
%Let $(X_t)_{t \geq 0}$ be the single-site Metropolis chains for proper $q$-coloring on a graph $G=(V, E)$ on $n=|V|$ nodes with maximum degree $\Delta$. 
%Let $n=|V|$.
%
Let $n=|V|$.
Assume that Condition~\ref{condition-Lipschitz} holds. 
There is a fully-asynchronous distributed algorithm on network $G$ such that:
\begin{itemize}
\item each node $v\in V$ receives as inputs an arbitrary threshold $T\ge 0$, an initial value $X_0(v)$, the proposal distribution $\nu_v$ and the Metropolis filters $(f_{c,c'}^v)_{c,c'\in[q]}$ at $v$;
\item upon termination the algorithm outputs $X_N$ for some $N\ge nT$, with each node $v\in V$ returning $X_N(v)$,  where $X_N$ is the $N$-th step  of the single-site chain $(X_t)_{t \geq 0}$ defined in Algorithm~\ref{Alg:Metropolis-abstract};
%\item with high probability the algorithm terminates within $O(T+\log n)$ time units,  where $O(\cdot)$ hides (linearly) the constant $C$ in Condition~\ref{condition-Lipschitz}, and outputs $X_N$ for some $N\ge nT$, with each node $v\in V$ returning $X_N(v)$,  where $X_N$ is the $N$-th step  of the single-site chain $(X_t)_{t \geq 0}$ defined in Algorithm~\ref{Alg:Metropolis-abstract};
\item with high probability the algorithm terminates within $O(T+\log n)$ time units,  where $O(\cdot)$ hides the constant $C$ in Condition~\ref{condition-Lipschitz}; and each message contains $O(\log n+\log \lceil T\rceil+\log q)$ bits.
%\item each message contains $O(\log n+\log \lceil T\rceil+\log q)$ bits.
\end{itemize}
%The constant factor in $O(\cdot)$ depends (linearly) only on the constant $C$ in Condition~\ref{condition-Lipschitz}.
\end{theorem}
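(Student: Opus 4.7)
The plan is to simulate the continuous-time Poisson-clock version $(Y_t)_{t \geq 0}$ of the Metropolis chain, which is naturally parallel, and to execute it correctly in the asynchronous message-passing model by \emph{resolving update decisions in advance} whenever partial knowledge of the neighbor state already forces the accept/reject outcome. Set $T' = 2T + c\log n$ for a constant $c$. A Chernoff-type tail bound on $\mathrm{Pois}(nT')$ guarantees that the number of firings $N$ satisfies $N \geq nT$ with high probability, so returning $Y_{T'}$ is an allowed output. Each node $v$ generates privately its local random tape: Poisson event times $\ut{v}{1} < \ut{v}{2} < \cdots$ in $[0,T']$, proposed values $\ppsl{v}{k} \sim \nu_v$, and thresholds $\bt{v}{k} \sim \mathrm{Unif}[0,1]$. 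The task reduces to resolving, for every event, whether $\bt{v}{k} \leq f^v_{c,\ppsl{v}{k}}(Y_{\ut{v}{k}}(N_v))$, where $c$ is $v$'s value immediately before $\ut{v}{k}$.

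The key subroutine \ResolveUpdates{} processes events in chronological order at each node, committing as early as possible. Given a partial configuration $\pi$ on $N_v$ consisting of currently known neighbor values at time $\ut{v}{k}$, the decision is \emph{already determined} if $\bt{v}{k} \notin [L,U]$, where $L$ and $U$ are the min and max of $f^v_{c,\ppsl{v}{k}}(\sigma)$ over completions $\sigma$ of $\pi$; otherwise $v$ must learn one more neighbor's state at that time, triggering a query message. Correctness follows by a direct coupling: the committed outcome always matches that of the sequential chain because it depends only on $\bt{v}{k}$ and on those coordinates of $Y_{\ut{v}{k}}(N_v)$ on which $f^v_{c,\ppsl{v}{k}}$ actually varies beyond the threshold. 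Fully-asynchronous execution is ensured by standard request/response handshakes, so that the number of time units spent by the algorithm equals the length of the longest dependency chain of committed events.

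For efficiency, build the dependency DAG whose edges $u\text{-event} \to v\text{-event}$ mark those queries that were actually needed to commit a $v$-event. Condition~\ref{condition-Lipschitz} gives exactly the right bound on each edge's probability: for any neighbor $u$ and any possible pre-event state, $\Pr[\bt{v}{k}\in[L,U]\text{ due to }u]\leq \EE{c'\sim\nu_v}{\delta_{u,a,b}f^v_{c,c'}} \leq C/\Delta$, so the expected in-degree of each event is $O(C)$. Combined with the fact that each node fires $\Theta(1)$ times per unit interval, a standard first-passage percolation / subcritical branching argument (coupling the dependency structure with an independent branching process of mean offspring $O(C)$ and using Chernoff/union bounds over starting events) yields that the longest dependency chain has length $O(T+\log n)$ with high probability. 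The message size bound is routine: each query/response carries an event index, a time, and a color, fitting in $O(\log n + \log\lceil T\rceil + \log q)$ bits. The hardest part will be the rigorous depth bound, because the dependency structure is adaptive and correlated with the very values it resolves; handling a constant $C$ that need not be small (so the dominating branching process is not automatically subcritical in the crude sense) will require a careful exponential tail estimate of chain weights, most likely proved by induction on depth after fixing the random tape of events.
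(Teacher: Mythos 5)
Your high-level plan (simulate the continuous-time chain up to $T'=2T+\Theta(\log n)$, resolve accept/reject decisions as soon as the partial knowledge forces them, bound the running time by the longest dependency chain) is the paper's plan, but two essential ingredients are missing, and without them the argument does not go through. First, you keep each node's random tape private and have a node \emph{query} a neighbor for its exact state when the decision is undetermined, with unknown coordinates ranging over all completions. With that information structure the early-resolution test is far too weak: for proper coloring, if even one neighbor's value at time $t^{k}_v$ is unresolved, the completion set allows that neighbor to hold the proposed color, so $[L,U]=[0,1]$ and the decision is \emph{never} forced early; the algorithm degenerates toward the naive $O(\Delta T)$ behavior. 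The paper avoids this by having every node broadcast its full list of update times \emph{and proposals} in Phase I, so that $v$ can maintain the small set of possible states $\mathcal{S}_t(u)$ of~\eqref{eq-def-Stu} (last known value plus intervening proposals), and each incoming one-bit \textsc{Accept}/\textsc{Reject} message shrinks one such set by at most one element. Relatedly, your per-edge probability bound $\Pr[\beta\in[L,U]\text{ due to }u]\le \mathbb{E}_{c'\sim\nu_v}[\delta_{u,a,b}f^v_{c,c'}]\le C/\Delta$ is not justified: the uncertainty attributable to $u$ is not between two designated values $a,b$, and Condition~\ref{condition-Lipschitz} only controls $\max_{a,b}\mathbb{E}_{c'}[\delta_{u,a,b}f^v_{c,c'}]$, not $\mathbb{E}_{c'}[\max_{a,b}\delta_{u,a,b}f^v_{c,c'}]$, which equals $1$ for coloring. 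The correct event to bound is that the resolution of one specific adjacent update (removing a single element from a single $\mathcal{S}(u)$) pushes the thresholds past $\beta$; reducing that increment to a single pair $(a,b)$ is exactly what the paper's optimization argument (Claim~\ref{claim-conditional-prob}, Lemma~\ref{lemma-opt-problem}) does, and it also requires careful conditioning on the adversarial message schedule and on all earlier updates' randomness so that only the fresh $(c',\beta)$ of the current update is random.

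Second, the depth bound you defer to a ``subcritical branching / first-passage percolation'' argument is precisely where that heuristic fails, as you yourself note: the mean offspring $O(C)$ need not be below $1$. The paper does not need subcriticality. It uses the monotonicity of dependency chains in update time (Proposition~\ref{prop:dependency-chain-monotone}), conditions on the total number of updates $\mathcal{N}\sim\mathrm{Pois}(nT)$, and performs a union bound over candidate paths stratified by the number $s$ of node-changes: there are at most $\binom{\ell-1}{s}\Delta^s$ such paths, each occurring with probability at most $\binom{m}{\ell}n^{-\ell}(2C/\Delta)^s$ given $\mathcal{N}=m$, so the $\Delta^s$ cancels and the time-ordering constraint yields a bound of the form $\left(\mathrm{e}(1+2C)T/\ell\right)^{\ell}$; taking $\ell=\Theta((1+C)T+\log n)$ gives the $O(T+\log n)$ bound for \emph{any} constant $C$. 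So the gap is not merely technical polish: you are missing both the mechanism (proposal sharing plus one-element refinements) that makes the per-trigger probability $O(C/\Delta)$, and the path-counting argument that replaces the branching-process bound when $C\ge 1$.
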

\noindent
As before, the distributed algorithm in fact simulates the continuous-time chain $(Y_t)_{t\in\mathbb{R}_{\ge 0}}$ up to time $T'=2T+ 8\ln n$, so that the output $Y_{T'}$ is identical to $X_N$ of the discrete-time chain $(X_t)_{t\ge 0}$ for some $N\ge nT$ with high probability.
This is formally proved in Theorem~\ref{thm:main-theorem-Y}.
%

%\begin{theorem}[main theorem]\label{thm:main-theorem}
%Assume that Condition~\ref{condition-Lipschitz} holds for the single-site Metropolis chain $(X_t)_{t\geq 0}$.
%% 
%There is a fully-asynchronous distributed algorithm that simulates $(X_t)_{t\geq 0}$ such that given any $T \in \mathbb{R}_{\geq 0}$, with high probability the algorithm outputs $X_N \in [q]^V$ for some $N \geq T n$ within $O(T + \log n)$ time units, with each message of size $O(\log n+\log q)$.
%
%%
%%Given an undirected graph $G=(V,E)$ with $n$ nodes and a finite domain $[q]$,
%%let $(X_t)_{t\geq 0}$ be the single-site Metropolis chain with state space $[q]^V$. 
%%If the proposal distributions and the Metropolis filters of $(X_t)_{t \geq 0}$ satisfy Condition~\ref{condition-Lipschitz}, then there is an algorithm $\mathcal{A}$ is both synchronous and  asynchronous message-passing models such that given any  $T \in \mathbb{R}_{\geq 0}$, the following holds with probability at least $1 - 1/n$: $\mathcal{A}$ simulates $(X_t)_{t \geq 0}$ and  outputs a random state $X_{k} \in [q]^V$ satisfying $k \geq T n$ \footnote{We must use $k \geq Tn$ rather than $k = \Theta (T n)$. When $T$ is small, say $T = \log n / n$,  we must simulate continuous-time chain to the time $T + \log n$, then $k \geq Tn$ occurs w.h.p.. But in this case $k = O(Tn)$ does not occur w.h.p. } with time complexity $O(T + \log n)$.
%\end{theorem}

%\todo{$N\sim\mathrm{Pois}(Tn)$?}

\begin{remark}
By the universal synchronizer in~\cite{awerbuch1985complexity}, with reliable communications there is no significant difference between synchronous and asynchronous algorithms in terms of time complexity.
%Unlike in~\cite{feng2017sampling, fischer2018simple, feng2018distributed},
%
Nevertheless, we present our algorithms directly in the asynchronous model without assorting to the synchronizer, because we want to  have the algorithm explicitly specified in a more general model, and the algorithm is in fact simpler to describe asynchronously as event-driven procedure.
\end{remark}

On concrete graphical models,
Condition~\ref{condition-Lipschitz} is easy to apply and gives weak conditions.
In fact, Theorem~\ref{thm:main-theorem-color} is a corollary to Theorem~\ref{thm:main-theorem} on proper $q$-coloring, where Condition~\ref{condition-Lipschitz} translates to $q=\Omega(\Delta)$.
On various other well-studied graphical models, Condition~\ref{condition-Lipschitz} is also weaker than the respective uniqueness conditions.
The followings are two major examples:
\begin{itemize}
%indicates whether the proposed color $c'$ does not conflict with the current coloring of neighbors $X_t(N_v)$.
\item \textbf{Hardcore model:} 
The distribution $\mu$ is over all configurations in $\{0,1\}^V$ that corresponds to independent sets of $G$. For each configuration $\sigma\in\{0,1\}^V$ that indicate an independent set of $G$, $\mu(\sigma)\propto\lambda^{\sum_{v\in V}\sigma(v)}$, where $\lambda\ge 0$ is the \concept{fugacity}. The natural  Metropolis chain for this model is:
For each $v\in V$, $\nu_v$ is defined over $\{0,1\}$ as $\nu_v(0)=\frac{1}{1+\lambda}$ and $\nu_v(1)=\frac{\lambda}{1+\lambda}$, and
\begin{align*}
\forall c,c'\in\{0,1\}, \tau\in\{0,1\}^{N_v}:\quad 
f_{c,c'}^v(\tau)=\prod_{u\in N_v}I[\tau_u+c'\le 1]. %\label{eq:filter-hardcore}
\end{align*}
Condition~\ref{condition-Lipschitz} applied on this chain translates to: there is a constant $C>0$ such~that
$\lambda<\frac{C}{\Delta}$, %\text{ for an arbitrary constant }\alpha>0. 
while the uniqueness condition is $\lambda<\frac{(\Delta-1)^{\Delta-1}}{(\Delta-2)^{\Delta}}\approx{\frac{\mathrm{e}}{\Delta}}$~\cite{weitz2006counting,sly2010computational}.

%$\nu_v(0)=\frac{1}{1+\lambda}$ and $\nu_v(1)=\frac{\lambda}{1+\lambda}$, and $f_{c,c'}^v(X_t(N_v))=\prod_{u\in N_v}I[X_t(u)+c'\le 1]$ indicates whether the proposed state $c'\in\{0,1\}$ of $v$ along with the current configuration of the neighborhood $X_t(N_v)$ still constitutes an independent set.
\item \textbf{Ising model:}
The distribution $\mu$ is over all configurations in $\{-1,+1\}^V$, such that for each  $\sigma\in\{0,1\}^V$, $\mu(\sigma)\propto\exp\left(\beta\sum_{(u,v)\in E}\sigma_u\sigma_v\right)$, where $\beta\in\mathbb{R}$ is the temperature.
The natural  Metropolis chain for this model is: For each $v\in V$, $\nu_v$ is uniform over $\{-1,1\}$,~and 
\begin{align*}
\forall c,c'\in\{-1,1\}, \tau\in\{-1,1\}^{N_v}:\quad 
f_{c,c'}^v(\tau)=\exp\left(\min\left\{0,\beta (c'-c)\sum_{u\in N_v}\tau_u\right\}\right). %
\end{align*}
Condition~\ref{condition-Lipschitz} applied on this chain translates to: there is a constant $C>0$ such~that
%\[
$1-\mathrm{e}^{-2|\beta|}<\frac{C}{\Delta}$,
%\]
while the uniqueness condition is  $1-\mathrm{e}^{-2|\beta|}<\frac{2}{\Delta}$~\cite{sinclair2014approximation,sly2014counting}.
\end{itemize}

\section{Model of Asynchrony}\label{section-model}
We adopt the \concept{fully-asynchronous message-passing model} in distributed computing~\cite{lynch1996distributed, peleg2000distributed, attiya2004distributed} as our model of asynchrony. The  network is described by a simple undirected graph $G = (V, E)$, where the nodes represent processors and the edges represent bidirectional channels between them. 

For synchronous systems, communications take place in synchronized \concept{rounds}. 
In each round, each node may send messages to all its neighbors, receive messages from all its neighbors, and perform arbitrary local computation atomically. 
%
%The time complexity is measured by the number of rounds.
%
For the asynchronous systems, there are no synchronous rounds of communications: instead, messages sent from a processor to its neighbor arrive within some finite but unpredictable time, determined by an {adversarial scheduler} who is adaptive to the entire input, satisfying that within a channel all messages sent from a processor will eventually be delivered in the order in which they are sent by the processor (a.k.a.~the {\em reliable FIFO channel}).
Given an execution of an asynchronous algorithm, a \concept{time unit} is defined as the maximum time that elapses between the moment that a message is sent by a processor and the moment that the message is delivered to the receiver. 
The time complexity of an algorithm is measured by the number of time units from the start of the algorithm to its termination in the worst case.

The synchronous model is a special case of the asynchronous model where the scheduler is benign and synchronous.
Any synchronous algorithm can be transformed to a fully-asynchronous algorithm with essentially the same time complexity by the universal synchronizer introduced in~\cite{awerbuch1985complexity}.
The \HogWild~algorithms~~\cite{smola2010architecture, sa2016ensuring,daskalakis2018hogwild} assume a stochastic scheduler where the message delays are independently random.
%
%
%while the \HogWild~algorithms~~\cite{smola2010architecture, sa2016ensuring,daskalakis2018hogwild} gave an intermediate case between synchrony and asynchrony in which the message delays  are independently random.
 %
%For more formal discussions, see~\cite{lynch1996distributed}.
%

\section{Technique Overview: \emph{Resolve Updates in Advance}}
\label{sec:algorithm-outline}
We now give an overview of our main technique for distributed sampler: \emph{resolve updates in advance}.

%We start by considering the continuous-time chain $(Y_t)_{t\in\mathbb{R}_{\ge 0}}$ on proper $q$-colorings of graph $G=(V,E)$.
%%
%The chain starts from an arbitrary proper $q$-coloring $Y_0$ of $G=(V,E)$, and evolves as:

We consider the continuous-time chain $(Y_t)_{t\in\mathbb{R}_{\ge 0}}$, which is the continuous-time version of the abstract Metropolis chain described in Algorithm~\ref{Alg:Metropolis-abstract}. 
Given a network $G=(V,E)$, the chain is specified by the proposal distributions $(\nu_v)_{v\in V}$ and Metropolis filters $(f_{c,c'}^v)_{c,c'\in[q],v\in V}$.
Initially, the chain starts from an arbitrary initial configuration $Y_0\in[q]^V$, and evolves as:
\begin{itemize}
\item each node $v \in V$ is associated with an i.i.d.~rate-1 Poisson clock; 
\item when the Poisson clock at node $v$ rings, 
$Y(v)$ is updated as Line~\ref{alg:abs-metropolis-line-update-1}--\ref{alg:abs-metropolis-line-update-2} of Algorithm~\ref{Alg:Metropolis-abstract}
%let $c=Y(v)$, propose a random  $c'\sim\nu_v$, and with probability $f_{c,c'}^v(Y(N_v))$ accept the proposal and change the value of $Y(v)$ to $c'$.
%\item when the Poisson clock at node $v$ rings, propose a uniform random color $c'\in[q]$, and  accept the proposal by changing $v$'s color to $c'$ if it is not currently used by any neighbor of $v$.
\end{itemize}
This defines $(Y_t)_{t\in\mathbb{R}_{\ge 0}}$.
%We use $(Y^{\mathsf{col}}_t)_{t\in\mathbb{R}_{\ge 0}}$ to denote this continuous-time chain, where 
For any $t\in\mathbb{R}_{\ge 0}$ and any $v\in V$, $Y_t(v)$ denotes the state of node $v$ at time $t$.

The goal of the distributed algorithm is to \emph{perfectly} simulate the continuous-time chain $(Y_t)_{t\in\mathbb{R}_{\ge 0}}$ in the following sense: initially each node $v\in V$ receives its initial state $Y_0(v)$ and a threshold $T\ge 0$, and when terminates, the algorithm outputs $Y_T$, with each node $v\in V$ outputting $Y_T(v)$.

A straightforward strategy for simulating $(Y_t)_{t \in \mathbb{R}_{\geq 0}}$ up to  time $T$ is to resolve each update at a node $v$ only when $v$ has received the results of all prior adjacent updates from $v$'s neighbors.
Therefore node $v$ knows the neighborhood coloring $Y_t(N_v)$ when $v$ resolves its update at time~$t$.
Enumerating all paths with monotonically decreasing Poisson clocks from time $T$ to time $0$ shows that the algorithm terminates within $O(\Delta T + \log n)$ time units with high probability.
The overhead of max-degree $\Delta$ is necessary in the worst-case because if the scheduler is synchronous, adjacent nodes cannot resolve their updates in the same round.

To achieve an ideal parallel speedup, we must allow adjacent nodes to resolve their own updates simultaneously, yet still guarantee the correctness of the chain.
The algorithm is outlined below.

\begin{center}
\definecolor{light-gray}{gray}{0.95}
\fcolorbox{black}{light-gray}{\shortstack[l]{
\\
\textbf{Algorithm at node $v \in V$:}\\
\parbox[c][][c]{40em}{
\vspace{5pt}
%Locally simulate rate-1 Poisson clock for time duration $T$ to generate a list of $v$'s random update times. 
%Suppose that $m_v$ update times are generated. 
%%Suppose that the Poisson clock rings $m_v$ times. 
%Along with each update time, draw a random proposal independently according to the proposal distribution $\nu_v$. 
%\textbf{Locally do:}\, Simulate rate-1 Poisson clock for time duration $T$ to generate a sequence of random update times for $v$; for each update time, propose a random color independently; let $m_v$ denote the number of pairs of update times and proposals generated.
\begin{itemize}[leftmargin=56pt]
\vspace{-7pt}
\item[ \textbf{Phase I:\hspace{5pt}}]\, 
%Exchange the initial values, the lists of update times and proposals of all
%Exchange the initial value and lists of update times and proposals with all neighbors.
Locally simulate rate-1 Poisson clock for time duration $T$ to generate a sequence of random update times; 
let $m_v$ denote the number of times the clock rings;
and for each update time generate a random proposal from $\nu_v$.
\vspace{3pt}\\
\vspace{3pt}Send the initial value, update times and proposals to all neighbors.\\
Upon receiving above informations from all neighbors, enter \textbf{Phase II}.
%Exchange the initial value, the locally generated update times and proposals with all neighbors. 
\vspace{-8pt}
\item[ \textbf{ Phase II:}] \,
{For} $i=1$ to $m_v$ {do}:
\vspace{-6pt}
\begin{itemize}[leftmargin=40pt]
\item[] Keep listening to the channels from all neighbors. 
\vspace{-3pt}
\item[($\star$)] As soon as $v$ gets \textbf{\emph{enough information}} to resolve its $i$-th update: 
\begin{list}{}{}
\vspace{-3pt}
\item Resolve the $i$-th update of $v$ and send the update decision \\
(``\MAC'' or ``\MRE'') to all neighbors. 
\end{list}
\end{itemize}
\end{itemize}
\vspace{-7pt}
}
}
}
\end{center}

For convenience of exposition, the algorithm is divided into two \concept{phases}.
In the first phase, the algorithm at each node $v\in V$ locally generates its own random choices and shares them with all its neighbors.
%
%Specifically, it locally generates all update times and corresponding proposals before time $T$, and share these along with $v$'s initial color with all neighbors of $v$.
%
Therefore in the beginning of the second phase, node $v$ knows the initial states, all update times and proposals before time $T$ of itself and all its neighbors.
%
%We further remark that such organization of the algorithm is for exposition purpose only: when the algorithm is actually implemented, the two phases can be pipelined so that only knowing the neighbors' random choices a few steps ahead would be enough.

In the second phase, the algorithm at each node $v\in V$ resolves the updates at $v$ one by one.
The crucial step is Line~$(\star)$, when node $v$ resolves an update. 
%(supposedly at time $t$ in the chain and with proposal $c'$). 
%
For instance, considering the chain for uniform proper $q$-coloring,  for an update at time $t$ in the chain with $c'\in[q]$ as the proposed color, such an update is resolved once one of the following two events occurs:
\begin{itemize}
\item $\forall u \in N_v$, $Y_{t}(u) \neq c'$, in which case the proposed color $c'$ must be accepted;
\item $\exists u \in N_v$, $Y_{t}(u) = c'$, in which case the proposed color $c'$ must be rejected.
\end{itemize}
Our key observation is: the update can be resolved \emph{in advance}, before the neighborhood coloring $Y_t(N_v)$ fully known to $v$.
This is because node $v$ can infer all the colors that any neighbor $u\in N_v$ can possibly be at time $t$ because $v$ knows $u$'s all update times and proposed colors.
The above two events can be checked with such information by chance before $Y_t(N_v)$ is completely known.

For general continuous-time Metropolis chain $(Y_t)_{t\in\mathbb{R}_{\ge 0}}$,
%each transition is defined as Line~\ref{alg:abs-metropolis-line-update-1}--\ref{alg:abs-metropolis-line-update-2} of Algorithm~\ref{Alg:Metropolis-abstract}, specified by proposal distributions $(\nu_v)_{v\in V}$ and Metropolis filters $(f_{c,c'}^v)_{c,c'\in[q],v\in V}$.
%, each node $v\in V$ is still equipped with the same Poisson clock and:
%\begin{itemize}
%\item when the Poisson clock at node $v$ rings, let $c=Y(v)$, propose a random $c'\in[q]$ according to $\nu_v$, and with probability $f_{c,c'}^v(Y(N_v))$ accept the proposal and change the value of $Y(v)$ to $c'$.
%\end{itemize}
%We use the same two-phase framework as above for the general algorithm.
%%
%In the second phase, 
to resolve an update at time $t$ in the chain with $c'\in[q]$ as the proposal, the algorithm at node $v$ needs to flip a coin with bias $f_{c,c'}^v(Y_t(N_v))$, where $c=Y_{t-\epsilon}(v)$,  to determine the acceptance/rejection of the update. 
As before, node $v$ can infer all possible values of $Y_t(u)$ for each neighbor $u\in N_v$, which composes a (product) space of possible local configurations $Y_t(N_v)$, each corresponding to a possible biased coin.
%As before, a set of possible values of $Y_t(u)$ for each neighbor $u\in N_v$ is maintained by node $v$, from the information received so far, which composes a (product) space of possible local configurations $Y_t(N_v)$, each corresponding to a possible biased coin.
%
We can couple all these possible biased coins together, so that the coin flipping can be carried out by chance before its true bias is fully known, yet still being correct in any case. 
Such coupled coin flipping can even be implemented computationally efficiently because the space of uncertainty is product.
The algorithm is formally described in Section~\ref{sec:distributed-simulation}.

The correctness of the algorithm can be verified by a perfect coupling between the algorithm and the continuous-time chain $(Y_t)_{t\in\mathbb{R}_{\ge 0}}$.
Its efficiency, is established by a \concept{dependency chain} argument.
A dependency chain is a sequence of adjacent updates, 
in which the resolution of each update is \emph{triggered} by the resolution of the previous update.
The time complexity of the algorithm is bounded by the length of longest dependency chain because each dependency takes place within at most one time unit. 
We show that when Condition~\ref{condition-Lipschitz} holds, the probability of a dependency chain decays exponentially as its length grows.
The detailed analyses are in Section~\ref{sec:correctness} and Section~\ref{sec:running-time}.

\section{Distributed Simulation of Continuous-Time Chains}
\label{sec:distributed-simulation}
%\begin{itemize}
%\item define coloring problem 1)
%\item briefly review the update rule for uniform coloring in Metropolis chain 1)
%\item observation: 1）  
%\item algorithms implementation 2)
%\end{itemize}

%problem
%distribution and chain

We now formally describe the \textbf{main algorithm} outlined in last section that simulates the continuous-time chain $(Y_t)_{t\in\mathbb{R}_{\ge0}}$ up to time $T$ in the asynchronous communication model.
Fix any node $v\in V$.
The algorithm at node $v$ consists of two phases.
In \textbf{Phase I}, node $v$ first locally generates the following random values:
\begin{itemize}
\item the random update times $0<\ut{v}{1}<\cdots<\ut{v}{m_v}<T$, generated independently by a rate-1 Poisson clock, where $m_v\sim\mathrm{Pois}(T)$ denotes the number of times the clock rings before time~$T$. %and by convention we set $\ut{v}{0}=0$ and $\ut{v}{m_v+1}=T$;
\item the random proposals $\ppsl{v}{1},\ppsl{v}{2},\ldots,\ppsl{v}{m_v}\in[q]$ sampled i.i.d.~according to distribution $\nu_v$. 
\end{itemize}
Then node $v$  sends these random values and its initial value $Y_0(v)$ to all its neighbors.
The update times $\ut{v}{i}$ can be represented with bounded precision as long as being  distinct and preserving relative order among adjacent nodes, which is enough for correctly simulating the chain $(Y_t)_{t \in[0,T]}$.
This can be done within $O(T+\log n)$ time unites with high probability, with each message of size $O(\log n+\log \lceil T\rceil+\log q)$.\footnote{The update times $\ut{v}{i}$ are truncated adaptively with bounded precision when sent to neighbors, ensuring that among neighbors the truncated update times are distinct and having the same relative order  as before truncation.}
%
%This can be done by sending $m_v+2$ messages $\left(\ut{v}{i},\ppsl{v}{i}\right)$ for $0\le i\le m_v+1$ to every neighbor $u\in N_v$, where $\ppsl{v}{0}=Y_0(v)$ is the initial value of $v$ and $\ppsl{v}{m_v+1}=\bot$ is a special symbol that indicates the end of the proposals.
%
%Each message $\left(\ut{v}{i},\ppsl{v}{i}\right)$ can be represented within $O(\log n+\log \lceil T\rceil+\log q)$ bits.\footnote{The update times $\ut{v}{i}$ can be represented with bounded precision using $O(\log n+\log \lceil T\rceil)$ bits each, which is enough for the simulation of chain $(Y_t)_{t \in\mathbb{R}_{\ge 0}}$ because with high probability all update times are distinct.}
%
Once node $v$ receives all such informations from all its neighbors, it enters \textbf{Phase II}.

With the random update times and proposals $\left(\ut{v}{i},\ppsl{v}{i}\right)$, 
the continuous-time chain $(Y_t)_{t\in[0,T]}$ can be generated 
starting from the initial configuration $Y_0\in[q]^V$,  
where at each update time $t=\ut{v}{i}$ for some $v\in V$ and $1\le i\le m_v$:
\begin{align}
Y_{t}(v)=
\begin{cases}
\ppsl{v}{i} & \text{with prob.~}f_{c,c'}^{v}(Y_{t}(N_v))\\
Y_{t-\epsilon}(v) & \text{with prob.~}1-f_{c,c'}^{v}(Y_{t}(N_v))
\end{cases},
\quad\text{where }c'=\ppsl{v}{i}\text{ and }c=Y_{t-\epsilon}(v),
\label{eq:eq-def-chain-Y}
\end{align}
%where $c'=\ppsl{v}{i}$ and $c=Y_{t-\epsilon}(v)$,
and in general for any $0<t\le T$ and $v\in V$: $Y_t(v)=Y_{\ut{v}{i}}(v)$ for the $i$ satisfying $\ut{v}{i}\le t< \ut{v}{i+1}$.
%\begin{align}
%Y_t(v)=Y_{\ut{v}{i}}(v)\text{ where $i$ satisfies }\ut{v}{i}\le t< \ut{v}{i+1}.\label{eq-def-chain-Y}
%\end{align}
Furthermore, for any $v\in V$ and $0\le i\le m_v$ we denote the state of $v$ right after its $i$-th update as:
\begin{align}
Y_v^{(i)}\triangleq Y_{\ut{v}{i}}(v).\label{eq:def-step-Y}
\end{align}

%The goal of the \textbf{Phase II} of the algorithm at node $v$ is to output the value of $Y_v^{(m_v)}$.
%
The \textbf{Phase II} of the algorithm at node $v$ is described  in Algorithm~\ref{ResolveAllUpdates}.
%The pseudocode at node $v$ is given in Algorithm~\ref{ResolveAllUpdates}.

\begin{algorithm}[h]
\SetKwComment{Comment}{\quad$\triangleright$\ }{}
\setcounter{AlgoLine}{0}
\SetKwInOut{Initialization}{Assumption}
\Initialization{node $v\in V$ knows the initial values $Y_0(u)$ and the lists of update times and proposals $\left(\ut{u}{i},\ppsl{u}{i}\right)_{1\le i\le m_u}$ of all $u\in N_v\cup\{v\}$.}
%$\I{}{} \gets 1$\;
Initialize $\widehat{Y}_v^{(0)}$,   
$\mathbf{j}=(j_u)_{u\in N_v}$ and $\mathbf{Y}=\left(\Y{u}{j}\right)_{u\in N_v, 0\le j\le j_{u}-1}$ respectively as:\\
\quad\,\, $\widehat{Y}_v^{(0)}\gets Y_0(v)$
and for all $u\in N_v$: $j_u\gets {1}$ and $\Y{u}{0}\gets Y_0(u) $;\\ %\tcp*{other $\mathcal{Y}(u, j)$ unspecified}

\For{$i=1$  to $m_v$}{ 
	$\left(\Y{v}{i}, \mathbf{j}, \mathbf{Y}\right) \gets \Resolve\left(i, \Y{v}{i-1}, \mathbf{j}, \mathbf{Y}\right)$\Comment{resolve the $i$-th update}
%	$Y_{t(v,i)}(v) \gets Y(v)$\;
}
\SetKwInOut{Output}{output}
\Return{$\Y{v}{m_v}$\;}
%\Return{$Y(v)$\;}
\caption{Pseudocode for \textbf{Phase II} of the main algorithm at node $v$}\label{ResolveAllUpdates}
\end{algorithm}

\begin{remark}[\textbf{synchronization between phases}]
Different nodes may enter Phase II at different time due to asynchrony. By definition, no message is ever sent from a Phase-I node to a Phase-II node. The messages sent from a Phase-II node $u$ to a Phase-I node $v$ are stored in a queue at $v$ and processed by $v$ in the same order they are received once $v$ enters its own Phase II.
\end{remark}

%information for v
%
%
The goal of the algorithm at node $v$ is to output the value $Y_{v}^{(m_v)}=Y_T(v)$.
It does this by resolving all $m_v$ updates of $v$ one by one in the order they are issued.
In the $i$-th iteration, node $v$ has resolved the first $i - 1$ updates and maintains its current state as $\Y{v}{i-1}$; and within the iteration, it resolves the $i$-th update and updates its current state $\Y{v}{i-1}$ to a new  $\Y{v}{i}$.
This naturally defines a continuous-time chain $(\widehat{Y}_t)_{t\in[0,T]}$: 
%where for any time $0\le t\le T$ and any node $v\in V$, 
\begin{align}
\forall 0\le t\le T\text{ and }\forall v\in V:\quad \widehat{Y}_t(v)=\Y{v}{i}\text{ where $i$ satisfies }\ut{v}{i}\le t< \ut{v}{i+1}.\label{eq-def-chain-hat-Y}
\end{align}
%To guarantee the correctness of the algorithm, we need to verify that $(\widehat{Y}_t)_{t\in[0,T]}$ is identically distributed as the original continuous-time chain $({Y}_t)_{t\in[0,T]}$ defined in~\eqref{eq:eq-def-chain-Y}.
%The algorithm effectively generates a continuous-time chain $(Y_t)_{t \in\mathbb{R}_{\ge 0}}$, such that for any time $0\le t\le T$ and any node $v\in V$, 
%\begin{align}
%Y_t(v)=Y_{t(v,i)}(v)\text{ where $i$ satisfies }t(v,i)\le t< t(v,i+1).\label{eq-def-chain-Y}
%\end{align}
%Our goal is to guarantee that this continuous-time chain is identically distributed as the continuous-time \MA{} the algorithm wants to simulate.

In Algorithm~\ref{ResolveAllUpdates}, the current state $\Y{v}{i}$ of node $v$ is updated with the assistance of the following auxiliary data structures maintained at node $v$:
\begin{itemize}
%\item an iterator $0\le i \le m_v$, indicating that the $i$-th update of $v$ (the update issued at time $t(v,i)$ at $v$ in the chain $(Y_t)_{t \in\mathbb{R}_{\ge 0}}$) is being resolved right now;
%\item $Y(v) \in [q]$, to store the current value of $v$;  
\item a tuple $\mathbf{j} = (j_u)_{u \in N_v}$, such that each $j_u$ denotes that from $v$'s perspective, 
the neighbor $u$ is resolving its $j_u$-th update, and the outcomes of its first $(j_u-1)$ updates are known to $v$;
\item  a tuple $\mathbf{Y} = \left(\Y{u}{j}\right)_{u \in N_v, 0\leq j \leq j_u - 1}$, such that each $\Y{u}{j}$ memorizes for $v$ the correct state of $u$ in the algorithm right after resolving its $j$-th update.
\end{itemize}
Given any such $\mathbf{j} = (j_u)_{u \in N_v}$ and $\mathbf{Y} = \left(\Y{u}{j}\right)_{u \in N_v, 0\leq j \leq j_u - 1}$,  for any neighbor $u\in N_v$ and any time $t<\ut{u}{j_u}$, due to~\eqref{eq-def-chain-hat-Y}, node $v$ can infer the precise value of $\widehat{Y}_t(u)$ as: $\widehat{Y}_t(u) = \Y{u}{k-1}$ where $k$ satisfies $\ut{u}{k -1} \leq t <\ut{u}{k}$.
%by~\eqref{eq-def-chain-hat-Y},  we have the following identity for each $u\in N_v$:
%\begin{align*}
%%\label{eq-identity}
%\forall 0\leq t < \ut{u}{j_u}: \quad \widehat{Y}_t(u) = \Y{u}{k-1} 
%{ \text{ where }\ut{u}{k -1} \leq t <\ut{u}{k}.}	
%\end{align*}
%The first equation is due to the continuity of the chain $(Y_t)_{t \in\mathbb{R}_{\ge 0}}$. 
%Therefore, node $v$ can infer the value of $\widehat{Y}_t(u)$ for any neighbor $u\in N_v$ at any time $t<\ut{u}{j_u}$.

For the time $t\ge\ut{u}{j_u}$, node $v$ can no longer always infer the exact value of $\widehat{Y}_t(u)$ for a neighbor $u\in N_v$ based on the partial information encoded by $\mathbf{j} = (j_u)_{u \in N_v}$ and $\mathbf{Y} = \left(\Y{u}{j}\right)_{u \in N_v, 0\leq j \leq j_u - 1}$. 
However, it can narrow down the possible values of $\widehat{Y}_t(u)$ to a subset of $[q]$.

\begin{definition}[\textbf{set of possible states}]
\label{definition-set-of-possible-values}
Fixed a node $v \in V$ and given its current $\mathbf{j} = (j_u)_{u \in N(v)}$ and $\mathbf{Y} = \left(\Y{u}{j}\right)_{u \in N_v, 0\leq j \leq j_u - 1}$,
%Let $\mathbf{j} = (j_u)_{u \in N(v)}$ be a vector. 
for any neighbor $u \in N_v$ and any time $0\leq t < T$,  the set of possible states	$\mathcal{S}_{t}(u) \subseteq [q]$ for node $u$ at time $t$ with respect to $\mathbf{j}$ and $\mathbf{Y}$ is defined as 
\begin{align}
\label{eq-def-Stu}
\mathcal{S}_{t}(u) \triangleq \begin{cases}
\vspace{3pt}\left\{\,\Y{u}{k-1}\,\right\} 
{ \text{ where }\ut{u}{k -1} \leq t <\ut{u}{k}}
&\text{if } 0 \leq t < \ut{u}{j_u}	\\
\left\{\Y{u}{j_u-1}\right\} \cup \left\{ \ppsl{u}{k} \mid \ut{u}{j_u} \leq \ut{u}{k} \leq t \right\}  &\text{if } \ut{u}{j_u} \leq t <T.
\end{cases}
\end{align}
\end{definition}
For Metropolis chains, it holds for each transition that $\Y{u}{j}\in\left\{\ppsl{u}{j},\Y{u}{j-1}\right\}$, i.e.~the state of $u$ is either changed to the proposal or unchanged.
It is then easy to verify that $\widehat{Y}_t(u) \in \mathcal{S}_t(u)$ always holds for any $u \in N_v$ any $0 \leq t < T$.
This guarantees the soundness of the definition.
Furthermore, node $v$ can locally compute sets $\mathcal{S}_t(u)$ for all $u \in N_v$ and  $0\leq t <T$ based on its current $\mathbf{j}$ and $\mathbf{Y}$.

%The set $\mathcal{S}_t(u)$ captures the partial information about $Y_t(u)$ for the neighbors $u\in N_v$. 
%
Next, we show how to use the partial information of $Y_t(u)$ captured by $\mathcal{S}_t(u)$ to resolve an update with filter $f_{c,c'}^v(Y_t(N_v))$ before $Y_t(N_v)$ is fully known.

%In subroutine $\Resolve(i, Y(v), \mathbf{j}, \mathcal{Y})$, node $v$ resolves the $i$-th update $(t(v, i),c(v,  i))$ as soon as $v$ gets {\em enough information}, where the information is  captured by the pair $(\mathbf{j}, \mathcal{Y})$. Specifically, based on the pair $(\mathbf{j}, \mathcal{Y})$, node $v$ computes the sets $\mathcal{S}_{t(v,i)}(u)$ for all neighbors $u \in N_v$. These sets determines all  possible $Y_{t(v, i)}(N_v)$, which is called the {\em partial information} of $Y_{t(v, i)}(N_v)$ is Section~\ref{section-advance-resolve}. Then node $v$ resolves the update by coupling all acceptance distributions specified by all possible $Y_{t(v, i)}(N_v)$.The  subroutine $\Resolve$ for proper graph coloring (denoted as $\ResolveColoring$) is given in Section~\ref{section-graph-coloring}  and the subroutine $\Resolve$ for general \MA{} is given in Section~\ref{section-general}.

\subsection{Example: proper graph coloring}
\label{section-coloring}

%\todo{setting}
We first give the subroutine $\Resolve$ on a special case: the uniform proper $q$-coloring of graph $G=(V,E)$. 
To be distinguished from the general case, we use $\ResolveColoring(i, \widehat{Y}(v), \mathbf{j}, \mathbf{Y})$ to denote the subroutine for this special case, 
described in Algorithm~\ref{ResolveColoring}.

\begin{algorithm}[h]
\SetKwInOut{Input}{input}
\SetKwIF{upon}{}{}{upon}{do}{}{}{}
\Input{index $i$ of the current update; current color $\widehat{Y}(v)$ of $v$; the tuples $\mathbf{j} = (j_u)_{u \in N_v}$ and $\mathbf{Y} = \left(\Y{u}{j}\right)_{u \in N_v, 0\leq j \leq j_u - 1}$ of the current steps and historical colors of $v$'s neighbors;}
construct $\mathcal{S}(u)=\mathcal{S}_{\ut{v}{i}}(u)$ as~\eqref{eq-def-Stu} for all $u\in N_v$\label{alg:coloring-line-1}\;
\upon{$\ppsl{v}{i} \notin \bigcup_{u \in N_v}\mathcal{S}(u)$\label{alg:coloring-condition-1} }{
%		$Y(v) \gets c(v,i)$\;
		send message ``$\MAC$'' to all neighbors $u\in N_v$\;
	    \Return{ $\left(\ppsl{v}{i}, \mathbf{j}, \mathbf{Y}\right)$ }\;
}
\upon{$\exists u \in N_v$ s.t. $\mathcal{S}(u) = \left\{\,\ppsl{v}{i}\,\right\}$\label{alg:coloring-condition-2}}{
		send message ``$\MRE$'' to all neighbors $u\in N_v$\;
		\Return{ $\left(\widehat{Y}(v), \mathbf{j}, \mathbf{Y}\right)$ }\;
}
\upon{receiving ``\MAC'' from a neighbor $u \in  N_v$}{
	 %extend $\mathbf{Y}$ as $\left(\Y{u}{j}\right)_{u \in N_v, 0\leq j \leq j_u}$ where $\Y{u}{j_u}=\ppsl{u}{j_u}$\;
	 %append $\mathbf{Y}$ with $\Y{u}{j_u}=\ppsl{u}{j_u}$\;
	 %$\mathbf{Y}\gets \mathbf{Y}\cup\left(\Y{u}{j_u}\right)$ where $\Y{u}{j_u}=\ppsl{u}{j_u}$\;
	 modify $\mathbf{Y}$ by setting $\Y{u}{j_u}=\ppsl{u}{j_u}$\;
	$j_u \gets j_u + 1$\;
	recompute $\mathcal{S}(u)=\mathcal{S}_{\ut{v}{i}}(u)$ as~\eqref{eq-def-Stu} \label{alg:coloring-update-1} w.r.t.~new $\mathbf{j}$ and $\mathbf{Y}$\;
}
\upon{receiving ``\MRE'' from a neighbor $u \in  N_v$}{
	%extend $\mathbf{Y}$ as $\left(\Y{u}{j}\right)_{u \in N_v, 0\leq j \leq j_u}$ where $\Y{u}{j_u}=\Y{u}{j_u-1}$\;
	%append $\mathbf{Y}$ with $\Y{u}{j_u}=\Y{u}{j_u-1}$\;
	 %$\mathcal{Y}(u, j_u) \gets \mathcal{Y}(u, j_u-1)$\;
	 %$\mathbf{Y}\gets \mathbf{Y}\cup\left(\Y{u}{j_u}\right)$ where $\Y{u}{j_u}=\Y{u}{j_u-1}$\;
	 modify $\mathbf{Y}$ by setting $\Y{u}{j_u}=\Y{u}{j_u-1}$\;
	$j_u \gets j_u + 1$\;
	%construct $\mathbf{Y}= \left(\Y{u}{j}\right)_{u \in N_v, 0\leq j \leq j_u - 1}$ by setting $\Y{u}{j_u-1}=\Y{u}{j_u-2}$\;
	recompute $\mathcal{S}(u)=\mathcal{S}_{\ut{v}{i}}(u)$ as~\eqref{eq-def-Stu} \label{alg:coloring-update-2} w.r.t.~new $\mathbf{j}$ and $\mathbf{Y}$\;
}
\caption{$\ResolveColoring(i,\widehat{Y}(v), \mathbf{j}, \mathbf{Y})$ at node $v$}\label{ResolveColoring}
\end{algorithm}

Algorithm~\ref{ResolveColoring} resolves the $i$-th update $\left(\ut{v}{i},\ppsl{v}{i}\right)$ for proper coloring.
Node $v$ maintains for each neighbor $u \in N_v$ a set $\mathcal{S}(u)=\mathcal{S}_{\ut{v}{i}}(u)$ of possible colors of $u$ at time $\ut{v}{i}$ based on the $\mathbf{j}$ and $\mathbf{Y}$ maintained by $v$. %Set $\mathcal{S}_{t(v,i)}(u)$ is constructed as in Definition~\ref{definition-set-of-possible-values}.
The algorithm is event-driven:
if more than one events occur simultaneously, they are responded in the order listed in the algorithm.
%
%Upon receiving messages ``\MAC{}'' or ``\MRE{}'' from its neighbors, $v$ accordingly renews $\mathbf{j}$ and $\mathcal{Y}$ which store the neighbors' current steps and historical values respectively, and then update the set $\mathcal{S}_{t(v,i)}(u)$ based on the new $(\mathbf{j}, \mathcal{Y})$.
%
Node $v$ is constantly monitoring the sets $\mathcal{S}(u)$ for all $u\in N_v$. 
The update $\left(\ut{v}{i},\ppsl{v}{i}\right)$ is resolved once one of the following two events occurs:
%
%Once one of the events in Line~\ref{alg:coloring-condition-1} and~\ref{alg:coloring-condition-2} occurs, node $v$ gets enough information to resolve the  update.

\begin{itemize}
\item $\ppsl{v}{i} \notin \bigcup_{u \in N_v}\mathcal{S}(u)$, in which case the proposed color $\ppsl{v}{i}$ is determined to not conflict with all possible colors of any neighbors at time $\ut{v}{i}$, thus $\ppsl{v}{i}$ must be accepted;
\item $\exists u \in N_v$ s.t. $\mathcal{S}(u) = \left\{\ppsl{v}{i}\right\}$, in which case the proposed color $\ppsl{v}{i}$ is determined to be blocked by neighbor $u$'s color at time $\ut{v}{i}$, thus $\ppsl{v}{i}$ must be rejected.
\end{itemize}
%Node $v$ always checks two conditions in Lines~\ref{alg:coloring-condition-1} and~\ref{alg:coloring-condition-2} once all sets $\mathcal{S}_{t(v,i)}(u)$ are constructed (Line~\ref{alg:coloring-line-1}) or some set $\mathcal{S}_{t(v,i)}(u)$ is updated (Line~\ref{alg:coloring-update-1} and~\ref{alg:coloring-update-2}).
These two events are mutually exclusive. 
Also the set $\mathcal{S}(u)$ of possible colors for each neighbor $u\in N_v$ is updated correctly once a message ``\MAC{}'' or ``\MRE{}'' is received  from $u$. 
Eventually, at least one of the above two events must occur so that the algorithm must terminate. 
%\mbox{This is} formally proved for the general $\Resolve$ algorithm where proper graph coloring is a special case.
%and one of them may occur once the algorithm executed Lines~\ref{alg:coloring-line-1}, \ref{alg:coloring-update-1} and~\ref{alg:coloring-update-2}.
%

\newcommand{\GeneralResult}{
\section{General Results (Revised)}
\label{section-main-result}
\todo{A paragraph say: the technique in graph coloring can be applied to general Metropolis chains. We have the following result about simulating the general Metropolis chains. The general algorithm is given in the next section.}
%We study distributed algorithms that {simulate} $(X_t)_{t\geq 0}$ on network $G=(V,E)$. 
%A distributed algorithm is said to \emph{simulate} a Metropolis chain $(X_t)_{t\geq 0}$ if initially, each node $v \in V$ receives as input its initial value $X_0(v)$, the number of nodes $n$, a threshold $T \in \mathbb{R}_{\geq 0}$, and also the descriptions of proposal distribution $\nu_v$ and Metropolis filter $(f^v_{c,c'})_{c,c'\in[q]}$ if they are not already given implicitly,
%and upon termination, each node $v \in V$ outputs a $Y(v) \in [q]$, such that altogether the random vector $Y$ is identically distributed as $X_N$ for some sufficiently large integer $N \geq Tn$.

%\subsection{A Lipschitz condition}
We define a Lipschitz condition for the Metropolis filters.
Let $f:[q]^d\to \mathbb{R}$ be a $d$-variate total function. Given any $1\le i\le d$,  $a,b\in[q]$, we define the operator $\delta_{i, a, b}$ on function $f$ as:
\begin{align*}
\delta_{i, a, b}\, f \triangleq \max_{\mathbf{x},\mathbf{y}} |f(\mathbf{x}) - f(\mathbf{y})|,
\end{align*}
where the  maximum is taken over all pairs of $(\mathbf{x},\mathbf{y})$ satisfying $x_i=a$, $y_i=b$ and $x_j=y_j$ for all $j\neq i$.

Consider a (discrete-time or continuous-time) Metropolis chain with proposal distributions $(\nu_v)_{v\in V}$ and Metropolis filters $(f_{c,c'}^v)_{v\in V,\, c,c'\in[q]}$. Recall that each Metropolis filter $f_{c,c'}^v$ is a function $f_{c,c'}^v: [q]^{N_v} \to [0,1]$.
We define the following Lipschitz-style condition.
%Let $(X_t)_{t\geq 0}$ be a single-site Metropolis chain with proposal distributions $(\nu_v)_{v\in V}$ and Metropolis filters $(f_{c,c'}^v)_{v\in V,\, c,c'\in[q]}$. Recall that each $f_{c,c'}^v$ is a function $f_{c,c'}^v: [q]^{N_v} \to [0,1]$.
%We define the following Lipschitz-style condition. %for Metropolis chains with proposal distributions $\nu_v$ and Metropolis filters $f_{c,c'}^v$.
\begin{condition}
\label{condition-Lipschitz}
There is a constant $C > 0$ such that for any $(u,v)\in E$, any $a,b,c\in [q]$, it holds that
\begin{align*}
\EE{c' \sim \nu_v}{\delta_{u, a, b}\,\F} \leq \frac{C}{\Delta},	
\end{align*}
where $\Delta=\Delta(G)$ denotes the maximum degree of $G$.
\end{condition}

%We show that any single-site Metropolis chain satisfying Condition~\ref{condition-Lipschitz} can be simulated efficiently (with a linear speedup) by a distributed algorithm, even against adversarial asynchronous schedulers.

We show that any continuous-time Metropolis chain satisfying Condition~\ref{condition-Lipschitz} can be simulated efficiently by a distributed algorithm, even against adversarial asynchronous schedulers.

%\begin{theorem}[main theorem]\label{thm:main-theorem}
%Assume that Condition~\ref{condition-Lipschitz} holds for the single-site Metropolis chain $(X_t)_{t\geq 0}$.
%% 
%There is a fully-asynchronous distributed algorithm that simulates $(X_t)_{t\geq 0}$ such that given any $T \in \mathbb{R}_{\geq 0}$, with high probability the algorithm outputs $X_N \in [q]^V$ for some $N \geq T n$ within $O(T + \log n)$ time units, with each message of size $O(\log n+\log q)$.
%
%%
%%Given an undirected graph $G=(V,E)$ with $n$ nodes and a finite domain $[q]$,
%%let $(X_t)_{t\geq 0}$ be the single-site Metropolis chain with state space $[q]^V$. 
%%If the proposal distributions and the Metropolis filters of $(X_t)_{t \geq 0}$ satisfy Condition~\ref{condition-Lipschitz}, then there is an algorithm $\mathcal{A}$ is both synchronous and  asynchronous message-passing models such that given any  $T \in \mathbb{R}_{\geq 0}$, the following holds with probability at least $1 - 1/n$: $\mathcal{A}$ simulates $(X_t)_{t \geq 0}$ and  outputs a random state $X_{k} \in [q]^V$ satisfying $k \geq T n$ \footnote{We must use $k \geq Tn$ rather than $k = \Theta (T n)$. When $T$ is small, say $T = \log n / n$,  we must simulate continuous-time chain to the time $T + \log n$, then $k \geq Tn$ occurs w.h.p.. But in this case $k = O(Tn)$ does not occur w.h.p. } with time complexity $O(T + \log n)$.
%\end{theorem}

%\todo{$N\sim\mathrm{Pois}(Tn)$?}

\begin{theorem}[general theorem]\label{thm:general-theorem}
Let $(Y_t)_{t \in\mathbb{R}_{\ge 0}}$ be  a continuous-time Metropolis chain.
There is a fully-asynchronous distributed algorithm that perfectly simulates $(Y_t)_{t \in\mathbb{R}_{\ge 0}}$. Moreover, if Condition~\ref{condition-Lipschitz} holds for $(Y_t)_{t \in\mathbb{R}_{\ge 0}}$, then given any $T\ge 0$, with high probability  the algorithm terminates and output $Y_T$ within $O(T + \log n)$ time units, with each message of size $O(\log n+\log q)$.
\end{theorem}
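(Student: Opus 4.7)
The plan is to analyze the algorithm already sketched in Section~\ref{sec:distributed-simulation}: Phase~I generates the rate-$1$ Poisson events $\left(\ut{v}{i}\right)_{i\le m_v}$ and proposals $\left(\ppsl{v}{i}\right)_{i\le m_v}$ locally at each $v$ and exchanges them with neighbors, and Phase~II iterates a $\Resolve$ subroutine that generalises Algorithm~\ref{ResolveColoring} from coloring to arbitrary Metropolis filters. The generalisation I propose is a \emph{monotone coupled coin}: at the start of the $i$-th iteration, $v$ draws (from its Phase~I randomness) a single uniform $\xi_v^{(i)}\in[0,1]$, and then, using $(\mathbf{j},\mathbf{Y})$ together with $c=\Y{v}{i-1}$ and $c'=\ppsl{v}{i}$, continuously maintains the interval
\[
\ell\;=\;\min_{\sigma}\F(\sigma),\qquad r\;=\;\max_{\sigma}\F(\sigma),
\]
where $\sigma$ ranges over $\prod_{u\in N_v}\ST{\ut{v}{i}}{u}$ defined in~\eqref{eq-def-Stu}. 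The update is resolved as $\MAC$ as soon as $\xi_v^{(i)}<\ell$ and as $\MRE$ as soon as $\xi_v^{(i)}\ge r$; otherwise $v$ waits for the next \MAC/\MRE message, which shrinks some $\ST{\ut{v}{i}}{u}$ and hence can only tighten $[\ell,r]$. Because the product structure factorises $\ell$ and $r$ into a coordinate-wise min/max, the interval is updated in time linear in the incoming message stream.

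For perfect simulation (independent of Condition~\ref{condition-Lipschitz}), I would couple $(\widehat{Y}_t)_{t\in[0,T]}$ from~\eqref{eq-def-chain-hat-Y} with $(Y_t)_{t\in[0,T]}$ from~\eqref{eq:eq-def-chain-Y}, feeding both processes identical Poisson times, proposals, and thresholds $\xi_v^{(i)}$. I would induct on the chronological order in which the algorithm resolves updates: assuming $\Y{u}{k}=Y_{\ut{u}{k}}(u)$ for every previously resolved $(u,k)$, the set $\ST{\ut{v}{i}}{u}$ provably contains $Y_{\ut{v}{i}}(u)$ for every $u\in N_v$, so $\ell\le\F(Y_{\ut{v}{i}}(N_v))\le r$; hence whenever $\xi_v^{(i)}$ escapes $[\ell,r]$ the commit agrees exactly with the Bernoulli outcome used to define $Y$. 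Since every neighbor message eventually arrives, $[\ell,r]$ collapses to $\{\F(Y_{\ut{v}{i}}(N_v))\}$, so the procedure terminates with $\Y{v}{m_v}=Y_T(v)$.

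For the running time, I would first note that Phase~I finishes within $O(T+\log n)$ time units w.h.p., because $\max_v m_v=O(T+\log n)$ by a Chernoff bound on i.i.d.~$\mathrm{Pois}(T)$ variables and one round of neighbor exchange suffices. For Phase~II I would formalise a \emph{dependency chain}: say $(v,i)$ is \emph{triggered} by $(u,k)$ if $u\in N_v$ and the message reporting $u$'s $k$-th outcome is the one whose arrival leaves $v$'s $i$-th update unresolved immediately afterward. Tracing triggers backwards from the last update to terminate produces an adjacent-vertex sequence $(v_0,i_0)\leftarrow(v_1,i_1)\leftarrow\cdots\leftarrow(v_L,i_L)$ at strictly decreasing wall-clock times, and the Phase~II delay is bounded by $L$ time units. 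The key estimate: conditioned on everything strictly preceding the trigger, the probability that $(v,i)$ is still unresolved after $(u,k)$'s message is at most the probability that $\xi_v^{(i)}$ separates $\F(\sigma)$ and $\F(\tau)$ for the two configurations $\sigma,\tau$ differing only at $u$ on $u$'s old vs.~new value. This probability equals $|\F(\sigma)-\F(\tau)|\le\delta_{u,a,b}\F$ integrated over $\xi_v^{(i)}$, and averaging over the independent draw $c'\sim\nu_v$ gives $\EE{c'\sim\nu_v}{\delta_{u,a,b}\,\F}\le C/\Delta$ by Condition~\ref{condition-Lipschitz}. A union bound over the at most $n\cdot\bigl(\Delta\,(T+\log n)\bigr)^L$ candidate length-$L$ chains then yields failure probability $n^{-\Omega(1)}$ at $L=\Theta(T+\log n)$, giving the claimed $O(T+\log n)$ bound, while the Phase~II messages carry only an update index and a bit, so the overall message size is $O(\log n+\log q)$.

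The main obstacle will be formalising the trigger relation so that (i) distinct dependency chains correspond to disjoint events underwriting a clean union bound, and (ii) the $C/\Delta$ factor is genuinely conditionally independent along the chain. I would handle (i) by using the FIFO property of the channels to canonically assign each unresolved update the unique predecessor whose message most recently failed to resolve it, and (ii) by exploiting that the event ``$\xi_v^{(i)}$ lies between two prescribed Metropolis values'' depends only on Phase~I randomness internal to $(v,i)$, which is independent of everything on the chain before it. The exponential decay in $\Delta$ from Condition~\ref{condition-Lipschitz} will then exactly cancel the $\Delta$ branching factor of the tree of possible chains, closing the argument.
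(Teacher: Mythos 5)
Your overall architecture coincides with the paper's: Phase~I/Phase~II, the coupled coin resolved against the two thresholds $\PAC=\min$ and $1-\PRE=\max$ of $\F$ over the product of possible-state sets~\eqref{eq-def-Stu}, correctness by a perfect coupling with induction along the order of updates, and a dependency-chain argument for the running time. The first genuine gap is in your per-link estimate. The event that the message reporting $(u,k)$ changes the resolution status of $(v,i)$ is that $\xi_v^{(i)}$ falls in the region swept by the two thresholds, of measure $(\PAC'-\PAC)+(\PRE'-\PRE)$, where the min and max in~\eqref{eq-definition-pac-pre} range over the \emph{whole product} $\bigotimes_{w\in N_v}\mathcal{S}(w)$; since the other neighbors' sets may still contain several values, this is not the discrepancy of $\F$ at two single configurations differing only at $u$. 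Each threshold shift can be bounded by $\delta_{u,a,b}\F$ only for some surviving value $b\in\mathcal{S}'(u)$ that may differ between the min and the max and, worse, may depend on $c'$; the naive average is therefore $\EE{c'\sim\nu_v}{\max_b\delta_{u,a,b}\F}$, which Condition~\ref{condition-Lipschitz} does not control (it bounds the expectation only for each \emph{fixed} pair $(a,b)$). The paper closes exactly this hole with the optimization problem $\mathfrak{P}(v,u)$, Lemma~\ref{lemma-opt-problem} and Claim~\ref{claim-optsol}, which show the worst case has $|S_2(u)|=1$ so a single pair $(a,b)$ works uniformly in $c'$, yielding the bound $2C/\Delta$ of Claim~\ref{claim-conditional-prob}; the factor $2$ is harmless, but the reduction is a real missing step in your plan.

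The second gap is that your union bound does not close as stated. With $n\bigl(\Delta(T+\log n)\bigr)^L$ candidate chains and at best $(C/\Delta)^L$ per chain from the filter coins, the product is $n\bigl(C(T+\log n)\bigr)^L$, which grows with $L$; moreover lazy links with $v_{j+1}=v_j$ contribute no $C/\Delta$ factor at all. The argument must also charge the chain for the scarcity of Poisson rings occurring in increasing time order along it. The paper does this by conditioning on the total number of rings $\mathcal{N}=m$, choosing the positions $p(1)<\cdots<p(\ell)$ of the chain's updates in the global time-ordered sequence, using that each position lands on the prescribed vertex with probability $1/n$, and summing the Poisson series, which produces $\bigl(\mathrm{e}T(1+2C)/\ell\bigr)^\ell$ and hence $\ell=\Theta(T+\log n)$; this conditioning structure ($\mathcal{F}_D,\mathcal{F}_1,\mathcal{F}_2$, with the adversarial message delays fixed first) is also what makes your informal independence claim along the chain rigorous. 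Finally, your trigger relation should link $(v,i)$ to the update whose message makes a resolution condition become true (not the one that ``most recently failed to resolve it''), since only then does each link cost at most one time unit and does the monotonicity in update times, needed for the ordering argument above, hold.
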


The notions of asynchronous  distributed algorithm and distributed simulation of continuous-time Metropolis chains used in above theorem are as defined in Section~\ref{section-problem}. 
The $O(\cdot)$ hides (linearly) the constant $C$ in Condition~\ref{condition-Lipschitz}.

Due to the well-known relation between discrete-time and continuous-time Metropolis chains, we have the following result on simulating single-site Metropolis chain.

\begin{corollary}\label{thm:general-corollary}
Assume that Condition~\ref{condition-Lipschitz} holds for the single-site Metropolis chain $(X_t)_{t\geq 0}$.
There is a fully-asynchronous distributed algorithm that simulates $(X_t)_{t\geq 0}$ such that given any threshold $M \in \mathbb{N}$, with high probability the algorithm outputs $X_N \in [q]^V$ for some $N \geq M$ within $O(M/n + \log n)$ time units, with each message of size $O(\log n+\log q)$.
\end{corollary}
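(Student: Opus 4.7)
\textbf{Proof plan for Corollary~\ref{thm:general-corollary}.}
The plan is to derive this from Theorem~\ref{thm:main-theorem} (equivalently from its continuous-time formulation, as emphasized in the discussion right after Theorem~\ref{thm:main-theorem}) by the standard \emph{Poissonization} coupling between the discrete-time chain $(X_t)_{t\ge 0}$ and the continuous-time chain $(Y_t)_{t\in\mathbb{R}_{\ge 0}}$. Given the input threshold $M \in \mathbb{N}$, I will set $T := M/n$ and invoke the main algorithm of Theorem~\ref{thm:main-theorem} with this $T$. By that theorem, the algorithm in fact simulates $(Y_t)_{t\in\mathbb{R}_{\ge 0}}$ up to time $T' := 2T + 8\ln n = 2M/n + 8\ln n$, and with high probability terminates within $O(T + \log n) = O(M/n + \log n)$ time units. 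The per-message bound from Theorem~\ref{thm:main-theorem} is $O(\log n + \log \lceil T'\rceil + \log q)$; under the implicit convention that $M$ is polynomially bounded in $n$ (without which no ``high probability'' statement makes sense) the $\log\lceil T'\rceil$ term collapses into $\log n$, yielding the claimed $O(\log n + \log q)$.

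Next comes the coupling step, which is the standard equivalence between the discrete and continuous single-site dynamics: starting from a common initial state $Y_0 = X_0$, if $N = \sum_{v \in V} m_v$ counts the total number of Poisson-clock rings of all nodes in $[0, T']$, then $N \sim \mathrm{Pois}(nT')$ and, conditioned on $N$, the order of the rings is uniform; reordering along this uniform permutation matches the step rule of Algorithm~\ref{Alg:Metropolis-abstract} exactly. Hence $Y_{T'}$ and $X_N$ are identically distributed as random variables on $[q]^V$, and emitting the algorithm's output $Y_{T'}$ is equivalent to emitting $X_N$ for this random $N$, satisfying the output format required by the corollary.

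Finally I need $\Pr[N \ge M]$ to be $1 - n^{-\Omega(1)}$. Since $\mathbb{E}[N] = nT' = 2M + 8n\ln n$, I write $M = (1-\delta)\cdot nT'$ where $\delta = (M + 8n\ln n)/(2M + 8n\ln n) \ge 1/2$. Applying the standard lower-tail Chernoff bound for Poisson random variables,
\[
\Pr[N < M] \;\le\; \exp\!\left(-\frac{\delta^2\,nT'}{2}\right) \;\le\; \exp\!\left(-\frac{nT'}{8}\right) \;\le\; \exp(-n\ln n),
\]
which is $o(n^{-c})$ for every constant $c$. Combined with the high-probability termination time from Theorem~\ref{thm:main-theorem}, the union bound gives both conclusions of the corollary. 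I do not expect a real obstacle: the entire argument is Poissonization plus one tail bound on top of Theorem~\ref{thm:main-theorem}; the only bookkeeping point is calibrating the slack $T' - T = T + 8\ln n$ so that simultaneously (i)~$T' = O(T + \log n)$ preserves the runtime bound, and (ii)~the Poisson lower tail beats any inverse polynomial in $n$, which is exactly what the constants $2$ and $8\ln n$ were chosen for in Theorem~\ref{thm:main-theorem}.
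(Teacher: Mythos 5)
Your proposal is correct and follows essentially the same route as the paper: invoke the continuous-time simulation up to $T'=2T+8\ln n$ with $T=M/n$, use the standard Poissonization identity $Y_{T'}\overset{d}{=}X_N$ for $N\sim\mathrm{Pois}(nT')$, and apply the Poisson lower-tail bound to get $\Pr[N\ge M]\ge 1-n^{-\Omega(1)}$. Your extra remark reconciling the $O(\log n+\log q)$ message bound with the $\log\lceil T\rceil$ term in the main theorem is a reasonable (and correct) piece of bookkeeping that the paper glosses over.
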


To simulate $(X_t)_{t \geq 0}$ for at least $M$ steps, we use the algorithm in Theorem~\ref{thm:general-theorem} to simulate its continuous version $(Y_t)_{t\in \mathbb{R}_{\geq 0}}$ with $T = M / n + O(\log n)$. 
The corollary holds due to the relation in~\eqref{eq-con-disc-chain} and the concentration property of Poisson distribution. 

Hence, our algorithm simulates $\Omega(n\log n)$-step sequential Metropolis chains distributedly with linear speedup.
On various well-studied graphical models, this gives efficient distributed simulations of respective Metropolis chains under following \mbox{conditions}:
\begin{itemize}
\item \textbf{Proper $q$-coloring:} The natural Metropolis chain for proper $q$-coloring is as follows.
For each $v\in V$, $\nu_v$ is the uniform over $[q]$ and 
\begin{align*}
\forall c,c'\in[q], \tau\in[q]^{N_v}:\quad f_{c,c'}^v(\tau)=\prod_{u\in N_v}{I}[\tau_u\neq c'].
\end{align*}
 %\label{eq:filter-coloring}
%\end{align}
Over proper $q$-colorings, this chain behaves identically as the chain described in~\eqref{eq-def-accept-prob}.
Condition~\ref{condition-Lipschitz} applied on this chain translates to: there is a constant $\alpha>0$ such~that
\begin{align*}
q\ge\alpha\Delta. %\text{ for an arbitrary constant }\alpha>0.
\end{align*}
In contrast, the uniqueness condition is $q\ge \Delta+1$~\cite{jonasson2002uniqueness,galanis2015inapproximability}.

%indicates whether the proposed color $c'$ does not conflict with the current coloring of neighbors $X_t(N_v)$.
\item \textbf{Hardcore model:} 
The distribution $\mu$ is over all configurations in $\{0,1\}^V$ that corresponds to independent sets of $G$. For each configuration $\sigma\in\{0,1\}^V$ that indicate an independent set of $G$, $\mu(\sigma)\propto\lambda^{\sum_{v\in V}\sigma(v)}$, where $\lambda\ge 0$ is the \concept{fugacity}. The natural  Metropolis chain for this model is:
For each $v\in V$, $\nu_v$ is a distribution over $\{0,1\}$ with $\nu_v(0)=\frac{1}{1+\lambda}$ and $\nu_v(1)=\frac{\lambda}{1+\lambda}$, and
\begin{align*}
\forall c,c'\in\{0,1\}, \tau\in\{0,1\}^{N_v}:\quad 
f_{c,c'}^v(\tau)=\prod_{u\in N_v}I[\tau_u+c'\le 1]. %\label{eq:filter-hardcore}
\end{align*}
Condition~\ref{condition-Lipschitz} applied on this chain translates to: there is a constant $C>0$ such~that
\[
\lambda<\frac{C}{\Delta}, %\text{ for an arbitrary constant }\alpha>0.
\] 
while the uniqueness condition is $\lambda<\frac{(\Delta-1)^{\Delta-1}}{(\Delta-2)^{\Delta}}\approx{\frac{\mathrm{e}}{\Delta}}$~\cite{weitz2006counting,sly2010computational}.

%$\nu_v(0)=\frac{1}{1+\lambda}$ and $\nu_v(1)=\frac{\lambda}{1+\lambda}$, and $f_{c,c'}^v(X_t(N_v))=\prod_{u\in N_v}I[X_t(u)+c'\le 1]$ indicates whether the proposed state $c'\in\{0,1\}$ of $v$ along with the current configuration of the neighborhood $X_t(N_v)$ still constitutes an independent set.
\item \textbf{Ising model:}
The distribution $\mu$ is over all configurations in $\{-1,+1\}^V$, such that for each  $\sigma\in\{0,1\}^V$, $\mu(\sigma)\propto\exp\left(\beta\sum_{(u,v)\in E}\sigma_u\sigma_v\right)$, where $\beta\in\mathbb{R}$ is the temperature.
The natural  Metropolis chain for this model is: For each $v\in V$, $\nu_v$ is uniform over $\{-1,1\}$,~and 
\begin{align*}
\forall c,c'\in\{-1,1\}, \tau\in\{-1,1\}^{N_v}:\quad 
f_{c,c'}^v(\tau)=\exp\left(\min\left\{0,\beta (c'-c)\sum_{u\in N_v}\tau_u\right\}\right). %
\end{align*}
Condition~\ref{condition-Lipschitz} applied on this chain translates to: there is a constant $C>0$ such~that
\[
1-\mathrm{e}^{-2|\beta|}<\frac{C}{\Delta},
\]
while the uniqueness condition is  $1-\mathrm{e}^{-2|\beta|}<\frac{2}{\Delta}$~\cite{sinclair2014approximation,sly2014counting}.
\end{itemize}
For above graphical models, Condition~\ref{condition-Lipschitz} is much weaker than the uniqueness (mixing) conditions, because our goal is to simulate the sequential chain regardless of its mixing. 
}

\subsection{General Metropolis chain}
\label{section-general}

We then give the subroutine $\Resolve (i, \widehat{Y}(v), \mathbf{j}, \mathbf{Y})$ for resolving the $i$-th update $\left(\ut{v}{i},\ppsl{v}{i}\right)$ at node $v$ in a general Metropolis chain. 
Throughout, we denote the current and proposed states of $v$ respectively as:
\[
c\triangleq \widehat{Y}(v)\text{ and }c'\triangleq \ppsl{v}{i}.
\]
The algorithm then simulates a coin flipping with bias $f_{c,c'}^v\left(\widehat{Y}_\ut{v}{i}(N_v)\right)$, to determine whether the proposal $\ppsl{v}{i}$ is accepted,
%
%Here both the current value $c=Y(v)=Y_{t(v,i-1)}(v)$ and the proposed value $c'=c(v,i)$ of $v$ are known. 
%
while the configuration $\widehat{Y}_\ut{v}{i}(N_v)$ is only partially known to node $v$.

In particular, given the current $\mathbf{j}$ and $\mathbf{Y}$ (as inputs to the subroutine $\Resolve$), the set $\mathcal{S}_{\ut{v}{i}}(u)$ of possible states of each neighbor $u\in N_v$ at the time $\ut{v}{i}$, can be constructed as Definition~\ref{definition-set-of-possible-values}.
We further define the set of all possible configurations on the neighborhood $N_v$ as
\begin{align}
\label{eq-def-CC}
\CC{v}{i} \triangleq \bigotimes_{u \in N_v}\mathcal{S}_{\ut{v}{i}}(u).
\end{align}
Note that $\CC{v}{i}$ must contain the correct configuration $\widehat{Y}_{\ut{v}{i}}(N_v)$ because $\widehat{Y}_{\ut{v}{i}}(u)\in \mathcal{S}_{\ut{v}{i}}(u)$ for every $u\in N_v$ due to the soundness of Definition~\ref{definition-set-of-possible-values}.
Besides, $\CC{v}{i}$ may contain various other candidate configurations $\tau\in[q]^V$, each corresponding to a coin with bias $f_{c,c'}^v(\tau)$.
The subroutine resolves the update in advance and correctly by coupling all these coins.

Specifically, for each $\tau \in \CC{v}{i}$, we define the indicator random variable $I_{\mathsf{AC}}(\tau) \in \{0, 1\}$   as 
%which indicates whether the proposal $c(v, i)$ is accepted conditioning on $Y_{t(v,i)}(N_v) = \tau$, 
%\begin{align*}
$\Pr[\,I_{\mathsf{AC}}(\tau) = 1\,] = f^v_{c,c'}(\tau)$. %\qquad \text{where }c\triangleq Y_{t(v,i-1)}(v)\text{ and }c'\triangleq c(v,i).
%\end{align*}
%The coins $I_{\mathsf{AC}}(\tau)$ for all $\tau \in \mathcal{C}_{t(v,i)}$ can be coupled as follows.
The coins $I_{\mathsf{AC}}(\tau)$ for all $\tau \in \CC{v}{i}$ are coupled as follows: %supposed that $\beta \in [0,1)$ is uniformly distributed,
\begin{align*}
I_{\mathsf{AC}}(\tau) = \begin{cases}
 1 &\text{if } \beta<f^v_{c,c'}(\tau)\\
 0 &\text{if } \beta \geq f^v_{c,c'}(\tau)
 \end{cases},
 \quad\,\beta \text{ is uniformly distributed over } [0,1).
\end{align*}
Since the true $\widehat{Y}_{\ut{v}{i}}(N_v) \in \CC{v}{i}$, the update  can be resolved once all  the indicator random variables $I_{\mathsf{AC}}(\tau)$ for $\tau \in \mathcal{C}_{t(v,i)}$ are perfectly coupled, i.e.~$\forall \tau_1, \tau_2 \in \CC{v}{i}:I_{\mathsf{AC}}(\tau_1) = I_{\mathsf{AC}}(\tau_2) $.

To check whether all the indicator random variables are perfectly coupled,
we define the minimum acceptance probability $\PAC$ and  the minimum rejection probability $\PRE$ as:
\begin{equation}
\label{eq-definition-pac-pre}
%\left.
\begin{split}
\PAC &\triangleq \min_{\tau \in \CC{v}{i}}f^v_{c,c'}(\tau);\\
\PRE &\triangleq \min_{\tau \in \CC{v}{i}}\left(1-f^v_{c,c'}(\tau)\right) = 1 - \max_{\tau \in \CC{v}{i}}f^v_{c,c'}(\tau).
\end{split}
%\right\},
%\quad \text{where }c=\widehat{Y}(v)\text{ and }c'=\ppsl{v}{i}.
\end{equation}
%where $c=\widehat{Y}(v)$ and $c'=\ppsl{v}{i}$.
%Again, $c\triangleq Y_{t(v,i-1)}(v)$ is the current value and $c'\triangleq c(v,i)$ is the proposed value for the $i$-th update at node $v$.
Then all coins $I_{\mathsf{AC}}(\tau)$ for $\tau \in \CC{v}{i}$ are perfectly coupled if $\beta < \PAC$ or $\beta \geq 1 - \PRE$. The former case corresponds to the event that all $I_{\mathsf{AC}}(\tau) = 1$, while the latter corresponds to that all $I_{\mathsf{AC}}(\tau) = 0$.

%\todo{Feng: the definition of $f^v_{c,c'}(\tau)$ defined for all possible $\tau$. Does the word possible means the marginal distribution $\mu_v(\cdot\mid \tau)$ is well defined? The set $\mathcal{C}_{t(v,i)}$ may contain some impossible $\tau$. We should extend the  definition of $f^v_{c,c'}(\tau)$ to impossible $\tau$. all  $\mu_v(c\mid \tau) = 0$ if $\tau$ is impossible. Our proof needs  $\mathcal{C}_{t(v,i)}$ is a product of all $\mathcal{S}_{t(v,i)}(u)$. We should extend the definition of $f^v_{c,c'}(\tau)$ rather than changing the definition of $\mathcal{C}_{t(v, i)}$.}

\begin{algorithm}[h]
\SetKwInOut{Input}{input}
\SetKwIF{upon}{}{}{upon}{do}{}{}{}
\Input{index $i$ of the current update; current value $\widehat{Y}(v)$ of $v$; the tuples $\mathbf{j} = (j_u)_{u \in N_v}$ and $\mathbf{Y} = \left(\Y{u}{j}\right)_{u \in N_v, 0\leq j \leq j_u - 1}$ of the current steps and historical values of $v$'s neighbors;}
sample $\beta \in [0,1)$ uniformly at random\label{alg:general-sample}\; 
%construct $\mathcal{S}_{t(v,i)}(u)$ as~\eqref{eq-def-Stu} for all $u\in N_v$\label{alg:general-line-1}\;
compute $\PAC$ and $\PRE$ as~\eqref{eq-definition-pac-pre}\label{alg:general-pac-pre}, where $\mathcal{S}_{\ut{v}{i}}(u)$ for each $u\in N_v$ is constructed as~\eqref{eq-def-Stu}\;
\upon{$\beta < \PAC$\label{alg:general-condition-1} }{
%		$Y(v) \gets c(v,i)$\;
		send message ``$\MAC$'' to all neighbors $u\in N_v$\;
	    \Return{ $\left(\ppsl{v}{i}, \mathbf{j}, \mathbf{Y}\right)$ }\;
}
\upon{$\beta \geq 1 - \PRE$\label{alg:general-condition-2}}{
		send message ``$\MRE$'' to all neighbors $u\in N_v$\;
		\Return{ $\left(\widehat{Y}(v), \mathbf{j}, \mathbf{Y}\right)$ }\;
}
\upon{receiving ``\MAC'' from a neighbor $u \in  N_v$\label{alg:general-receive-ac}}{
	 modify $\mathbf{Y}$ by setting $\Y{u}{j_u} = \ppsl{u}{j_u}$\;
	$j_u \gets j_u + 1$\;
	recompute $\PAC$ and $\PRE$ as~\eqref{eq-definition-pac-pre}\label{alg:general-pac-pre-1} with $\mathcal{S}_{\ut{v}{i}}(u)$ reconstructed as~\eqref{eq-def-Stu}\label{alg:general-update-s-1} w.r.t.~new $\mathbf{j}$ and $\mathbf{Y}$\;
}
\upon{receiving ``\MRE'' from a neighbor $u \in  N_v$\label{alg:general-receive-re}}{
	 modify $\mathbf{Y}$ by setting $\Y{u}{j_u} = \Y{u}{j_u-1}$\;
	$j_u \gets j_u + 1$\;
	%recompute $\mathcal{S}_{t(v,i)}(u)$ as~\eqref{eq-def-Stu}\label{alg:general-update-s-2}\;
	recompute $\PAC$ and $\PRE$ as~\eqref{eq-definition-pac-pre}\label{alg:general-pac-pre-2} with $\mathcal{S}_{\ut{v}{i}}(u)$ reconstructed as~\eqref{eq-def-Stu}\label{alg:general-update-s-2} w.r.t.~new $\mathbf{j}$ and $\mathbf{Y}$\;
}
\caption{$\Resolve(i,\widehat{Y}(v), \mathbf{j}, \mathbf{Y})$ at node $v$}\label{Resolve}
\end{algorithm}

The procedure for $\Resolve(i, \widehat{Y}(v), \mathbf{j}, \mathbf{Y})$ at node $v$ is described in Algorithm~\ref{Resolve}.
In the algorithm, a random number $\beta\in[0,1)$ is sampled only once (hence the coupling) in the beginning and used during the entire execution of the subroutine $\Resolve(i, \widehat{Y}(v), \mathbf{j}, \mathbf{Y})$ for resolving the $i$-th update at node $v$.
The algorithm is event-driven: if more than one events occur simultaneously, they are responded in the order listed in the algorithm.
The $i$-th proposed  update $\left(\ut{v}{i},\ppsl{v}{i}\right)$ at node $v$ is accepted once $\beta<\PAC$ and is rejected once $\beta\ge 1-\PRE$.
These two events are mutually exclusive because $\PAC + \PRE \leq 1$.
The two thresholds $\PAC,\PRE\in[0,1]$ are updated dynamically once $\mathbf{j}$ and $\mathbf{Y}$ updated correctly upon a message ``\MAC{}'' or ``\MRE{}''  received  from a neighbor $u\in N_v$.
Eventually at least one of the two events $\beta<\PAC$ and $\beta\ge 1-\PRE$ must occur when $|\CC{v}{i}|=1$, i.e.~when the correct configuration $\widehat{Y}_{\ut{v}{i}}(N_v)$ is fully known to $v$, because in this case $\PAC+\PRE=1$.
%
%Node $v$ is also constantly monitoring these thresholds.

\begin{remark}[\textbf{proper coloring as special case}]
The subroutine \ResolveColoring{} (Algorithm~\ref{ResolveColoring}) is indeed a special case of the subroutine \Resolve{} (Algorithm~\ref{Resolve}) by setting  
\[
\forall \tau\in[q]^{N_v}:\quad f^v_{c,c'}(\tau) = \prod_{u \in N(v)}I[\tau_u \neq c'].
\] 
It is then easy to see that $\PAC\in\{0,1\}$ indicates the event $c' \notin \bigcup_{u \in N_v}\mathcal{S}_{\ut{v}{i}}(u)$ and $\PRE\in\{0,1\}$ indicates the event $\exists u \in N_v$ s.t. $\mathcal{S}_{\ut{v}{i}}(u) = \{c'\}$.
%Then, conditions in Lines~\ref{alg:coloring-condition-1} and~\ref{alg:coloring-condition-2} of Algorithm~\ref{ResolveColoring} are equivalent to conditions in Lines~\ref{alg:general-condition-1} and~\ref{alg:general-condition-2} of Algorithm~\ref{Resolve} respectively.
\end{remark}

\begin{remark}[\textbf{out-of-order resolution}]
The algorithm at each node $v$ resolves the updates in the order of the update times $0<\ut{v}{1}<\ut{v}{2}<\cdots<\ut{v}{m_v}<T$.
Alternatively, the algorithm can resolve any update once one of the conditions for that update in Line~\ref{alg:general-condition-1} and Line~\ref{alg:general-condition-2} in  Algorithm~\ref{Resolve} is triggered, disregarding the order of the update time.
To implement such \concept{out-of-order resolution} of updates, we need to slightly modify the construction of data structures $\mathbf{j}$ and $\mathbf{Y}$ at node $v$ to record respectively the indices and results for the resolved (out-of-order) updates of the neighbors $u\in N_v$.
The definition of the set $\mathcal{S}_t(u)$ of possible states for each neighbor $u\in N_v$ (Definition~\ref{definition-set-of-possible-values}) is changed accordingly to that supposed that $\ut{u}{j} \leq t <\ut{u}{j+1}$, $\mathcal{S}_{t}(u)\triangleq\{\,\Y{u}{j}\,\}$ if the $j$-th update of $u$ is resolved and its result is known to $v$; and if otherwise, $\mathcal{S}_{t}(u)\triangleq\{\,\Y{u}{j_0}\,\}\cup\left\{ \ppsl{u}{k} \mid j_0\le k\le j \right\}$, where $j_0$ is the index for the latest update of $u$ before the $j$-th update which is resolved and the result is known to $v$.
By coupling and a monotone argument, it is easy to verify that such implementation of the algorithm with out-of-order resolution of updates is at least as fast as the original algorithm with in-order resolution.

\end{remark}

%\todo{Computation costs for $\PAC$ and $PRE$: They are easy to compute for typical graphical models: coloring, hardcore, Ising. Explicitly describe how to compute $\PAC$ and $PRE$ for these models. Point out that $\mathcal{S}_{t(v,i)}(u)$ are easy to compute.}

%\begin{remark}[compare with graph coloring]
%	
%\end{remark}
%
\paragraph{Computation costs.}
The cost for local computation is dominated by the cost for computing the two thresholds $\PAC$ and $\PRE$, which are easy to compute for graphical models defined by edge factors, e.g.~Markov random fields, including all specific models mentioned in Section~\ref{sec:our-result}. 
For such graphical models, the Metropolis filter $f^v_{c,c'}(\tau)$ can be written as:
\begin{align*}
\forall \tau \in [q]^{N(v)}:\quad f^v_{c,c'}(\tau) = \min \Bigg\{1, \prod_{u \in N(v)}f^{v,u}_{c,c'}(\tau_u) \Bigg\}, \text{ where }f^{v,u}_{c,c'}:[q]\to\mathbb{R}_{\ge 0}.
\end{align*}
For this broad class of Metropolis filters, the thresholds $\PAC$ and $\PRE$ can be computed by the closed-forms:
\begin{align*}
\PAC &= \min \Bigg\{1, \prod_{u \in N_v} \left( \min_{b \in \mathcal{S}(u)} f^{v,u}_{c, c'}(b) \right) \Bigg\},\\
\PRE &=  1 - \min \Bigg\{1, \prod_{u \in N_v} \left( \max_{b \in \mathcal{S}(u)} f^{v,u}_{c, c'}(b) \right) \Bigg\},
\end{align*}
where $c=\widehat{Y}(v)$ and $c'=\ppsl{v}{i}$ as before and for each neighbor $u\in N_v$, the set $\mathcal{S}(u)=\mathcal{S}_{\ut{v}{i}}(u)$ is easy to compute locally based on the current $\mathbf{j}$ and $\mathbf{Y}$ as in~\eqref{eq-def-Stu}.

\section{Analysis of the Algorithm}
\label{sec:correctness}
We now analyze the \textbf{main algorithm} described in last section.

Recall the original continuous-time chain $(Y_t)_{t\in[0,T]}$ as defined in~\eqref{eq:eq-def-chain-Y}, and the continuous-time chain $(\widehat{Y}_t)_{t\in[0,T]}$ as defined in~\eqref{eq-def-chain-hat-Y}, generated by the algorithm.
%
%We show that the distributed algorithm simulates this chain.
%

\begin{theorem}[main theorem, revisited]\label{thm:main-theorem-Y}
Let $G=(V,E)$ be the network and $n=|V|$.
%
%Assume that Condition~\ref{condition-Lipschitz} holds. 
%
The followings hold for the main algorithm.
\begin{enumerate}
\item (correctness)
Upon termination, the algorithm outputs a random $\widehat{Y}_T\in[q]^V$, with each node $v\in V$ outputting $\widehat{Y}_T(v)$, such that $\widehat{Y}_T$ is identically distributed as $Y_T$.
\item (time complexity)
With high probability the algorithm terminates within $O(\Delta T+\log n)$ time units, where $\Delta$ denotes the maximum degree of $G$.

If Condition~\ref{condition-Lipschitz} holds, then
with high probability the algorithm terminates within $O(T+\log n)$ time units, where $O(\cdot)$ hides (linearly) the constant $C$ in Condition~\ref{condition-Lipschitz}.
%and outputs the random vector $X_N$ for some $N\ge nT$, with each node $v\in V$ outputting its value $X_N(v)$,  where $X_N$ is the $N$-th step  of the single-site chain $(X_t)_{t \geq 0}$ defined in Algorithm~\ref{Alg:Metropolis-abstract};
\item (message complexity)
Each message contains $O(\log n+\log \lceil T\rceil+\log q)$ bits.
\end{enumerate}
%The constant factor in $O(\cdot)$ depends (linearly) only on the constant $C$ in Condition~\ref{condition-Lipschitz}.
\end{theorem}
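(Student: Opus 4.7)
The plan is to prove the three claims separately: correctness via a direct coupling between the algorithm and the continuous-time chain, the two time bounds via a \emph{dependency chain} analysis sharpened under Condition~\ref{condition-Lipschitz}, and the message complexity by inspection.

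For correctness, I would couple $(Y_t)_{t\in[0,T]}$ and $(\widehat{Y}_t)_{t\in[0,T]}$ on a common probability space, with shared Poisson update times $\ut{v}{i}$, shared proposals $\ppsl{v}{i}$, and for each update $(v,i)$ the same uniform random variable $\beta\in[0,1)$ drawn in Line~\ref{alg:general-sample} of Algorithm~\ref{Resolve}. Induction on the globally time-ordered sequence of updates then shows $\widehat{Y}_t\equiv Y_t$: the base case is $\widehat{Y}_0=Y_0$, and at update $(v,i)$ the inductive hypothesis gives $Y_{\ut{v}{i}}(N_v)\in\CC{v}{i}$ by Definition~\ref{definition-set-of-possible-values}, so $\PAC\leq f^v_{c,c'}(Y_{\ut{v}{i}}(N_v))\leq 1-\PRE$. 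Since \Resolve{} returns $\MAC$ exactly when $\beta<\PAC$ and $\MRE$ exactly when $\beta\geq 1-\PRE$, and one of these must fire at the latest when $\CC{v}{i}$ collapses to $\{Y_{\ut{v}{i}}(N_v)\}$, the algorithm's decision always agrees with the continuous-time rule~\eqref{eq:eq-def-chain-Y}.

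For the time bounds, Phase I finishes within $O(T+\log n)$ time units with high probability from concentration of $m_v\sim\mathrm{Pois}(T)$, so only Phase II needs bounding. Define a \emph{dependency chain} as a sequence $(v_0,i_0),(v_1,i_1),\ldots,(v_k,i_k)$ with $v_l\in N_{v_{l-1}}$ and strictly decreasing times $\ut{v_0}{i_0}>\cdots>\ut{v_k}{i_k}$, such that the resolution of $(v_{l-1},i_{l-1})$ in the algorithm was triggered by the \MAC/\MRE{} message sent at the resolution of $(v_l,i_l)$. Since local processing is instantaneous and each triggering message costs at most one time unit, Phase II time is bounded by the longest such chain. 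Unconditionally, a union bound over $\leq\Delta^k$ adjacent walks of length $k$ combined with the $\leq(\mathrm{e}T/k)^k$ probability of $k$ i.i.d.\ Poisson arrivals fitting in $[0,T]$ in decreasing order gives longest chain $O(\Delta T+\log n)$ with high probability.

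The sharpened $O(T+\log n)$ bound under Condition~\ref{condition-Lipschitz} is the main obstacle. The key per-link observation is that, for the chain to extend one step, the incoming message from $(v_l,i_l)$ must actually shift the live window $[\PAC,1-\PRE)$ inside the subroutine resolving $(v_{l-1},i_{l-1})$ so that it crosses the coupling coin $\beta$; taking expectation over $\ppsl{v_{l-1}}{i_{l-1}}\sim\nu_{v_{l-1}}$ and over $\beta$, this probability is at most $\EE{c'\sim\nu_{v_{l-1}}}{\delta_{v_l,a,b}\F}\leq C/\Delta$ by Condition~\ref{condition-Lipschitz}. Replacing the $\Delta^k$ walk count by an effective $C^k$ factor turns the union bound into a subcritical tail $\leq(\mathrm{e}CT/k)^k$, so the longest chain is $O(T+\log n)$ whp. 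The delicate part will be justifying independence of these per-link factors through the coupling, which I plan to handle by exposing randomness progressively along the chain in decreasing time order and showing that each additional link contributes an independent multiplicative factor of at most $C/\Delta$; the full accounting is carried out in Section~\ref{sec:running-time}. The message bound is immediate by inspection: each Phase I transmission from $v$ packages $Y_0(v)$, an adaptively truncated $\ut{v}{i}$ of $O(\log n+\log\lceil T\rceil)$ bits (sufficient to preserve relative order among neighbors), and a $\log q$-bit proposal, while each Phase II message is a single \MAC/\MRE{} token plus an $O(\log\lceil nT\rceil)$-bit update index, so every atomic message fits in $O(\log n+\log\lceil T\rceil+\log q)$ bits.
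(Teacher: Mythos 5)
Your correctness coupling (shared update times, proposals, and the coin $\beta$ of each update, followed by induction along the time order of updates), your unconditional $O(\Delta T+\log n)$ bound, and your message-size accounting all match the paper's argument. The genuine gap is in the refined $O(T+\log n)$ bound, exactly at the step you flag as ``delicate'' and defer: upgrading the per-link heuristic to a conditional probability bound that survives both the coupling and the adaptive adversarial scheduler. The paper's proof of this step (Claim~\ref{claim-conditional-prob}) fixes the entire message-delay schedule $\mathcal{F}_D$ together with all Poisson-clock randomness $\mathcal{F}_1$ and all proposals/coins of updates occurring before the current link's chronologically later update ($\mathcal{F}_2$), and shows—by running a truncated version of the algorithm—that this conditioning already determines whether all earlier links of the chain occurred and determines the two possible-state families $(\mathcal{S}(w))_w$ and $(\mathcal{S}'(w))_w$ before and after the triggering message is processed; only then is the fresh randomness $(c',\beta)$ of the later update invoked, giving $\Pr[\mathcal{A}_2^{(j)}\mid\mathcal{F}_D\land\mathcal{F}_1\land\mathcal{F}_2]\le\EE{c'\sim\nu_{v_j}}{(\PAC'-\PAC)+(\PRE'-\PRE)}\le 2C/\Delta$ via the extremal analysis of Lemma~\ref{lemma-opt-problem}. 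Your plan to ``expose randomness progressively along the chain in decreasing time order'' points the wrong way: the freshness needed at each link is that of the chronologically \emph{later} update's proposal and coin conditioned on everything earlier (including the scheduler's delays, which your sketch never conditions on), so the exposure must follow increasing update times; exposed in your order, the event that a link occurred is not determined by what you have revealed, and no product bound follows. Citing Section~\ref{sec:running-time} for ``the full accounting'' is circular, since that accounting is the substance of the claim.

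Two further inaccuracies, one cosmetic and one structural. The per-link constant should be $2C/\Delta$, not $C/\Delta$: deleting one value $a$ from $\mathcal{S}(v_{j-1})$ can move \emph{both} endpoints of the live window, and each movement is bounded pointwise in $c'$ by $\delta_{v_{j-1},a,b}f^{v_j}_{c,c'}$ for the fixed removed value $a$ and any fixed surviving value $b$, so Condition~\ref{condition-Lipschitz} contributes $C/\Delta$ per endpoint; this only affects constants. More substantively, your dependency chain consists solely of message-triggered links between adjacent nodes, whereas the algorithm also resolves updates ``self-triggered'' right after the same node's previous update; the paper keeps these steps in the chain and does the bookkeeping through $\mathcal{P}(\ell,s)$, counting $\binom{\ell-1}{s}\Delta^{s}$ node sequences and attaching the $2C/\Delta$ factor only to the $s$ adjacent links, with the Poisson factor taken over all $\ell$ updates. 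If you instead collapse self-triggered steps, each link's probability factor must be re-attributed to the update whose coin actually moved (possibly an intermediate update of the same node), and the union bound restated accordingly; as sketched, the claimed $(\mathrm{e}CT/k)^k$ tail is not yet justified.
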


The main theorem (Theorem~\ref{thm:main-theorem}) is consequence to above theorem and the following well-known concentration for Poisson distribution~\cite[Lemma 11]{hayes2013local}.

\begin{proposition}
\label{proposition-poisson}
Let $N \in \mathbb{Z}_{\geq 0}$ be a Poisson random variable with mean $\mu$, the following 	concentration inequalities hold for any $\epsilon < 1$:
\begin{align*}
\Pr[N \leq (1-\epsilon)\mu] &\leq \exp(-\epsilon^2\mu/2)	\\
\Pr[N \geq (1+\epsilon)\mu] &\leq \exp(-\epsilon^2\mu/3)	.
\end{align*}
Furthermore, if $t \geq 5\mu$, then
\begin{align*}
\Pr[N \geq t] \leq 2^{-t}.	
\end{align*}
\end{proposition}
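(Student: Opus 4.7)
The plan is to derive all three tail bounds by the standard Chernoff (moment-generating function) method. The foundational identity is $\mathbb{E}[e^{\lambda N}] = e^{\mu(e^\lambda - 1)}$ for every $\lambda \in \mathbb{R}$, which follows immediately by summing the Poisson series $\sum_{k \geq 0} e^{\lambda k} e^{-\mu}\mu^k/k! = e^{-\mu} e^{\mu e^\lambda}$. Every subsequent estimate is then an instance of $\Pr[N \geq a] \leq e^{-\lambda a}\mathbb{E}[e^{\lambda N}]$ for some $\lambda > 0$ (or the analogous lower-tail form with $\lambda < 0$), with $\lambda$ chosen to match the form of $a$.

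For the upper tail with $a = (1+\epsilon)\mu$, the variational optimum is $\lambda = \ln(1+\epsilon)$, producing the bound $\exp(\mu[\epsilon - (1+\epsilon)\ln(1+\epsilon)])$; matching this against $\exp(-\epsilon^2\mu/3)$ reduces to the standard analytic inequality $(1+\epsilon)\ln(1+\epsilon) - \epsilon \geq \epsilon^2/3$ on $\epsilon \in (0,1)$, which I would verify either by Taylor expansion or by showing that $h(\epsilon) = (1+\epsilon)\ln(1+\epsilon) - \epsilon - \epsilon^2/3$ satisfies $h(0) = h'(0) = 0$ together with a sign check on $h'$. The lower tail is symmetric: $a = (1-\epsilon)\mu$ is optimized at $e^\lambda = 1-\epsilon$ (so $\lambda < 0$), yielding $\exp(\mu[-\epsilon - (1-\epsilon)\ln(1-\epsilon)])$, and the residual inequality $(1-\epsilon)\ln(1-\epsilon) + \epsilon \geq \epsilon^2/2$ is even cleaner because $g(\epsilon) = (1-\epsilon)\ln(1-\epsilon) + \epsilon - \epsilon^2/2$ has $g(0) = g'(0) = 0$ and $g''(\epsilon) = \epsilon/(1-\epsilon) \geq 0$ on $[0,1)$, so $g \geq 0$.

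For the auxiliary bound, I apply the same Chernoff inequality at $\lambda = \ln(t/\mu) > 0$ to obtain the closed form $\Pr[N \geq t] \leq e^{-\mu}(e\mu/t)^t$. Writing $s = t/\mu \geq 5$ and taking logarithms, the target $2^{-t}$ becomes the scalar inequality $s\ln(2/s) + s - 1 \leq 0$ for $s \geq 5$; this I would verify at the endpoint $s = 5$ (value roughly $-0.58$) and then note to be monotonically decreasing in $s$ since its derivative equals $\ln(2/s) < 0$ for $s > 2$. The one place that needs care is precisely this last step: because $e/5 > 1/2$, the cruder estimate $(e\mu/t)^t \leq (e/5)^t$ alone does not reach $2^{-t}$, so the Chernoff prefactor $e^{-\mu}$ must be retained throughout, and that is exactly what makes the hypothesis $t \geq 5\mu$ suffice rather than forcing a larger constant. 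Everything else is routine elementary calculus.
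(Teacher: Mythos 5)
Your proposal is correct, and it necessarily differs from the paper in one structural sense: the paper gives no proof of this proposition at all, it simply invokes it as a well-known concentration bound for the Poisson distribution (citing Lemma~11 of the referenced work of Hayes et al.), so your Chernoff/moment-generating-function derivation supplies a self-contained argument where the paper relies on a citation. The derivation itself is the standard one and goes through exactly as you outline: the exponent computations from $\mathbb{E}[e^{\lambda N}]=e^{\mu(e^\lambda-1)}$, the two scalar inequalities $(1+\epsilon)\ln(1+\epsilon)-\epsilon\ge\epsilon^2/3$ and $(1-\epsilon)\ln(1-\epsilon)+\epsilon\ge\epsilon^2/2$ on $(0,1)$, and the reduction of the $t\ge 5\mu$ tail to $s\ln(2/s)+s-1\le 0$ for $s=t/\mu\ge 5$ (value about $-0.58$ at $s=5$, derivative $\ln(2/s)<0$) are all right, and your remark that the prefactor $e^{-\mu}$ must be retained because $e/5>1/2$ is precisely what makes the constant $5$ suffice. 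Two minor points to tighten: for the upper tail, the sign check on $h'(\epsilon)=\ln(1+\epsilon)-2\epsilon/3$ is a bit less immediate than the lower-tail case, since $h''(\epsilon)=1/(1+\epsilon)-2/3$ changes sign at $\epsilon=1/2$; you additionally need $h'(1)=\ln 2-2/3>0$ to conclude $h'\ge 0$ on $[0,1]$ (or use the sharper standard bound $(1+\epsilon)\ln(1+\epsilon)-\epsilon\ge\epsilon^2/(2+2\epsilon/3)$). And the degenerate case $\mu=0$, where $\lambda=\ln(t/\mu)$ is undefined, should be dispatched trivially. Neither issue affects correctness.
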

%With Proposition~\ref{proposition-dis-con} and the concentration for Poisson distribution, the main theorem (Theorem~\ref{thm:main-theorem}) holds if Theorem~\ref{thm:continuous-theorem} is proved. 

%The main theorem (Theorem~\ref{thm:main-theorem}) then follows.
%
Recall the single-site Metropolis chain $(X_t)_{t \ge 0}$ and the fact that for any $T>0$, $Y_T$ is identically distributed as $X_N$ for $N\sim\mathrm{Pois}(nT)$.
Let 
\begin{align*}
T' = 2T + 8\log n = O(T + \log n).	
\end{align*}
By Theorem~\ref{thm:main-theorem-Y}, assuming Condition~\ref{condition-Lipschitz}, there is a fully-asynchronous distributed algorithm that outputs $Y_{T'}$ within $O(T'+\log n)=O(T+\log n)$ time units with high probability.
 
Since $Y_{T'}$ is identically distributed as $X_{N}$ where $N\sim\mathrm{Pois}(nT')$, by the above concentration for Poisson distribution, we have
\begin{align*}
\Pr[N \geq Tn] \geq 1 - \frac{1}{n}.	
\end{align*}
Therefore, with high probability, the distributed algorithm outputs $X_N \in [q]^V$ for some $N \geq T n$ within $O(T + \log n)$ time units.
The main theorem (Theorem~\ref{thm:main-theorem}) is proved.
Theorem~\ref{thm:main-theorem-color} is implied by Theorem~\ref{thm:main-theorem} as a special case on proper graph coloring by setting every proposal distribution $\nu_v$ as the uniform distribution over $[q]$ and every Metropolis filter $f_{c,c'}^v:[q]^{N_v}\to[0,1]$ as $\forall \tau\in[q]^{N_v}: f^v_{c,c'}(\tau) = \prod_{u \in N(v)}I[\tau_u \neq c']$.

In the rest of the paper, we prove Theorem~\ref{thm:main-theorem-Y}. 
%Let $(Y_t)_{t \in\mathbb{R}_{\ge 0}}$ be the continuous-time version of a single-site Metropolis chain $(X_t)_{t \ge 0}$.
%The proof of Theorem~\ref{thm:continuous-theorem} is divided into following two parts. 

\subsection{Proof of correctness}
\label{section-proof-correct}

%We first prove the correctness of the algorithm.

Recall that the algorithm defines a chain $(\widehat{Y}_t)_{t\in[0,T]}$ (formally defined in~\eqref{eq-def-chain-hat-Y}) and outputs $\widehat{Y}_T$ when terminates.
%
%As long as the algorithm terminates the chain $(\widehat{Y}_t)_{t\in[0,T]}$ is well-defined.
%
Now we show that this chain produced by the algorithm perfectly simulates the original continuous-time Metropolis chain $(Y_t)_{t\in[0,T]}$ (defined in~\eqref{eq:eq-def-chain-Y}). 

\begin{lemma}
\label{lemma-correctness}
With probability 1 the algorithm terminates and
$\widehat{Y}_T$ is identically distributed as $Y_T$.
%Given any continuous-time  Metropolis chain $(Y_t)_{t \in\mathbb{R}_{\ge 0}}$, the main algorithm perfectly simulates $(Y_t)_{t \in\mathbb{R}_{\ge 0}}$  once terminates.
\end{lemma}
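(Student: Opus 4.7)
The plan is to prove both assertions simultaneously via a perfect coupling between the algorithm-produced chain $(\widehat{Y}_t)_{t\in[0,T]}$ defined in~\eqref{eq-def-chain-hat-Y} and the target chain $(Y_t)_{t\in[0,T]}$ defined in~\eqref{eq:eq-def-chain-Y}. Synchronize every source of randomness: share the update times $\ut{v}{i}$, the proposals $\ppsl{v}{i}$, and, for the $i$-th update at $v$, use the same $\beta$ sampled in Line~\ref{alg:general-sample} of Algorithm~\ref{Resolve} as the coin in $(Y_t)$ that decides acceptance against the true filter value $f^v_{c,c'}(Y_{\ut{v}{i}}(N_v))$ (that is, accept iff $\beta < f^v_{c,c'}(Y_{\ut{v}{i}}(N_v))$). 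Under this coupling I will show $\widehat{Y}_t = Y_t$ for all $t\in[0,T]$ almost surely, which immediately gives $\widehat{Y}_T\stackrel{d}{=}Y_T$.

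The pointwise equality is established by induction over the global chronological ordering of all update events $(v,i)$ with $v\in V$ and $1\le i\le m_v$; the times $\ut{v}{i}$ are almost surely distinct. For the inductive step at $(v,i)$ the hypothesis gives $\widehat{Y}_s(u)=Y_s(u)$ for every $u$ and every $s<\ut{v}{i}$, in particular $Y_{\ut{v}{i}}(N_v)=\widehat{Y}_{\ut{v}{i}}(N_v)$. The soundness property of Definition~\ref{definition-set-of-possible-values} guarantees $\widehat{Y}_{\ut{v}{i}}(u)\in \mathcal{S}_{\ut{v}{i}}(u)$ for every neighbor $u$, so $\widehat{Y}_{\ut{v}{i}}(N_v)\in \CC{v}{i}$, and~\eqref{eq-definition-pac-pre} forces
\[
\PAC \;\le\; f^v_{c,c'}\bigl(\widehat{Y}_{\ut{v}{i}}(N_v)\bigr) \;\le\; 1-\PRE.
\]
Whichever of the two triggering conditions in Lines~\ref{alg:general-condition-1}--\ref{alg:general-condition-2} of Algorithm~\ref{Resolve} eventually fires, the resulting \MAC{}/\MRE{} decision agrees with the one that $(Y_t)$ obtains from the shared $\beta$ and the true bias, so the updated state at $v$ matches in both chains.

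For termination with probability $1$, I first note $m_v\sim\mathrm{Pois}(T)$ is finite almost surely, so Phase II at each node has only finitely many iterations to close out. Then, by the same global induction, each update is resolved in finite wall-clock time: by the inductive hypothesis all prior updates at the neighbors $u\in N_v$ are resolved in finite time, the reliable-FIFO channels deliver the corresponding \MAC{}/\MRE{} messages to $v$ in finite time, and once those messages are absorbed one has $\mathcal{S}_{\ut{v}{i}}(u)=\{\widehat{Y}_{\ut{v}{i}}(u)\}$ for every $u\in N_v$. Consequently $|\CC{v}{i}|=1$ and $\PAC+\PRE=1$, which forces exactly one of the two triggering conditions to hold against the already-sampled $\beta$.

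The main bookkeeping obstacle is keeping ``logical'' update time (the values $\ut{v}{i}\in[0,T]$ used to index the chains) separate from ``wall-clock'' time in the asynchronous execution: the coupling induction must be carried out in logical order, while the finite-time delivery guarantee used for termination is a wall-clock statement. Once this distinction is made explicit, both the coupling and the termination argument ride on the single global induction sketched above, and the lemma follows.
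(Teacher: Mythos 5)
Your proposal is correct and follows essentially the same route as the paper: couple the two chains by sharing the update times, proposals, and the per-update uniform $\beta$; induct over the update events in time order to show $\widehat{Y}$ matches $Y$ pointwise, using the soundness $\widehat Y_{\ut{v}{i}}(N_v)\in\CC{v}{i}$ to sandwich the true filter value between $\PAC$ and $1-\PRE$; and establish termination by observing that once all adjacent earlier updates are resolved and their messages delivered, $|\CC{v}{i}|=1$ forces $\PAC+\PRE=1$. The only cosmetic differences from the paper are that you induct over a total (chronological) order rather than the partial order $\prec$ of~\eqref{eq:correctness:poset} (a valid linear extension), and you do not mention the bounded-precision truncation of update times, which the paper handles by noting the truncation preserves $\prec$ and hence does not affect the argument.
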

%\begin{proof}
%\todo{setup}
Both processes $({Y}_t)_{t\in[0,T]}$ and $(\widehat{Y}_t)_{t\in[0,T]}$ are fully determined by the following randomnesses: 
\begin{align}
\begin{split}
m_v &\text{ for each }v\in V,\\ 
(\ut{v}{i},\ppsl{v}{i}, \bt{v}{i}) &\text{ for each }v\in V\text{ and every }1\le i\le m_v,
\end{split}\label{eq:correctness-randomness}
\end{align}
where $m_v$ denotes the number of times the Poisson clock at node $v$ rings, $\ut{v}{i}$ and $\ppsl{v}{i}$ denote respectively the time and proposal of the $i$-th update at node $v$, and $\bt{v}{i}\in[0,1]$ denotes the randomness used for the coin flipping in the $i$-th update of node $v$: in particular, in $({Y}_t)_{t\in[0,T]}$, the random value $\bt{v}{i}\in[0,1]$ is used in~\eqref{eq:eq-def-chain-Y} to determine the experiment of the Metropolis filter of the $i$-th update at node $v$, so that~\eqref{eq:eq-def-chain-Y} is implemented as
\begin{align*}
Y_{t}(v)=
\begin{cases}
\ppsl{v}{i} & \bt{v}{i}\le f_{c,c'}^{v}(Y_{t}(N_v))\\
Y_{t-\epsilon}(v) & \bt{v}{i}>f_{c,c'}^{v}(Y_{t}(N_v))
\end{cases},
\quad\text{where }c'=\ppsl{v}{i}\text{ and }c=Y_{t-\epsilon}(v);
%\label{eq:eq-def-chain-Y}
\end{align*}
%``with prob.~$f_{c,c'}^v(Y_t(N_v))$ (or with prob.~$1-f_{c,c'}^v(Y_t(N_v))$)'' is implemented as the condition $\bt{v}{i}\le f_{c,c'}^v(Y_t(N_v))$ (or $\bt{v}{i}> f_{c,c'}^v(Y_t(N_v))$); 
and in $(\widehat{Y}_t)_{t\in[0,T]}$, the random value $\bt{v}{i}$ is just the $\beta\in[0,1]$ in $\Resolve(i,\widehat{Y}(v), \mathbf{j}, \mathbf{Y})$ (Algorithm~\ref{Resolve}) for resolving the $i$-th update at node $v$.

\paragraph{The perfect coupling:} We couple the two processes $({Y}_t)_{t\in[0,T]}$ and $(\widehat{Y}_t)_{t\in[0,T]}$ by using the same randomnesses $(m_v)_{v\in V}$ and $(\ut{v}{i},\ppsl{v}{i}, \bt{v}{i})_{v\in V, 1\le i\le m_v}$.

We use the pair $(v,i)$, where $v\in V$ and $1\le i\le m_v$, to identify the $i$-th update at node $v$.
Fixed any collection of randomnesses as~\eqref{eq:correctness-randomness} such that all update times are distinct, a partial order $\prec$ on updates $(v,i)$ can be naturally defined as:
\begin{align}
\forall v\in V,  u\in N_v\cup\{v\},  1\le i\le m_u, 1\le j\le m_v: \quad (u,i)\prec(v,j) \iff \ut{u}{i}<\ut{v}{j}.\label{eq:correctness:poset}
\end{align}

Recall that $Y_{v}^{(i)}$ (defined in~\eqref{eq:def-step-Y}) denotes the state of $v$ in the chain $(Y_t)_{t\in[0,T]}$ right after the update $(v,i)$, and $\Y{v}{i}$ (constructed in Algorithm~\ref{ResolveAllUpdates}) denotes the state of $v$ in the algorithm after resolving the update $(v,i)$.

Assume that all $m_v$'s are finite and any pair of update times $\ut{u}{i}$ and $\ut{v}{j}$ for $(u,i)\neq(v,j)$ are distinct by a finite gap.
We then apply a structural induction to prove the following hypothesis:
\begin{claim}\label{claim:correctness-hypothesis}
For every $v\in V$ and every $1\le i\le m_v$, $\Y{v}{i}$ is well-defined and $\Y{v}{i}=Y_{v}^{(i)}$.
\end{claim}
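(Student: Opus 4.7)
The plan is to prove the claim by a structural induction on the poset $(\{(v,i) : v \in V,\, 1 \le i \le m_v\},\prec)$, with $\prec$ defined in \eqref{eq:correctness:poset}. Since we have assumed all $m_v$ are finite and all update times are separated by a positive gap, the poset is well-founded and every update has only finitely many predecessors. In the induction step, I assume $\Y{u}{j}$ is well-defined and equals $Y_u^{(j)}$ for every predecessor $(u,j)\prec(v,i)$, and show the same for $(v,i)$. Applying the IH to the self-predecessors $(v,1),\ldots,(v,i-1)$, the input $\Y{v}{i-1}$ to $\Resolve(i,\Y{v}{i-1},\mathbf{j},\mathbf{Y})$ is well-defined and equals $Y_v^{(i-1)}$, so the chain's current state $c=Y_v^{(i-1)}$ and proposal $c'=\ppsl{v}{i}$ agree with what the algorithm uses.

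The key invariant I would maintain throughout the execution of the subroutine $\Resolve$ for the update $(v,i)$ is that the true $Y_{\ut{v}{i}}(N_v) \in \CC{v}{i}$. Directly from Definition~\ref{definition-set-of-possible-values}, combined with the IH on predecessor updates $(u,j)\prec(v,i)$ with $u\in N_v$, each $\ST{\ut{v}{i}}{u}$ either equals the singleton $\{Y_{\ut{v}{i}}(u)\}$ (when $v$ has received all relevant \MAC/\MRE\ messages from $u$ for updates preceding $\ut{v}{i}$) or is a superset of this singleton formed by the last known state of $u$ together with all proposals that $u$ may or may not have accepted by time $\ut{v}{i}$; in either case $Y_{\ut{v}{i}}(u) \in \ST{\ut{v}{i}}{u}$. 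Preservation at each \MAC/\MRE\ event handler uses the IH to conclude that the updated $\Y{u}{j_u}$ equals $Y_u^{(j_u)}$, so the monotone shrinking of $\ST{\ut{v}{i}}{u}$ after each message continues to contain the true value.

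Termination of the subroutine follows from the IH plus reliable FIFO channels: every predecessor $(u,j)\prec(v,i)$ with $u\in N_v$ is resolved at $u$ in finite time, so the corresponding \MAC/\MRE\ message from $u$ eventually arrives at $v$. Once all such messages are delivered, each $\ST{\ut{v}{i}}{u}$ collapses to $\{Y_{\ut{v}{i}}(u)\}$, hence $\CC{v}{i}=\{Y_{\ut{v}{i}}(N_v)\}$ and $\PAC+\PRE=1$. At this point, for any value of $\beta=\bt{v}{i}\in[0,1)$ at least one of $\beta<\PAC$ or $\beta\ge 1-\PRE$ must hold, so one of the conditions on Lines~\ref{alg:general-condition-1}--\ref{alg:general-condition-2} triggers and the subroutine returns, defining $\Y{v}{i}$.

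Finally, correctness of the returned value follows from the coupling between the algorithm's $\beta$ and the chain's $\bt{v}{i}$. Whenever the subroutine returns via Line~\ref{alg:general-condition-1}, the invariant gives $\beta<\PAC\le f^v_{c,c'}(Y_{\ut{v}{i}}(N_v))$ with $c=Y_v^{(i-1)}$ and $c'=\ppsl{v}{i}$, so the Metropolis filter in \eqref{eq:eq-def-chain-Y} also accepts, yielding $Y_v^{(i)}=\ppsl{v}{i}=\Y{v}{i}$; the rejection branch via Line~\ref{alg:general-condition-2} is symmetric, giving $Y_v^{(i)}=Y_v^{(i-1)}=\Y{v}{i}$. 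I expect the main obstacle to be the careful bookkeeping needed to verify that the invariant $Y_{\ut{v}{i}}(N_v)\in\CC{v}{i}$ is preserved through every intermediate \MAC/\MRE\ event during $\Resolve$, rather than only at termination; this is precisely where asynchrony could a priori let the algorithm's local view drift from the chain's ground truth, and the inductive hypothesis together with the monotone shrinking of $\ST{\ut{v}{i}}{u}$ under newly received messages is what rules this out.
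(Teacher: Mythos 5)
Your proposal is correct and follows essentially the same route as the paper's proof: structural (well-founded) induction on the poset $\prec$, the invariant $Y_{\ut{v}{i}}(N_v)\in\CC{v}{i}$ maintained throughout $\Resolve$, termination via the inductive hypothesis plus reliable message delivery collapsing each $\ST{\ut{v}{i}}{u}$ to a singleton, and agreement of outcomes via $\PAC\le f^v_{c,c'}(Y_{\ut{v}{i}}(N_v))\le 1-\PRE$ together with the shared coin $\beta=\bt{v}{i}$. The paper handles the induction basis (minimal elements) as an explicit separate case and spends a few lines on the technicality that truncating update times to bounded precision preserves $\prec$, but otherwise the argument is the same.
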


Note that this proves Lemma~\ref{lemma-correctness}, because it implies that as long as all $m_v$'s are finite and all update times are distinct (which occurs with probability 1), 
$\widehat{Y}_T$ is well-defined, where $\widehat{Y}_T(v)=\Y{v}{m_v}$ for every $v\in V$ by~\eqref{eq-def-chain-hat-Y}, thus the algorithm terminates; 
and $\widehat{Y}_T$ is identically distributed as $Y_T$, where $Y_T$ is constructed as ${Y}_T(v)=Y_{v}^{(m_v)}$ for every $v\in V$ by~\eqref{eq:def-step-Y}.

It then remains to verify Claim~\ref{claim:correctness-hypothesis} by a structural induction on the partial order in~\eqref{eq:correctness:poset}.

%\todo{pause editing here}
In the \textbf{main algorithm}, each update time  is represented with bounded precision. 
After the truncation of precision, suppose each update time $t_u^i$ becomes $\widetilde{t}_u^i$.
The following property holds for truncated update times.
\begin{align}
\forall v\in V,  u\in N_v\cup\{v\},  1\le i\le m_u, 1\le j\le m_v: \widetilde{t}_u^i < \widetilde{t}_v^j \iff \ut{u}{i}<\ut{v}{j}.\label{eq:correctness:poset-truncate}
\end{align}
Hence, the partial order $\prec$ defined in~\eqref{eq:correctness:poset} is preserved after the truncation:
\begin{align}
\label{eq:order-truncate}
(u, i) \prec (v, j) \iff \widetilde{t}_u^i < \widetilde{t}_v^j.
\end{align}
\paragraph{Induction basis:}
We say the update $(v,j)$ is a minimal element with respect to partial order $\prec$ if and only if there is no update $(u, i)$ such that $(u, i) \prec (v, j)$. We prove that for each minimal element $(v, j)$, it holds that $\widehat{Y}_v^{(j)}$ is well-defined and $\widehat{Y}^{(j)}_v = Y^{(j)}_v$. Since $(v, j)$ is a minimal element, it must hold that $j = 1$, $\widetilde{t}^i_u > \widetilde{t}^j_v$ and $t^i_u > t^j_v$ for all $u \in N_v$ and $1\leq i \leq m_u$. 
In $\textbf{Phase II}$, node $v$ first tries to resolve the update $(v, j)$ and constructs the set $\mathcal{S}_{\widetilde{t}^j_v}(u) = \{\widehat{Y}_u^{(0)}\}$ for each $u \in N_v$ by~\eqref{eq-def-Stu}. Hence, once $v$ receives all $\widehat{Y}_u^{(0)}$ for $u \in N_v$, the update $(v, j)$ must be resolved. This implies $\widehat{Y}_v^{(j)}$ is well-defined. Note that $(v, j)$ is a minimal element with respect to partial order $\prec$. 
This implies $j = 1$, $Y_{t^j_v}(N_v) = Y_0(N_v)$ and $\widehat{Y}_{\widetilde{t}^j_v}(N_v) = \widehat{Y}_{0}(N_v)$.
In continuous-time Metropolis chain, it holds that $Y^{(j)}_v = c_v^j$ if $\beta_v^j < f^v_{c,c'}(Y_0(N_v))$; $Y^{(j)}_v = Y_v^{(0)}$ if $\beta_v^j \geq f^v_{c,c'}(Y_0(N_v))$, where $c = Y^{(0)}_v$ and $c' = c^j_v$. In \textbf{main algorithm}, the set $\CC{v}{j}$ defined in~\eqref{eq-def-CC} is $\{\widehat{Y}_0(N_v)\}$ and it holds that $\widehat{Y}^{(j)}_v = c_v^j$ if $\beta_v^j < f^v_{c,c'}(\widehat{Y}_0(N_v))$; $\widehat{Y}^{(j)}_v = \widehat{Y}_v^{(0)}$ if $\beta_v^j \geq f^v_{c,c'}(\widehat{Y}_0(N_v))$, where $c = \widehat{Y}^{(0)}_v$ and $c' = c^j_v$. Since $\widehat{Y}_u^{(0)} = Y_u^{(0)}$ for all $u \in V$, then $\widehat{Y}^{(j)}_v = Y^{(j)}_v$.

\paragraph{Induction step:}
Fix an update $(v, j)$. By induction hypothesis,  for all updates $(u, i)$ with $(u, i) \prec (v,j)$, it holds that $\widehat{Y}_u^{(i)}$ is well defined and $\widehat{Y}_u^{(i)} = Y_u^{(i)}$. For all $u \in N_v$ and $1\leq i \leq m_u$ with $(u, i) \prec (v, j)$, since $\widehat{Y}_u^{(i)}$ is well-defined, then in \textbf{main algorithm}, node $u$ must resolve the update $(u, i)$ and sends the message ``\MAC{}'' or ``\MRE''  to node $v$. Hence, node $v$ will eventually resolve the update $(v, j)$ no later than the moment at which all these messages are delivered, because in that moment, node $v$ knows the full information of $Y_{\widetilde{t}^j_v}(N_v)$ and $\PAC + \PRE = 1$.  This proves that $\widehat{Y}_v^{(j)}$ is well defined. 

Consider the continuous-time Metropolis chain. It holds that 
\begin{align}
\label{eq-con-chain-Yvj}
Y_v^{(j)} = \begin{cases}
c_v^j &\text{if }  \beta_v^j < f^v_{c,c'}(Y_{t^j_v}(N_v)) \text{ where } c = Y_v^{(j-1)}, c'=c_v^j\\
Y_v^{(j-1)} &\text{if }  \beta_v^j \geq f^v_{c,c'}(Y_{t^j_v}(N_v)) \text{ where } c = Y_v^{(j-1)}, c'=c_v^j.
 \end{cases}
\end{align}
Consider the moment at which $v$ resolves the update $(v, j)$. In Algorithm~\ref{Resolve}, node $v$ computes the set $\mathcal{S}_{\widetilde{t}^j_v}(u)$ by~\eqref{eq-def-Stu}, formally,
\begin{align}
\label{eq-Stu-proof}
\forall u \in N_v:\quad 
\mathcal{S}_{\widetilde{t}^j_v}(u) \triangleq \begin{cases}
\vspace{3pt}\left\{\,\Y{u}{k-1}\,\right\} 
{ \text{ where }\widetilde{t}_{u}^{k -1} \leq \widetilde{t}^j_v <\widetilde{t}_{u}^{k}}
&\text{if } 0 \leq \widetilde{t}^j_v < \widetilde{t}^{j_u}_u\\
\left\{\Y{u}{j_u-1}\right\} \cup \left\{ \ppsl{u}{k} \mid \widetilde{t}_{u}^{j_u} \leq \widetilde{t}_{u}^{k} \leq \widetilde{t}^j_v \right\}  &\text{if } \widetilde{t}_{u}^{j_u} \leq \widetilde{t}^j_v <T,
\end{cases}
\end{align}
where $j_u$  denotes that from $v$'s perspective, the neighbor $u$ is resolving its $j_u$-th update, and the outcomes of its first $(j_u-1)$ updates are known to $v$. In~\eqref{eq-Stu-proof}, if $0 \leq \widetilde{t}^j_v < \widetilde{t}^{j_u}_u$, let $\ell = k - 1$, where $\widetilde{t}_{u}^{k -1} \leq \widetilde{t}^j_v <\widetilde{t}_{u}^{k}$; if $\widetilde{t}_{u}^{j_u} \leq \widetilde{t}^j_v <T$, let $\ell = j_u - 1$. By~\eqref{eq:order-truncate}, it must hold that 
$(u,\ell) \prec (v, j)$.	
By induction hypothesis, we have $\widehat{Y}_u^{(\ell)} = Y_u^{(\ell)}$. By~\eqref{eq:correctness:poset-truncate}, the truncated update times preserve the order among all update times of node $u$ and node $v$. Combining with the update rule of the Metropolis chain, it holds that 
\begin{align*}
\forall u \in N_v:\quad
Y_{t^j_v}(u) \in \mathcal{S}_{\widetilde{t}^j_v}(u).
\end{align*}
Recall $\CC{v}{j} \triangleq \bigotimes_{u \in N_v}\mathcal{S}_{\widetilde{t}^j_v}(u)$. Then, we have
\begin{align*}
Y_{t^j_v}(N_v) \in \CC{v}{j}.	
\end{align*}
Recall that $\PAC \triangleq \min_{\tau \in \CC{v}{j}}f^v_{\widehat{c},c'}(\tau)$ and
$\PRE \triangleq \min_{\tau \in \CC{v}{j}}\left(1-f^v_{\widehat{c},c'}(\tau)\right)$ where $\widehat{c} = \widehat{Y}_v^{(j - 1)}$ and $c' = c_v^j$. Since $(v, j -1) \prec (v, j)$, then by induction hypothesis, we have $\widehat{Y}_v^{(j-1)}=Y_v^{(j-1)}$. Hence, the threshold $f^v_{c,c'}(Y_{t^j_v}(N_v))$ in~\eqref{eq-con-chain-Yvj} where $c = Y_v^{(j-1)} = \widehat{Y}_v^{(j-1)}$, satisfies 
\begin{align}
\label{eq-proof-relation}
\PAC \leq f^v_{c,c'}(Y_{t^j_v}(N_v))	 \quad\text{and}\quad \PRE \leq  1 - f^v_{c,c'}(Y_{t^j_v}(N_v)).
\end{align}
Finally, there are two cases when $v$ resolves the update $(v, j)$.
\begin{itemize}
\item Case $\beta_v^j < \PAC$: In \textbf{main algorithm}, it holds that $\widehat{Y}_v^{(j)} = c^j_v$.  	In continuous-time Metropolis chain, by~\eqref{eq-proof-relation}, it holds that $\beta_v^j < f^v_{c,c'}(Y_{t^j_v}(N_v))$, thus $Y_v^{(j)} = c^j_v$. This implies $\widehat{Y}_v^{(j)} = Y_v^{(j)}$.
\item Case $\beta_v^j \geq 1 - \PRE$: In \textbf{main algorithm}, it holds that $\widehat{Y}_v^{(j)} = \widehat{Y}_v^{(j-1)}$.  	In continuous-time Metropolis chain, by~\eqref{eq-proof-relation}, it holds that $\beta_v^j \geq  f^v_{c,c'}(Y_{t^j_v}(N_v))$, thus $Y_v^{(j)} = Y_v^{(j-1)}$. This implies $\widehat{Y}_v^{(j)} = Y_v^{(j)}$, because $\widehat{Y}_v^{(j-1)} = Y_v^{(j-1)}$ by induction hypothesis.
\end{itemize}
Combining two cases proves that $\widehat{Y}_v^{(j)} = Y_v^{(j)}$.

\subsection{Analysis of running time}
\label{sec:analysis}

We then upper bound the running time of the main algorithm.

\begin{lemma}
\label{lemma-convergence}
With high probability the main algorithm terminates within $O(\Delta T + \log n)$ time units. 
If Condition~\ref{condition-Lipschitz} holds, with high probability the algorithm terminates within $O(T + \log n)$ time units.
\end{lemma}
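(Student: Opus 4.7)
The plan is to treat Phase~I and Phase~II separately. Phase~I consists of each node locally simulating its rate-$1$ Poisson clock, sampling its proposals, and broadcasting the resulting data to all neighbors; by Proposition~\ref{proposition-poisson} together with a union bound, $\max_v m_v = O(T+\log n)$ with high probability, after which Phase~I completes in a further $O(1)$ time units.

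For Phase~II, following the outline in Section~\ref{sec:algorithm-outline} I would formalize a \emph{dependency chain} as a sequence of updates $(v_0,i_0),(v_1,i_1),\ldots,(v_L,i_L)$ with consecutive pairs occurring at adjacent nodes, such that the algorithm's resolution of $(v_k,i_k)$ is triggered by the arrival of the ``\MAC''/``\MRE'' message sent upon resolving $(v_{k-1},i_{k-1})$. Since each such message travels in at most one time unit, any update still unresolved at local time $\tau$ sits at the tail of a dependency chain of length $\Omega(\tau)$, so the Phase~II runtime is dominated by the longest dependency chain, and it suffices to upper-bound that length in high probability.

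For the unconditional bound $O(\Delta T+\log n)$, no use of Condition~\ref{condition-Lipschitz} is needed: each dependency chain visits pairwise distinct update events, so its length is bounded by a walk-counting argument over $G$ combined with the Poisson moment bounds $m_u = O(T+\log n)$, yielding the claim after summing over the $O(\Delta)$ neighbors visited per step of the chain. For the improved bound $O(T+\log n)$ under Condition~\ref{condition-Lipschitz}, the key probabilistic fact is that, conditioned on all randomness strictly preceding the resolution of $(v,j)$, the probability that this resolution is triggered specifically by neighbor $u$'s update is bounded by the probability that the independent uniform $\bt{v}{j}$ lies in the band of bias values on which $\F$ still depends on the true spin at $u$. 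The Lipschitz condition bounds the expected width of that band, averaged over $\ppsl{v}{j}\sim\nu_v$, by $\EE{c'\sim\nu_v}{\delta_{u,a,b}\F}\leq C/\Delta$. Summing over the $\Delta$ possible neighbors gives per-step chain-extension probability at most $C$, so the expected number of length-$L$ dependency chains from any fixed starting update is at most $C^L$ up to polynomial combinatorial factors; a union bound over the $O(n(T+\log n))$ possible starting updates with $L=\Theta(T+\log n)$ (for a sufficiently large absolute constant) drives the failure probability below $n^{-\Omega(1)}$.

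The principal obstacle will be making this per-step bound fully rigorous. One must verify that the event ``the dependency chain reaches $(v,j)$'' is measurable with respect to randomness strictly preceding $\bt{v}{j}$, so that the indicator events along different edges of a chain are conditionally independent and their probabilities multiply; and one must show that the band-width bound continues to apply throughout the execution of $\Resolve(j,\cdot)$ as $\PAC$ and $\PRE$ are dynamically refined by incoming messages, not only at the initial computation in Line~\ref{alg:general-pac-pre}. Both points follow from the coupling constructed in Section~\ref{section-proof-correct}, which freezes all times and proposals at the start of Phase~II and uses $\bt{\cdot}{\cdot}$ as the sole source of resolution randomness, but executing this book-keeping carefully is the main technical burden.
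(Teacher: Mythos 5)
Your Phase~I analysis and your overall architecture (reduce the Phase~II time to the length of the longest dependency chain, then bound the probability that a resolution is triggered by a specific neighbor via Condition~\ref{condition-Lipschitz}) match the paper, and the measurability and ``dynamic refinement'' issues you flag are exactly what the paper handles in Claim~\ref{claim-conditional-prob} by conditioning on the adversarial message delays and on all clock/proposal/coin randomness preceding the trigger. The genuine gap is in the probabilistic accounting of the chain length: bounding the expected number of length-$L$ chains by $C^L$ (up to polynomial factors) is vacuous unless $C<1$, whereas Condition~\ref{condition-Lipschitz} only requires $C$ to be \emph{some} constant and the theorems explicitly hide an arbitrary constant $C$ in the $O(\cdot)$ (e.g.\ for colorings with $q\ge\alpha\Delta$ and small $\alpha$, $C$ is large). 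The ingredient your sketch never uses is the monotonicity of update times along a dependency chain (Proposition~\ref{prop:dependency-chain-monotone}): a chain of length $\ell$ forces $\ell$ Poisson arrivals with strictly increasing times in $[0,T]$ along a fixed node path, an event of probability roughly $T^\ell/\ell!$. The paper obtains this factorial decay by conditioning on the total number of updates $\mathcal{N}=m$, choosing $\binom{m}{\ell}$ positions with node-identity probability $n^{-\ell}$, and only then multiplying by $(2C/\Delta)^s\cdot\Delta^s=(2C)^s$ for the $s$ cross-edge (message-triggered) steps, yielding $\Pr[R_v\ge\ell]\le\left(\frac{e(1+2C)T}{\ell}\right)^\ell$, which is small once $\ell=\Theta\left((1+C)T+\log n\right)$. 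Without this factor, your union bound with $L=\Theta(T+\log n)$ does not drive the failure probability down when $C\ge1$. The same omission also weakens your unconditional bound: ``pairwise distinct update events'' plus $m_u=O(T+\log n)$ does not cap a wandering chain at $O(\Delta T+\log n)$; the paper again needs the $(eT/\ell)^\ell$ increasing-times estimate per node path, against $(\Delta+1)^\ell$ paths.

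A secondary but real issue is your chain definition: you only admit links in which a resolution is triggered by an arriving message, but a resolution can be \emph{self-triggered}, i.e.\ the $i$-th update at a node resolves immediately once the $(i-1)$-st does, at the first computation of $\PAC$ and $\PRE$. Contracting such steps breaks your reduction ``unresolved at local time $\tau$ implies a chain of length $\Omega(\tau)$'': an update that resolves late by self-trigger sits at the tail of a trivially short message-only chain. The paper's dependency chain therefore includes same-node steps $(v,i-1)\to(v,i)$, and Proposition~\ref{proposition-dc-time} bounds the Phase-II time of $v$ by the length of $\DP{v}{m_v}$; in the counting, these same-node steps are precisely the ones that cost nothing beyond the $T^\ell/\ell!$ time-ordering factor, while only the $s$ cross-edge steps pay $2C/\Delta$ each. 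Finally, a minor point: the relevant ``band'' is $(\PAC'-\PAC)+(\PRE'-\PRE)$, and since the possible-state set $\mathcal{S}(u)$ may have more than two elements, the paper reduces its expectation to $2\max_{a,b,c}\mathbb{E}_{c'\sim\nu_v}\left[\delta_{u,a,b}f^v_{c,c'}\right]\le 2C/\Delta$ via the optimization problem of Lemma~\ref{lemma-opt-problem}, not $C/\Delta$ directly.
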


Note that each message of the main algorithm is of at most $O(\log n+\log\lceil T\rceil+\log q)$ bits (in fact, only 1 bit in \textbf{Phase~II}).
Theorem~\ref{thm:main-theorem-Y} is then implied by Lemma~\ref{lemma-correctness} and Lemma~\ref{lemma-convergence}.

%Let $(Y_t)_{t \in \mathbb{R}_{\geq 0}}$ be the continuous-time version of a single-site \MA{} $(X_t)_{t \geq 0}$. If Condition~\ref{condition-Lipschitz} is satisfied by $(X_t)_{t \geq 0}$, then given any $T$, we prove that with high probability, the main algorithm outputs $Y_T$ within $O(T + \log n)$ time units.

We then proceed to prove Lemma~\ref{lemma-convergence}.
Given an execution of the algorithm, we define the \textbf{Phase I} of the execution as the time duration between the beginning of the algorithm and the moment at which the last node enters its \textbf{Phase II}, and define the  \textbf{Phase II} of the execution as the time duration between this moment and the termination of the algorithm.

Recall that for every node $v\in V$, the Poisson clock at $v$ rings $m_v$ times, where $m_v\sim\mathrm{Pois}(T)$.
By Proposition~\ref{proposition-poisson}, it holds that 
\[
\Pr[\,\forall v\in V, m_v\le 5T+3\log n\,]\ge 1- \frac{1}{n^2}.
\]
It also holds with high probability that no two update times are too close to each other (e.g.~within $<n^{-4}$ difference) since they are generated by rate-1 Poisson clocks, thus every update $(\ut{v}{i},\ppsl{v}{i})$ can be encoded in one message of $O(\log n+\log\lceil T\rceil+\log q)$ bits.
Therefore with high probability, the \textbf{Phase I} of the execution ends within $\max_{v\in V}m_v\le 5T+3\log n$ time units.

It then remains to upper bound the length of the \textbf{Phase II} of the execution.
We treat it as a standalone distributed algorithm, with different nodes starting asynchronously, while the complexity is measured starting from the moment the latest node starts.
\begin{definition}[\textbf{time complexity of a node in Phase II}]\label{def:complexity-phase-II}
%Let $\mathcal{T}$ denote the earliest moment at which {all} nodes have entered their respective \textbf{Phase II}.
%
For each node $v\in V$, we define the time for \concept{$v$ staying in \textbf{Phase II}} as the time duration between the earliest moment at which {all} nodes have entered their respective \textbf{Phase II} and the moment $v$ terminates.
\end{definition}
%For such distributed algorithms, without loss of generality we can assume all node start simultaneously, because if otherwise there always exists a scheduler $\mathcal{A}'$ with simultaneous starts that delivers all messages before $\mathcal{T}$ within 1 time unit, preserving the relative chronicle orders between messages, and is identical to the original scheduler $\mathcal{A}$ after time~$\mathcal{T}$. Thus $\mathcal{A}$ and $\mathcal{A}'$ have the same output and almost same complexity, only $\mathcal{A}'$ has simultaneous starts.

\begin{lemma}
\label{lemma-phase-II}
%Assume that all nodes start running the {Phase-II} algorithm (Algorithm~\ref{ResolveAllUpdates}) simultaneously.
%
Fix any node $v\in V$.
With probability at least $1 - \frac{1}{n^2}$, node $v$ stays in \textbf{Phase II} for at most $O(\Delta T + \log n)$ time units.
And if Condition~\ref{condition-Lipschitz} holds, then with probability at least $1 - \frac{1}{n^2}$, node $v$ stays in \textbf{Phase II} for at most $O(T + \log n)$ time units.
\end{lemma}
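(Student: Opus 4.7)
The plan is to reduce the time $v$ spends in Phase II to the length of the longest ``dependency chain'' ending at one of $v$'s updates, and then to bound this length separately in the two regimes. Formally, I would call a sequence of updates $(v_0, i_0), (v_1, i_1), \ldots, (v_L, i_L)$ a dependency chain ending at $v$ if $v_0 = v$, $i_0 \in \{1,\ldots,m_v\}$, and for each $k \geq 1$ we have $v_k \in N(v_{k-1})$, $\ut{v_k}{i_k} < \ut{v_{k-1}}{i_{k-1}}$, and the \MAC/\MRE\ message sent by $v_k$ for its $i_k$-th update was the last piece of information needed at $v_{k-1}$ to trigger either $\beta < \PAC$ or $\beta \geq 1 - \PRE$ when resolving $(v_{k-1}, i_{k-1})$. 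Because each edge takes at most one time unit and the algorithm at $v$ resolves its updates in order, by induction on $L$ the time $v$ stays in \textbf{Phase II} (in the sense of Definition~\ref{def:complexity-phase-II}) is at most $L + O(1)$, where $L$ is the length of the longest dependency chain ending at $v$.

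For the weaker bound $O(\Delta T + \log n)$, I would drop the ``triggered'' requirement and enumerate all admissible candidate chains: sequences of adjacent updates with strictly decreasing times drawn from the Poisson processes of the nodes. Walking backward in time from $(v, m_v)$, each step chooses one of at most $\Delta$ neighbors and an earlier update time from that neighbor's rate-$1$ Poisson process on $[0, T']$ with $T' = 2T + 8\ln n$. A standard computation (as in~\cite{hayes2013local}) bounds the expected number of candidate chains of length $k$ from $v$ by $\Delta^k (T')^k / k!$; choosing $k = O(\Delta T + \log n)$ makes this smaller than $n^{-2}$, and Markov's inequality yields the claim.

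Under Condition~\ref{condition-Lipschitz} I would sharpen this by observing that extending the chain by a neighbor update $(u, k)$ is actually \emph{triggered} only if that update changes the relevant minimum or maximum in the definition~\eqref{eq-definition-pac-pre} of $\PAC, \PRE$ in a way that finally crosses the uniform coin $\bt{v_{k-1}}{i_{k-1}} \in [0,1]$. Revealing all Poisson times and proposals first, and treating only the $\beta$-coins as fresh, the probability that adding the specific $(u,k)$ actually triggers resolution at the previous step is bounded by the corresponding one-coordinate variation of the filter, which in expectation over the random proposal $c' \sim \nu_v$ is at most $C/\Delta$ by Condition~\ref{condition-Lipschitz}. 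This effectively replaces the branching factor $\Delta$ by $O(C)$ in the above enumeration, so the same Poisson bookkeeping gives $L = O(T + \log n)$ with probability at least $1 - n^{-2}$.

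The main obstacle will be the last step: the ``gaps'' $[\PAC, 1-\PRE]$ at different candidate chain extensions are \emph{correlated} through the evolving sets $\ST{t}{u}$ and through the shared coin $\beta$ at the parent update, so care is needed to decouple the randomness. A clean way to do this is to expose the Poisson times, proposals, and all $\beta$-coins in a specific order so that, conditional on the history, the probability that a given extension is triggered reduces to the unconditional expectation $\EE{c' \sim \nu_v}{\delta_{u,a,b}\,\F}$ appearing in Condition~\ref{condition-Lipschitz}, after which a union bound over chain shapes of length $k = \Theta(T + \log n)$ and a single Markov-style estimate close the argument.
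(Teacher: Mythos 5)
Your overall architecture matches the paper's: reduce the Phase-II time of $v$ to the length of a dependency chain ending at $v$ (cf.\ Proposition~\ref{proposition-dc-time}), get the $O(\Delta T+\log n)$ bound by a first-moment count over adjacent sequences with decreasing Poisson times, and under Condition~\ref{condition-Lipschitz} replace the branching factor $\Delta$ by $O(C)$ via a per-step conditional probability $O(C/\Delta)$. But there are two genuine gaps. First, your chain only takes steps to a neighbor ($v_k\in N(v_{k-1})$), yet an update's delay can be inherited from the same node's previous update: if $(v_{k-1},i_{k-1})$ resolves instantly once it is reached (no neighbor message is ``the last piece of information''), its resolution time equals that of $(v_{k-1},i_{k-1}-1)$, and your chain cannot continue backward through it, so the induction ``time $\le L+O(1)$'' is not established as stated. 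The paper avoids this by allowing steps in $N(v_j)\cup\{v_j\}$ and charging the $O(C/\Delta)$ factor only to the $s$ node-changing steps (the sets $\mathcal{P}(\ell,s)$); the same-node steps are still paid for by the $1/n$ node-identity factor and the Poisson count, which is what yields the final bound $\left(\mathrm{e}T(1+2C)/\ell\right)^\ell$.

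Second, and more seriously, your key step is asserted under an exposure order that makes it false: if you ``reveal all Poisson times and proposals first and treat only the $\beta$-coins as fresh,'' the conditional triggering probability is the amount by which the gap $[\PAC,1-\PRE)$ shrinks for a \emph{fixed} proposal, which can be $1$ (e.g.\ proper coloring when the fixed proposal equals the single uncertain color of the pending neighbor). The smallness must come from the randomness of the proposal of the \emph{triggered} update. The paper's Claim~\ref{claim-conditional-prob} conditions on the (adaptive, adversarial) message delays $\mathcal{F}_D$, the clocks $\mathcal{F}_1$, and only the proposals and coins of earlier updates $\mathcal{F}_2$, keeping $(C_{p(j)},\beta_{p(j)})$ fresh, and then proves three things you leave unaddressed: (i) the prior history, the sets $\mathcal{S}(\cdot),\mathcal{S}'(\cdot)$ and hence the thresholds at the trigger moment are determined by the exposed information (result (R1), via the truncated algorithm); (ii) triggering implies $\beta_{p(j)}$ lies in $[\PAC,1-\PRE)$ but not in $[\PAC',1-\PRE')$, so the probability is at most $\EE{c'\sim\nu_{v_j}}{(\PAC'-\PAC)+(\PRE'-\PRE)}$; and (iii) this expectation is at most $2\max_{a,b,c\in[q]}\EE{c'\sim\nu_{v_j}}{\delta_{v_{j-1},a,b}\,f^{v_j}_{c,c'}}\le 2C/\Delta$, which requires the optimization argument of Lemma~\ref{lemma-opt-problem} showing that removing one element from one neighbor's possible-state set changes the min and max of the filter over the product set by at most a one-coordinate variation. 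Your closing paragraph gestures at the right exposure order, but these ingredients are exactly where the work lies; also the correct constant is $2C/\Delta$, not $C/\Delta$, since both thresholds can move (a harmless constant).
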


Combined with the above analysis of \textbf{Phase I}, Lemma~\ref{lemma-phase-II} is sufficient to imply Lemma~\ref{lemma-convergence}.

The rest of the paper is dedicated to proving the upper bound for the \textbf{Phase II} of the algorithm in Lemma~\ref{lemma-phase-II}.
The proof is based on a notion of \concept{dependency chain}.

\section{The Dependency Chain}
\label{sec:running-time}

We introduce the notion of \concept{dependency chain}, for upper bounding the time complexity of  \textbf{Phase-II} algorithm (Algorithm~\ref{ResolveAllUpdates}). 

Fix all randomnesses of the main algorithm as defined in~\eqref{eq:correctness-randomness}: $m_v$ for each $v\in V$ and $(\ut{v}{i},\ppsl{v}{i},\bt{v}{i})$ for each $v\in V$ and every $1\le i\le m_v$, where all update times $\ut{v}{i}$ are distinct.

%Specifically, for each node $v \in V$, fix $m_v$ and $t(v,i),c(v,i),\beta(v,i)$ for $1\leq i \leq m_v$, where $t(v,i)$ and $c(v,i)$ are update time and proposal of the $i$-th update of node $v$, and $\beta(v, i) \in [0,1)$ is the random real number sampled in Line~\ref{alg:general-sample} of Algorithm~\ref{Resolve} to resolve the $i$-th update of node $v$. Thus our main algorithm becomes a deterministic procedure.

%Fix a node $v \in V$ and an integer $\ell > 0$. Suppose Algorithm~\ref{ResolveAllUpdates} at node $v$ has time complexity $\ell'$ such that $\ell' > \ell$. After the time unit $\ell'$, node $v$ resolves the update $(t(v,m_v),c(v,m_v))$. There are two cases for the update $(t(v,m_v),c(v,m_v))$.
Recall that we use the pair $(v, i)$ to identify the $i$-th update  of node $v$. 
In Algorithm~\ref{Resolve}, the resolution of each update $(v, i)$ is triggered by one of the two \concept{resolution conditions} ($\beta<\PAC$ and $\beta\ge 1-\PRE$) respectively at Line~\ref{alg:general-condition-1}  and Line~\ref{alg:general-condition-2}.
This can be further divided into to cases:
\begin{itemize}
\item \textbf{self-triggered resolution:} a resolution condition is triggered by computing $\PAC$ and $\PRE$ for the first time in~Line~\ref{alg:general-pac-pre} of Algorithm~\ref{Resolve}; %which triggers one of the conditions at Line~\ref{alg:general-condition-1} and Line~\ref{alg:general-condition-2} right away.
\item \textbf{resolution triggered by an adjacent update $(u, j)$:} a resolution condition is triggered by the recomputing of $\PAC$ and $\PRE$ upon processing the message ``\MAC'' or ``\MRE'' from a neighbor $u \in N_v$ that indicates the result for $u$ resolving its $j$-th update $(u,j)$.
%node $v$ resolves the update $(v, i)$ upon receiving the message ``\MAC'' or ``\MRE'' from a neighbor $u \in N_v$ that indicates the result for $u$ resolving its $j$-th update $(u,j)$, which causes recomputing the $\PAC$ and $\PRE$ that triggers one of the conditions at Line~\ref{alg:general-condition-1} and Line~\ref{alg:general-condition-2}.
\end{itemize}

%\todo{Two cases for resolving one update $(v, i)$}
%\todo{what is the resolution of an update $(v, i)$ is {\em caused } by the resolution an update $(u,i)$}
%\todo{Definition of the dependency chain}

%\todo{dependency chain is uniquely determined, has finite length}

%\todo{The relation between the length of the dependency chain and the time complexity of the algorithm}

%\todo{The path lemma}

Fixed all randomnesses, given an update $(v, i)$, we use $\DP{v}{i}$ to denote the {\em dependency chain that ends at $(v, i)$}. The dependency chain $\DP{v}{i}$ is a sequence of updates $(v_1,i_1),(v_2,i_2),\ldots,(v_{\ell},i_{\ell})$ where $(v_{\ell},i_{\ell})=(v,i)$, which is recursively constructed as:
\begin{itemize}
\item if the resolution of $(v, i)$ is a self-triggered resolution, then
\begin{align*}
\DP{v}{i} \triangleq \begin{cases}
(v, i) &\text{if } i = 1,\\
\DP{v}{i-1}, (v, i) &\text{if } i > 1;  	
 \end{cases}
\end{align*}
\item if the resolution of $(v, i)$ is triggered by the resolution of an adjacent update $(u, j)$, then 
\begin{align*}
\DP{v}{i}\triangleq \DP{u}{j},(v, i).
\end{align*}
\end{itemize}
Clearly such a dependency chain $\DP{v}{i}$ is uniquely constructed once all randomnesses are fixed.
%Given an execution of the main algorithm, for each update $(v,i)$, the dependency chain $\DP{v}{i}$ can be uniquely constructed. 
%
Furthermore, any dependency chain satisfies the following monotonicity.

\begin{proposition}\label{prop:dependency-chain-monotone}
For any two consecutive updates $(u,j), (w,k)$ in a dependency chain $\DP{v}{i}$, it holds that $(u,j)\prec(w,k)$ where the partial order $\prec$ is as defined in~\eqref{eq:correctness:poset}.
\end{proposition}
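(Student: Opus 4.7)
The plan is to unfold the recursive definition of $\DP{v}{i}$: whenever $(u,j)$ immediately precedes $(w,k)$ in $\DP{v}{i}$, the construction places $(u,j)$ there for exactly one of two reasons. Either (i) the resolution of $(w,k)$ was self-triggered, in which case necessarily $u = w$ and $j = k-1$, or (ii) it was triggered by $w$ processing the ``\MAC''/``\MRE'' message from a neighbor $u \in N_w$ announcing the resolution of $(u,j)$. I will treat these two cases separately and in each show $\ut{u}{j} < \ut{w}{k}$, which is exactly $(u,j) \prec (w,k)$ by~\eqref{eq:correctness:poset}.

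Case (i) is immediate because the update times $\ut{w}{1} < \ut{w}{2} < \cdots < \ut{w}{m_w}$ at any single node are strictly increasing by construction (and remain so after the order-preserving truncation described in Section~\ref{sec:distributed-simulation}).

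For case (ii), I will argue by contradiction: assume that $\ut{u}{j} > \ut{w}{k}$ (equality is ruled out since $u$ and $w$ are adjacent and truncation preserves distinctness among neighbors). Because $u$ sends its ``\MAC''/``\MRE'' messages in the order $j = 1, 2, \dots, m_u$ and the channel is FIFO, at the moment $w$ processes the message about $(u,j)$ its local counter is exactly $j_u = j$ before the update step in Algorithm~\ref{Resolve}, and becomes $j_u = j+1$ afterwards. Under the assumption $\ut{u}{j} > \ut{w}{k}$, we have $\ut{u}{j_u} > \ut{w}{k}$ both before and after the increment, so in both cases the definition in~\eqref{eq-def-Stu} puts $\mathcal{S}_{\ut{w}{k}}(u)$ in its first branch, giving $\{\Y{u}{\ell}\}$ for the unique $\ell$ with $\ut{u}{\ell} \le \ut{w}{k} < \ut{u}{\ell+1}$; this $\ell$ satisfies $\ell < j$ independently of the increment. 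Since $\Y{u}{\ell}$ was already recorded in $\mathbf{Y}$ before this message arrived, $\mathcal{S}_{\ut{w}{k}}(u)$ is literally unchanged, and therefore $\PAC$ and $\PRE$ (computed from the product $\bigotimes_{u' \in N_w} \mathcal{S}_{\ut{w}{k}}(u')$) are unchanged as well. Consequently neither condition $\beta < \PAC$ nor $\beta \ge 1 - \PRE$ could newly become true upon processing this message, contradicting the premise that the message triggered the resolution of $(w,k)$.

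The main obstacle is really just the careful tracking of how the stored index $j_u$ at $w$ is synchronized with the sequence of messages arriving from $u$; once this FIFO pairing is pinned down, the invariance of $\mathcal{S}_{\ut{w}{k}}(u)$ under a ``useless'' message is mechanical from~\eqref{eq-def-Stu}, and the two cases of the recursive construction cover every consecutive pair in the chain.
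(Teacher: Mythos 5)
Your proof is correct and follows essentially the same route as the paper: the self-triggered case is handled by the monotonicity of update times at a single node, and the adjacent-triggered case by observing that, were $\ut{u}{j}>\ut{w}{k}$, the set $\mathcal{S}_{\ut{w}{k}}(u)$ from~\eqref{eq-def-Stu} (and hence $\PAC$, $\PRE$) would be unchanged by the message about $(u,j)$, so it could not have triggered the resolution. Your explicit FIFO bookkeeping of the counter $j_u$ merely spells out what the paper leaves implicit.
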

\begin{proof}
First, if the resolution of $(w,k)$ is self-triggered, then $u=w$ and $j=k-1$, and hence $(u,j)\prec(w,k)$ clearly holds.
Otherwise, $w$ and $u$ are neighbors, and the resolution of $(w,k)$ is triggered by the resolution of $(u,j)$.
Then it must hold that $\ut{u}{j}<\ut{w}{k}$, since if otherwise,  by definition~\eqref{eq-def-Stu}, the construction of $\mathcal{S}_{\ut{w}{k}}(u)$ is unaffected by the resolution of the update $(u,j)$ at a later time $\ut{u}{j}>\ut{w}{k}$,
%the update $(w,k)$ at time $\ut{w}{k}$ can never be triggered by an update $(u,j)$ at a later time $\ut{u}{j}>\ut{w}{k}$. 
%
%This is because 
%the resolution of the update $(u,j)$ at a later time $\ut{u}{j}>\ut{w}{k}$  
%%only make changes to the entry $\Y{u}{j}$ with a later update time $\ut{u}{j}$ in the tuple $\mathbf{Y}$ maintained by node $w$, which does 
%can not affect the construction of $\mathcal{S}_{\ut{w}{k}}(u)$ by definition in~\eqref{eq-def-Stu}, 
thus will not trigger any resolution condition of the update $(w,k)$ if the condition has not been satisfied already.
Therefore, it always holds that $(u,j)\prec(w,k)$.
%
%the definition of 
%
%if otherwise $\ut{u}{j}>\ut{w}{k}$,
%
%
% once a message ``\MAC'' or ``\MRE'' indicating the resolution of $(u,j)$ is received by $w$ from the neighbor $u \in N_w$ when $w$ is resolving the update $(w,k)$, 
\end{proof}
%Furthermore, for any consecutive pairs $(u,j), (w,k)$ in a dependency chain $\DP{v}{i}$, it must hold that $(u,j)\prec(w,k)$ where the partial order $\prec$ is as defined in~\eqref{eq:correctness:poset}, because obviously .

%And any such dependency chain $\DP{v}{i}$ must have finite length because for each adjacent pairs $(u,j), (w,k)$

A direct consequence to the above proposition is that every dependency chain has finite length.

%$\mathcal{D}_{v, i}$, for each adjacent pairs $(v_j,i_j), (v_{j+1},i_{j+1})$ in  $\mathcal{D}_{v, i}$, the resolution of  update $(v_j,i_j)$ must occur earlier than the resolution of update $(v_{j+1},i_{j+1})$. Hence, each dependency chain $\mathcal{D}_{v, i}$ has finite length, where the length of $\mathcal{D}_{v, i}$ is the number of pairs in $\mathcal{D}_{v, i}$.

%The following proposition shows the relation between dependency chain and time complexity, which can be easily verified by the definition of time complexity.

A key application of the dependency chain is that it can be used to upper bound the time complexity of Algorithm~\ref{ResolveAllUpdates}. %which is easily by the  definition of dependency chain and the assumption of the asynchronous communication model such that each message is delivered within one time unit.
\begin{proposition}
\label{proposition-dc-time}
%Assume that all nodes start running the {Phase-II} algorithm (Algorithm~\ref{ResolveAllUpdates}) simultaneously.
%
Fix any node $v\in V$.
The number of time units for node $v$ staying in \textbf{Phase II}
%running Algorithm~\ref{ResolveAllUpdates} to termination 
is always upper bounded by the length of the dependency chain $\DP{v}{m_v}$, where $m_v$ denotes the total number of updates at node $v$.
\end{proposition}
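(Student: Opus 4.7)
The plan is to prove, by structural induction on the dependency chain, a per-update bound that directly implies the proposition. Let $T_0$ denote the earliest moment at which all nodes have entered their respective \textbf{Phase II}, and for every update $(v,i)$ let $\tau(v,i)$ denote the absolute time at which node $v$ finishes resolving $(v,i)$. By Definition~\ref{def:complexity-phase-II}, the time $v$ spends in \textbf{Phase II} equals $\tau(v,m_v)-T_0$, so the target is the stronger claim
\[
\tau(v,i)\ \leq\ T_0 + |\DP{v}{i}|
\]
for every update $(v,i)$, where $|\DP{v}{i}|$ denotes the number of updates in the chain. Proposition~\ref{prop:dependency-chain-monotone} guarantees every dependency chain is finite, so induction on $|\DP{v}{i}|$ is well-founded.

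For the base case $|\DP{v}{i}|=1$, we must have $i=1$ with a self-triggered resolution; node $v$ resolves $(v,1)$ by its very first local $\PAC$/$\PRE$ computation upon entering \textbf{Phase II}, which happens at a time no later than $T_0$, giving $\tau(v,1)\leq T_0 \leq T_0+1$. For the inductive step I split on the two cases in the definition of $\DP{v}{i}$. In the self-triggered case with $i>1$, $\DP{v}{i}=\DP{v}{i-1},(v,i)$, and $v$ moves from resolving $(v,i-1)$ to $(v,i)$ and self-resolves without any communication, so $\tau(v,i)=\tau(v,i-1)\leq T_0+|\DP{v}{i-1}|<T_0+|\DP{v}{i}|$ by the inductive hypothesis. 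In the adjacent-triggered case, $\DP{v}{i}=\DP{u}{j},(v,i)$, and the triggering ``\MAC''/``\MRE'' message is sent by $u$ at time $\tau(u,j)$ and delivered to $v$ within at most one time unit, so $\tau(v,i)\leq\tau(u,j)+1\leq T_0+|\DP{u}{j}|+1=T_0+|\DP{v}{i}|$. Specialising to $i=m_v$ yields the proposition.

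The one step I expect to require care is justifying that the triggering message contributes at most one time unit, i.e.\ the inequality $\tau(v,i)\leq\tau(u,j)+1$ in the adjacent-triggered case. Any ``\MAC''/``\MRE'' message from $u$ that is delivered before $v$ begins processing $(v,i)$ has already been absorbed into the tuples $\mathbf{j}$ and $\mathbf{Y}$, so it can only influence $(v,i)$ via the initial self-triggered computation of $\PAC$ and $\PRE$ at Line~\ref{alg:general-pac-pre}. Consequently, whenever the resolution of $(v,i)$ is attributed to the message about $(u,j)$, that message must arrive strictly during $v$'s processing of $(v,i)$, so the single-time-unit delivery bound applies cleanly and closes the induction.
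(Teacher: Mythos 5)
Your induction is essentially the paper's own argument made explicit: the paper anchors at the first update of the chain (necessarily some $(u,1)$ with a self-triggered resolution, hence resolved no later than the moment $T_0$ at which all nodes have entered \textbf{Phase II}) and then charges at most one time unit per link of $\DP{v}{m_v}$, which is exactly your bound $\tau(v,i)\le T_0+|\DP{v}{i}|$ unrolled. The only caveat is in your justification of $\tau(v,i)\le\tau(u,j)+1$: by the synchronization remark, the triggering ``\MAC''/``\MRE'' message may have been \emph{delivered} while $v$ was still in \textbf{Phase I} and only \emph{processed} from the queue at $v$'s Phase-II entry, so it need not arrive during $v$'s processing of $(v,i)$ and that inequality can fail; however, in that case the resolution occurs at $v$'s entry time, which is at most $T_0$, so your inductive claim $\tau(v,i)\le T_0+|\DP{v}{i}|$ still holds and the proof goes through.
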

\begin{proof}
Consider the main algorithm. Let $\mathcal{T}$ be the earliest moment at which all nodes have entered their respective \textbf{Phase II} and $\mathcal{T}'$ the moment at which $v$ terminates. 
The last update in dependency chain $\DP{v}{m_v}$ is  $(v, m_v)$, which must be resolved at the moment $\mathcal{T}'$. Let $(u, i)$ denote the first update in dependency chain $\DP{v}{m_v}$. Then $i = 1$. Suppose $(u, i)$ is resolved at the moment $\mathcal{T}''$. It must hold that $\mathcal{T}'' \leq \mathcal{T}$, because $(u, i)$ must be resolved once $u$ enters \textbf{Phase-II}. 
By definition~\ref{def:complexity-phase-II}, the time for $v$ staying in \textbf{Phase-II} is $\mathcal{T}'-\mathcal{T}$, which is upper bounded by $\mathcal{T}' - \mathcal{T}''$.
By the definition of the dependency chain $\DP{v}{m_v}$, it is easy to see that the number of time units in the time duration between $\mathcal{T}''$ and $\mathcal{T}'$ is at most the length of $\DP{v}{m_v}$.
\end{proof}

\subsection{Proof of Lemma~\ref{lemma-phase-II}}
%Now, we are ready to prove Lemma~\ref{lemma-resolveallupdates}. 

Fix a node $v \in V$. 
Let $R_v$ denote the number of times units that node $v$ stays in \textbf{Phase II}. 
%
%We bound the tail probability of the random variable $R_2^v$. 

Fix an integer $\ell > 0$. 
We bound the probability that $R_v \geq \ell$.
By Proposition~\ref{proposition-dc-time} and Proposition~\ref{prop:dependency-chain-monotone}, if $R_v \geq \ell$, then there exist a sequence of nodes $v_1,v_1,\ldots,v_{\ell}\in V$  and a sequence of times $0<t_1<t_2<\cdots<t_\ell<T$ such that 
\begin{enumerate}[label=(\roman*)]
\item $v_{\ell}=v$ and $v_{j+1}\in N_{v_j}\cup\{v_j\}$ for every $1\le j\le \ell-1$;\label{item-1}
\item for every $1\le j\le \ell$, $t_j$ is an update time of $v_j$, and we denote by $k_j$ the order of this update time at $v_j$, i.e.~$t_j=\ut{v_j}{k_j}$;\label{item-2}
\item for every $2\le j\le \ell$, $(v_{j-1},k_{j-1})\prec(v_{j}, k_{j})$, and if $v_{j-1}\neq v_{j}$, the resolution of the update $(v_{j}, k_{j})$ is triggered by the resolution of the adjacent update $(v_{j-1},k_{j-1})$.\label{item-3}
\end{enumerate}
We call such a sequence of nodes $v_1,v_1,\ldots,v_{\ell}\in V$ a \concept{dependency path} if there exists a sequence of times $0<t_1<t_2<\cdots<t_\ell<T$  together with  $v_1,v_1,\ldots,v_{\ell}$ satisfying the above conditions.

We first prove the simple $O(\Delta T + \log n)$ upper bound. Fix a sequence of nodes $v_1,v_2,\ldots,v_{\ell}$ satisfying the condition in~\ref{item-1}. Consider the event that there exists a sequence of times $0<t_1<t_2<\cdots<t_\ell<T$ such that the Poisson clock at node $v_i$ rings at time $t_i$. By~\cite[Observation~3.2]{hayes2007general}, the probability of this event is at most $\left(\frac{\mathrm{e}T}{\ell} \right)^\ell$. A union bound over at most $(\Delta  + 1)^\ell$ possible paths satisfying the condition~\ref{item-1} implies
\begin{align*}
\Pr[\,R_v \geq \ell\,] \leq (\Delta + 1)^{\ell}\left(\frac{\mathrm{e}T}{\ell} \right)^\ell \leq \left(\frac{2\mathrm{e}T\Delta}{\ell} \right)^\ell.	
\end{align*}
Choosing $\ell =  4\mathrm{e}T\Delta + 2\log n $, we have
\begin{align*}
	\Pr[R_v \geq  4\mathrm{e}T\Delta + 2\log n] \leq \left(\frac{1}{2} \right)^{2\log n} = \frac{1}{n^2}.
\end{align*}
Hence, node $v$ stays in \textbf{Phase II} for at most $O(\Delta T + \log n)$ time units with probability at least $1- 1/n^2$. Remark that this proof does not need to use Condition~\ref{condition-Lipschitz}.
%the following property in~\ref{item-3}: the resolution of the update $(v_{j}, k_{j})$ is triggered by the resolution of the adjacent update $(v_{j-1},k_{j-1})$ if $v_{j-1} \neq v_{j}$.

%\todo{a simple proof of $O(\Delta T+\log n)$ bound}

We now prove the $O(T + \log n)$ upper bound under Condition~\ref{condition-Lipschitz}.
Let $0\leq s < \ell$ be an integer.
We use  $\mathcal{P}(\ell, s)$ to denote the set of all sequences $ v_1,v_2,\ldots,v_\ell\in V$ satisfying that $v = v_\ell$, $v_{j+1} \in N(v_j) \cup \{v_j\}$ for every $1\leq j <\ell$, and $s = |\{1 \leq j \leq \ell-1 \mid v_j \neq v_{j+1} \}|$.

%We define a set of path $\mathcal{P}(\ell, s)$. Let $P = v_1v_2,\ldots,v_\ell$ be a path. The path $P \in \mathcal{P}(\ell,s)$ if and only if
%\begin{itemize}
%\item $v = v_\ell$;
%\item for any $1\leq j <\ell$, $v_{j+1} \in N(v_j) \cup \{v_j\}$;
%\item $s = |\{1 \leq j \leq \ell-1 \mid v_j \neq v_{j+1} \}|$.	
%\end{itemize}

%Fix a path $P \in \mathcal{P}(\ell, s)$ where $P =  v_1,v_2,\ldots,v_\ell$. We say $P$ is a {\em  dependency path} if there exists a time sequence $t_1<t_2<\ldots<t_\ell$ such that the properties in Lemma~\ref{lemma-path} hold with respect to path $P$ and time sequence $t_1<t_2<\ldots<t_\ell$.
%By Lemma~\ref{lemma-path} and a union bound over all paths, we have

By the above argument, we have
\begin{align}
\label{eq-union-bound-path}
\Pr[\,R_v \geq \ell\,] \leq \sum_{s = 0}^{\ell - 1}	\sum_{P \in \mathcal{P}(\ell, s)}\Pr[\,P \text{ is a dependency path}\,].
\end{align}

We then bound the probability that a sequence $P$ is a dependency path.
Let random variable $\mathcal{N} \in \mathbb{Z}_{\geq 0}$ denote the total number of updates in the entire network before time $T$.
Apparently, $\mathcal{N}$ is given by the total number of times that $n$ rate-1 Poisson clocks ring up to time $T$ and follows the Poisson distribution with mean $nT$. 
We have
\begin{align}
\label{eq-partition-by-F}
\Pr[\,P \text{ is a dependency path}\,]	 &= \sum_{m \geq 0}\Pr[\,\mathcal{N} = m\,]\Pr[\,P \text{ is a dependency path} \mid \mathcal{N}=m\,]\notag\\
&=\mathrm{e}^{-nT} \sum_{m \geq 0}\frac{(nT)^m}{m!}\Pr[\,P \text{ is a dependency path} \mid \mathcal{N}=m\,].
\end{align}

Conditioning on $\mathcal{N}=m$ for a fixed integer $m\ge 0$, we define the following random variables:
\begin{align}
\label{eq-variables}
(U_1,T_1,C_1,\beta_1),(U_2,T_2,C_2,\beta_2),\ldots,(U_m,T_m,C_m,\beta_m),	
\end{align}
where each $U_i \in V$ is a random node, $0<T_1<T_2<\ldots<T_m<T$ are random times, each $C_i \in [q]$ is a random proposal distributed as $\nu_{U_i}$, and each $\beta_i \in [0,1)$ is uniformly distributed over $[0,1)$, such that:
\begin{itemize}
\item the Poisson clock at node $U_i$ rings at time $T_i$;
\item node $U_i$ proposes $C_i$ for its update at time $T_i$;
\item node $U_i$ samples the random real number $\beta_i \in [0,1)$ at Line~\ref{alg:general-sample} of Algorithm~\ref{Resolve}  to resolve its update at time $T_i$.
\end{itemize}

We now use $\UD_k$ for $1\leq k \leq m$ to identify the update of node $U_k$ at time $T_k$ with proposal $C_k$.
%denote the update of node $U_k$ with update time $T_k$ and proposal $C_k$. 
%Consider the \textbf{Phase II} of the main algorithm. 
%For any $0< j <i\le m$, we say that the resolution of an update $(T_{i}, C_{i})$ is \concept{caused} by the resolution of update $(T_{j}, C_{j})$ if  the following occurs: in Algorithm~\ref{Resolve} at node $U_i$, $U_{i}$ resolves its update $(T_{i}, C_{i})$ upon $U_{i}$ receiving the message ``\MAC'' or ``\MRE'' from $U_{j}$ indicating the result of $U_{j}$ resolving update $(T_{j}, C_{j})$.

%\todo{
%For $0< j <i\le m$, we say that the resolution of an update $(T_{i}, C_{i})$ is \concept{caused} by the resolution of update $(T_{j}, C_{j})$ if $U_i=U_j$ or $U_{i}$ resolves its update $(T_{i}, C_{i})$ upon $U_{i}$ receiving the message ``\MAC'' or ``\MRE'' from $U_{j}$ indicating the result of $U_{j}$ resolving update $(T_{j}, C_{j})$.}

Conditioning on $\mathcal{N} = m$, if $P=v_1,v_2,\ldots,v_{\ell}$ is a dependency path, then from the above discussion we know that there exist $\ell$ indices $1\leq p(1)<p(2)<,...,<p(\ell) \leq m$ such that the following events occur simultaneously:
\begin{itemize}
\item \textbf{event} $\mathcal{A}_1$: for all $1\leq j \leq \ell$, $U_{p(j)} = v_j$;
\item \textbf{event} $\mathcal{A}_2^{(j)}$, where $2\leq j \leq \ell$: either $U_{p(j-1)} = U_{p(j)}$ or the resolution of $\UD_{p(j)}$ is triggered by the resolution of the adjacent update $\UD_{p(j-1)}$.

%$U_{p(j+1)}$ resolves its update $(T_{p(j+1)}, C_{p(j+1)})$ upon $U_{p(j+1)}$ receiving the message ``\MAC'' or ``\MRE'' from $U_{p(j)}$ which indicates the result of resolving update $(T_{p(j)}, C_{p(j)})$, 
%where for any $1\leq  k\leq m$, the pair $(T_{k}, C_{k})$ denotes the update for node $U_{k}$ at time $T_{k}$ with proposal $C_{k}$.

%assume the event $\mathcal{A}_1$ occurs, thus $U_{p(k)}=v_k$ for all $1\leq k\leq \ell$; 
%suppose $T_{i_j}$ is the $K_j$-th ring of Poisson clock at vertex $v_j$ and $T_{i_{j+1}}$ is the $K_{j+1}$-th ring of Poisson clock at vertex $v_{j+1}$; 
%if $v_j \neq v_{j+1}$, then node $U_{p(j+1)} = v_{j+1}$ resolves its update $(T_{p(j+1)}, C_{p(j+1)})$\footnote{For any $1\leq  k\leq m$, the pair $(T_{k}, C_{k})$ denotes the update for node $U_{k}$ at time $T_{k}$ with proposal $C_{k}$.} right after $v_{j+1}$ received and processed the message $\mathcal{M}$ from $U_{p(j)} = v_{j}$ , where $\mathcal{M}$ indicates whether the update $(T_{p(j)}, C_{p(j)})$ is accepted by node $v_{j}$. 
\end{itemize}
%We remark that the event $\mathcal{A}_2^{(j)}$ always occurs if $v_j = v_{j+1}$.

Fix any $\ell$ indices $1\leq p(1)<p(2)<,...,<p(\ell) \leq m$. 
We bound the following probability
\begin{align}
\label{eq-prob-A1-A2}
&\Pr\left[\,\mathcal{A}_1 \wedge \left(\bigwedge_{j=2}^{\ell} \mathcal{A}_2^{(j)}\right) \mid \mathcal{N}=m \,\right]\notag\\	
=&\Pr[\,\mathcal{A}_1\mid \mathcal{N}=m\,]\prod_{j=2}^{\ell}\Pr\left[\,\mathcal{A}^{(j)}_2 \mid \mathcal{N}=m \wedge \mathcal{A}_1 \wedge \left(\bigwedge_{k=2}^{j-1} \mathcal{A}_2^{(k)}\right)\,\right].
\end{align}
%Due to the memoryless property of exponential random variable, conditioning on any historical ringings of Poisson clocks, once aother Poisson clock rings, such clock is uniformly distributed among all $n$ clocks. 
%
Note that conditioning on $\mathcal{N} = m$, each $U_i$ is uniformly and  independently distributed in  $V$.
This can be proved by an alternative equivalent process: there is a single rate-$n$ Poisson clock, and once the clock rings, a node is picked uniformly at random. 
Then conditioning on the rate-$n$ Poisson clock rings for $m$ times, each $U_i$ is uniformly and independently distributed in $V$.
Therefore,
\begin{align}
\label{eq-prob-A1}
\Pr[\,\mathcal{A}_1\mid \mathcal{N}=m\,] = \left(\frac{1}{n}\right)^\ell.
\end{align}
We further make the following claim on events $\mathcal{A}_2^{(j)}$.
\begin{claim}
\label{claim-conditional-prob}
Assume that Condition~\ref{condition-Lipschitz} holds.
For any $2\leq j \le\ell$ where $v_j \neq v_{j-1}$, it holds that
\begin{align*}
&\Pr\left[\,\mathcal{A}^{(j)}_2 \mid \mathcal{N}=m \wedge \mathcal{A}_1 \wedge \left(\bigwedge_{k=2}^{j-1} \mathcal{A}_2^{(k)}\right)\,\right] \leq \frac{2C}{\Delta},
\end{align*}	
where $C$ is the constant in Condition~\ref{condition-Lipschitz}.
\end{claim}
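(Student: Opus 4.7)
The plan is to analyze what changes when node $v_j$ processes the \MAC/\MRE\ message from $u = v_{j-1}$ announcing the resolution of the update $\UD_{p(j-1)}$. By the FIFO discipline, when this message is processed the counter $j_u$ at $v_j$ equals $k_{j-1}$ just before and becomes $k_{j-1}+1$ just after. Comparing the formula~\eqref{eq-def-Stu} for $\mathcal{S}_{\ut{v_j}{k_j}}(u)$ at these two values of $j_u$, the pair $\{a,b\}=\{\widehat{Y}_u^{(k_{j-1}-1)},\,C_{p(j-1)}\}$ collapses into the single element $\widehat{Y}_u^{(k_{j-1})}$, which equals $a$ or $b$ according to whether the message is \MRE\ or \MAC. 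Writing $x\in\{a,b\}$ for the removed element and $y$ for the survivor, the only effect on $\CC{v_j}{k_j}$ is that configurations with $u$-coordinate $x$ disappear.

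Next, I would bound the jump in the thresholds. Let $\PAC^{\mathrm{old}},\PRE^{\mathrm{old}}$ denote their values just before processing the message and $\PAC^{\mathrm{new}},\PRE^{\mathrm{new}}$ the values just after; since $\CC{v_j}{k_j}$ only shrinks, both thresholds can only increase. Fix $c=\widehat{Y}_{v_j}^{(k_j-1)}$ and $c'=C_{p(j)}$, and let $\tau^\star$ be a minimizer of $f^{v_j}_{c,c'}$ over the old set. Define $\widetilde\tau$ by setting $\widetilde\tau_u=y$ and $\widetilde\tau_w=\tau^\star_w$ for $w\neq u$; then $\widetilde\tau$ lies in the new set, so $\PAC^{\mathrm{new}}\le f^{v_j}_{c,c'}(\widetilde\tau)\le f^{v_j}_{c,c'}(\tau^\star)+\delta_{u,a,b}\,f^{v_j}_{c,c'}=\PAC^{\mathrm{old}}+\delta_{u,a,b}\,f^{v_j}_{c,c'}$ by the definition of the operator. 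Applying the same argument to $1-f^{v_j}_{c,c'}$ gives $\PRE^{\mathrm{new}}-\PRE^{\mathrm{old}}\le\delta_{u,a,b}\,f^{v_j}_{c,c'}$.

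For $\mathcal{A}_2^{(j)}$ to occur, this particular message must be the trigger, forcing $\beta_{p(j)}\in[\PAC^{\mathrm{old}},\PAC^{\mathrm{new}})\cup[1-\PRE^{\mathrm{new}},1-\PRE^{\mathrm{old}})$, an interval of total length at most $2\,\delta_{u,a,b}\,f^{v_j}_{c,c'}$. Under the conditioning $\mathcal{N}=m\wedge\mathcal{A}_1\wedge\bigwedge_{k=2}^{j-1}\mathcal{A}_2^{(k)}$, the events $\mathcal{A}_2^{(k)}$ for $k<j$ do not involve $\beta_{p(j)}$ or $C_{p(j)}$, so this pair is independent of everything else and distributed as $\mathrm{Uniform}[0,1)\times\nu_{v_j}$. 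Conditioning further on all variables except $(\beta_{p(j)},C_{p(j)})$ fixes $a,b,c$; integrating $\beta_{p(j)}$ first and then $c'\sim\nu_{v_j}$ yields
\begin{align*}
\Pr\!\left[\mathcal{A}_2^{(j)}\,\bigm|\,\mathcal{N}=m\wedge\mathcal{A}_1\wedge\textstyle\bigwedge_{k=2}^{j-1}\mathcal{A}_2^{(k)}\right]\ \le\ 2\,\mathbb{E}_{c'\sim\nu_{v_j}}\!\left[\delta_{u,a,b}\,f^{v_j}_{c,c'}\right]\ \le\ \frac{2C}{\Delta}
\end{align*}
by Condition~\ref{condition-Lipschitz}.

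The main obstacle I expect is the bookkeeping that underpins the first step: confirming that when the triggering message is processed the counter $j_u$ really equals $k_{j-1}$ and that no other message has been processed in between that could have collapsed the set differently. This needs a small induction using FIFO delivery together with the invariant that $(\mathbf{j},\mathbf{Y})$ is carried over correctly between successive iterations of the outer loop of Algorithm~\ref{ResolveAllUpdates}. A minor side case is $a=b$, where $\mathcal{S}(u)$ does not actually shrink and $\mathcal{A}_2^{(j)}$ cannot occur, so the claim is vacuous there.
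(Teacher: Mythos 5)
Your core bound is correct and, for the key step, more direct than the paper's. The paper controls $(\PAC'-\PAC)+(\PRE'-\PRE)$ by introducing the optimization problem $\mathfrak{P}(v_j,v_{j-1})$ and proving (Claim~\ref{claim-optsol} and Lemma~\ref{lemma-opt-problem}) that its optimum is achieved with $|S_2(u)|=1$, then doing a two-element case analysis. Your coordinate-swap argument --- take a minimizer $\tau^\star$ of $f^{v_j}_{c,c'}$ over the old product set, and if $\tau^\star_u$ equals the removed element $x$, replace it by the surviving element $y$ (noting $\{x,y\}\subseteq\{a,b\}$) to land inside the new set --- directly gives $\PAC^{\mathrm{new}}\le\PAC^{\mathrm{old}}+\delta_{u,a,b}f^{v_j}_{c,c'}$ with no reduction to the two-element case, works verbatim when $\mathcal{S}^{\mathrm{old}}(u)$ has many elements, and the symmetric argument for $\PRE$ finishes the pointwise bound of $2\,\delta_{u,a,b}f^{v_j}_{c,c'}$. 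This short-circuits the content of Section~\ref{section-prove-opt-lemma}.

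However, your conditioning argument is a genuine gap, though one you honestly flag. The sentence ``the events $\mathcal{A}_2^{(k)}$ for $k<j$ do not involve $\beta_{p(j)}$ or $C_{p(j)}$'' is exactly what requires proof in an asynchronous adversarial model: the set $\mathcal{S}^{\mathrm{old}}(u)$, the current state $c=\widehat{Y}_{T_{p(j)}-\epsilon}(v_j)$, and the moment at which the message $\Msg{p(j-1)}{v_j}$ is processed could a priori depend on later randomness, and likewise the conditioning event could bias $(\beta_{p(j)},C_{p(j)})$. The paper settles this in result~\ref{result-A} by first fixing the filtrations $\mathcal{F}_D$ (message delays, determined adversarially from the input), $\mathcal{F}_1$ (Poisson clocks, hence $\mathcal{N}$ and all $U_i,T_i$), and $\mathcal{F}_2$ (the first $p(j)-1$ proposals and coins), and then analyzing a truncated run of the algorithm that halts as soon as it would touch any unfixed variable; an induction parallel to the correctness proof shows this run reproduces $(\widehat{Y}_t)_{t\in[0,T_{p(j)})}$ and determines $\mathcal{A}_1$, all $\mathcal{A}_2^{(k)}$ with $k\le j-1$, the sets $\mathcal{S}(w),\mathcal{S}'(w)$, and $c$ as deterministic functions of $\mathcal{F}_D\wedge\mathcal{F}_1\wedge\mathcal{F}_2$, while $(\beta_{p(j)},C_{p(j)})$ remains a fresh independent pair with distribution $\mathrm{Uniform}[0,1)\times\nu_{v_j}$. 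Only with that in hand is your integration over $\beta_{p(j)}$ and $c'$ justified. Your FIFO observation that $j_u=k_{j-1}$ at the triggering moment is a correct piece of the determinism argument but does not by itself establish the measurability and independence you are invoking.
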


Since $P \in \mathcal{P}(\ell, s)$, there are exactly $s$ indices $j$ such that   $v_j \neq v_{j+1}$. Combining\eqref{eq-prob-A1-A2},~\eqref{eq-prob-A1} and Claim~\ref{claim-conditional-prob} yields
\begin{align*}
\Pr\left[\,\mathcal{A}_1 \wedge \left(\bigwedge_{j=2}^{\ell} \mathcal{A}_2^{(j)}\right) \mid \mathcal{N}=m \,\right] \leq \left(\frac{1}{n}\right)^\ell\left(\frac{2C}{\Delta}\right)^s.
\end{align*}
Taking a union bound over $\binom{m}{\ell}$ possible indices $1\leq p(1)<p(2)<,...,<p(\ell) \leq m$ yields
\begin{align*}
\Pr[\,P \text{ is a dependency path} \mid \mathcal{N}=m\,] \leq \binom{m}{\ell}\left(\frac{1}{n}\right)^\ell\left(\frac{2C}{\Delta}\right)^s.
\end{align*}
Finally, note that $|\mathcal{P}(\ell, s)| \leq \binom{\ell-1}{s}\Delta^s$. Together with~\eqref{eq-union-bound-path} and~\eqref{eq-partition-by-F}  it gives that
\begin{align*}
\Pr[R_v \geq \ell] &\leq \sum_{s = 0}^{\ell - 1}	\sum_{P \in \mathcal{P}(\ell, s)}\mathrm{e}^{-nT}\sum_{m = 0}^{\infty}\frac{(nT)^m}{m!}\binom{m}{\ell}\left(\frac{1}{n}\right)^\ell\left(\frac{2C}{\Delta}\right)^s\\
&\leq  \sum_{s = 0}^{\ell - 1}\binom{\ell - 1}{s}\Delta^{s}\mathrm{e}^{-nT}\sum_{m = \ell}^{\infty}\frac{(nT)^m}{m!}\binom{m}{\ell}\left(\frac{1}{n}\right)^\ell\left(\frac{2C}{\Delta}\right)^s\\
&\leq \frac{T^\ell}{\ell!}\left(1 + 2C\right)^\ell\\
%\end{align*}
%Note that $\ell!\geq \left(\frac{\ell}{\mathrm{e}}\right)^\ell$. We have
%\begin{align*}
%\Pr[R_2^v \geq \ell] 
&\leq \left( \frac{T\mathrm{e}(1+2C)}{\ell} \right)^\ell.
\end{align*}
Choosing $\ell =  2\mathrm{e}\left(1 + 2C\right)T + 2\log n $, we have
\begin{align*}
	\Pr[R_v \geq  2\mathrm{e}\left(1 + 2C\right)T + 2\log n] \leq \left(\frac{1}{2} \right)^{2\log n} = \frac{1}{n^2}.
\end{align*}
Hence, node $v$ stays in \textbf{Phase II} for at most $O(T + \log n)$ time units with probability at least $1- 1/n^2$.
This proves Lemma~\ref{lemma-phase-II}.
%\qed

%\todo{edit ends here}
\begin{proof}[Proof of Claim~\ref{claim-conditional-prob}]
By the definition of fully-asynchronous message-passing model in Section~\ref{section-model}, all the message delays are determined by an {adversarial scheduler} who is adaptive to the entire input. Given the input of each node, we can fix all the message delays in \textbf{main algorithm}. Denote the delays of all messages as $\mathcal{F}_D$. Note that $\mathcal{F}_D$ satisfies the constraint that each unidirectional channel is a reliable FIFO channel.

Recall that $\mathcal{N}$ is the total number of updates in the entire network before time $T$. Consider the random variables defined in~\eqref{eq-variables}:
\begin{align}
\label{eq-update-seq}
(U_1,T_1,C_1,\beta_1),(U_2,T_2,C_2,\beta_2),\ldots,(U_\mathcal{N},T_\mathcal{N},C_\mathcal{N},\beta_\mathcal{N}),
\end{align}
where $T_1<T_2<\ldots<T_\mathcal{N}$. We first fix the randomness of all Poisson clocks:
\begin{itemize}
\item $\mathcal{F}_1$: fix $\mathcal{N} = m$ and fix the values of all $U_1,U_2,\dots,U_m$ and $T_1, T_2,\ldots,T_m$.	
\end{itemize}
Recall $P=v_1,v_2,\ldots,v_\ell$ is the fixed path and $1\leq p(1)<p(2)<\ldots<p(\ell) \leq m$ is the $\ell$ fixed indices.
Fix an integer $2\leq j \leq\ell$ such that $v_{j-1} \neq v_{j}$.
We then fix the randomness of the first $p(j) - 1$ updates in sequence~\eqref{eq-update-seq}:
\begin{itemize}
\item  $\mathcal{F}_2$: fix the values of all $C_k, \beta_k$ for $1\leq k \leq p(j) - 1$.
\end{itemize}

%For each $1\leq k \leq m$ and $u \in N(U_k)$, we use $\Msg{k}{u}$ denote the message 
%Furthermore, fix the the delays of messages as follows:
%\begin{itemize}
%\item  $\mathcal{F}_3$: Fix the delays of messages $\mathcal{M}_k(u)$ for all $1\leq k < p(j+1)$ and all $u \in N(U_k)$, where  $\mathcal{M}_k(u)$ denotes the message from $U_k$ to $u$ indicating the result of $U_k$ resolving the update $(T_k,C_k)$. 
% All fixed the delays must be  consistent with the message-passing model, where each unidirectional channel is a reliable FIFO channel.
%\end{itemize}
We claim that the following two results hold.
\begin{enumerate}[label=(R\arabic*)]
\item Given any $\mathcal{F}_D$, $\mathcal{F}_1$ and $\mathcal{F}_2$, it holds that $\mathcal{N} = m$ and the occurrences of events $\mathcal{A}_1$ and $\mathcal{A}_2^{(k)}$ for all $2\leq k \leq j-1$ are fully determined.	\label{result-A}
\item For any $\mathcal{F}_D$, $\mathcal{F}_1$ and $\mathcal{F}_2$ under which $\mathcal{A}_1$ and all $\mathcal{A}_2^{(k)}$ for $2\leq k \leq j-1$ occur,
%Suppose $\mathcal{A}_1$ and $\mathcal{A}_2^{(k)}$ for all $1\leq k \leq j-1$ occur given $\mathcal{F}_D,\mathcal{F}_1,\mathcal{F}_2$. 
if Condition~\ref{condition-Lipschitz} is satisfied, then: 
\begin{align*}
\Pr[\,\mathcal{A}_2^{(j)} \mid \mathcal{F}_D \land\mathcal{F}_1\land\mathcal{F}_2\,] \leq \frac{2C}{\Delta},	
\end{align*} 
where the probability takes over the randomness of unfixed variables $C_{p(j)}$ and $\beta_{p(j)}$.\label{result-B}
\end{enumerate}
The Claim~\ref{claim-conditional-prob} is proved by combining above two results.

We first prove result~\ref{result-A}.
It is easy to see that $\mathcal{F}_1$ determines whether the event $\mathcal{A}_1$ occurs.
We prove that the occurrences of events $\mathcal{A}_2^{(k)}$ for all $2\leq k \leq j-1$ are determined. Given $\mathcal{F}_1$ and $\mathcal{F}_2$, the evolution of the continuous-time Metropolis chain $Y_t$ from $t = 0$ to $t = T_{p(j)}-\epsilon$ is fully determined. Given $\mathcal{F}_D,\mathcal{F}_1$ and $\mathcal{F}_2$, define algorithm $\mathcal{A}$ as a modified version of the \textbf{main algorithm} such that for each node $v \in V$:
\begin{itemize}
\item in \textbf{Phase I}, node $v$ generates all the update times and random proposals conditioning on $\mathcal{F}_1$ and $\mathcal{F}_2$, then exchanges this information with neighbors as the \textbf{main algorithm}.
\item in \textbf{Phase II}, node $v$ resolves each update as the \textbf{main algorithm} and $v$ samples $\beta$ in Line~\ref{alg:general-sample} of Algorithm~\ref{Resolve} conditioning on $\mathcal{F}_2$;  once node $v$ needs to use any variable whose value is not fixed by $\mathcal{F}_1$ and $\mathcal{F}_2$, then the algorithm $\mathcal{A}$ at node $v$ terminates immediately.	
\end{itemize}

By a similar induction argument in Section~\ref{section-proof-correct}, it can be verified that algorithm $\mathcal{A}$ simulates the  continuous-time Metropolis chain up to time $T_{p(j)}-\epsilon$ and generates $(\widehat{Y}_t)_{t \in [0, T_{p(j)})}$.
The algorithm $\mathcal{A}$ resolves all the updates $(U_k, T_k, C_k, \beta_k)$ for $1 \leq k \leq p(j) - 1$ in~\eqref{eq-update-seq}.
In algorithm~$\mathcal{A}$, for each node $v \in V$, the \textbf{Phase II} of node $v$ is fully determined by $\mathcal{F}_D, \mathcal{F}_1$ and $\mathcal{F}_2$.
For each $2\leq k \leq j-1$, by the definition of event $\mathcal{A}_2^{(k)}$, its occurrence is fully determined given $\mathcal{F}_D,\mathcal{F}_1$ and $\mathcal{F}_2$.

We then prove result~\ref{result-B}. Now, we only consider $\mathcal{F}_D$, $\mathcal{F}_1$ and $\mathcal{F}_2$ under which $\mathcal{A}_1$ and all $\mathcal{A}_2^{(k)}$ for $2\leq k \leq j-1$ occur.
Hence, we assume $U_{p(j-1)}=v_{j-1}$ and $U_{p(j)} = v_j$. Recall that $v_j \neq v_{j-1}$ and 
 $\mathsf{UD}_k$ denotes the $k$-th update $(U_k, T_k, C_k, \beta_k)$ in~\eqref{eq-update-seq}.
The event $\mathcal{A}_2^{(j)}$ occurs if and only if the update $\UD_{p(j)}$ is triggered by the resolution of the adjacent update $\UD_{p(j-1)}$.
In Algorithm~\ref{Resolve}, to resolve the update $\UD_{p(j)}$, node $v_j$ needs to compute the set $\mathcal{S}_{T_{p(j)}}(w)$ defined in~\eqref{eq-def-Stu} for all $w \in N(v_j)$.
For any $1\leq k\leq m$ and $w \in N(U_k)$, let $\Msg{k}{w}$ denote the message sent from $U_k$ to $w$ that indicates whether the update $\UD_k$ is accepted.
%
%We have the following observation about $\mathcal{S}_{T_{p(j)}}(u)$.
When computing the set $\mathcal{S}_{T_{p(j)}}(w)$ for $w \in N(v_j)$, node $v_j$ only uses the following information:
\begin{enumerate}[label=(I\arabic*)]	
\item $\{T_k \mid 1\leq k \leq m \land U_k = w\} \cup \{T_{p(j)}\}$\label{infa-t};
\item $\{C_k \mid k \leq p(j) - 1 \land U_k = w\} \cup \{\widehat{Y}_0(w)\}$;\label{info-a}
\item all messages $\Msg{k}{v_j}$ received by $v_j$ satisfying $k \leq p(j)-1$ and $U_k = w$.\label{info-b}
\end{enumerate}
By the definition in~\eqref{eq-def-Stu}, node $v_j$ computes $\mathcal{S}_{T_{p(j)}}(w)$ based on the current $j_w$ and $(\widehat{Y}_w^{(j)})_{0 \leq j\leq j_w}$, where $j_w$ and $(\widehat{Y}_w^{(j)})_{0 \leq j\leq j_w}$ are computed according to the messages received from $w$. Note that computing $\mathcal{S}_{T_{p(j)}}(w)$  only needs to use the information about the updates at node $w$ whose update times are before $T_{p(j)}$, because  $\mathcal{S}_{T_{p(j)}}(w)$ defined in~\eqref{eq-def-Stu} is the set of possible states for $w$ at time $T_{p(j)}$. Also note that in \textbf{main algorithm}, all update times are represented with bounded precision satisfying~\eqref{eq:correctness:poset-truncate}, the partial order $\prec$ in~\eqref{eq:correctness:poset} is preserved. Thus, node $v_j$ only uses information in~\ref{infa-t},~\ref{info-a}, and~\ref{info-b} to compute $\mathcal{S}_{T_{p(j)}}(w)$.

Given $\mathcal{F}_D$, $\mathcal{F}_1$ and $\mathcal{F}_2$, the sets in~\ref{infa-t} and~\ref{info-a} are fixed and node $v_j$ knows these sets once $v_j$ enters the \textbf{Phase-II}. Define the set of messages
\begin{align*}
\mathcal{M}= \left\{ \Msg{k}{v_j} \mid k \leq p(j)-1 \land U_k \in N(v_j) \right\}	.
\end{align*}
According to the proof of result~\ref{result-A}, we know that for each message $M \in \mathcal{M}$, the content of $M$ and the moment at which node $v_j$ processes $M$ is fully determined given $\mathcal{F}_D$, $\mathcal{F}_1$ and $\mathcal{F}_2$.
%
%Let $\mathcal{T}$ be the moment at which node $v_j$ receives the message $\Msg{p(j-1)}{v_j}$. 
Consider the moment $\mathcal{T}$ at which node $v_j$ processes the message $\Msg{p(j-1)}{v_j}\in \mathcal{M}$ to update the current $(\boldsymbol{Y}, \mathbf{j})$. Suppose after the update, the pair $(\boldsymbol{Y}, \mathbf{j})$ becomes  $(\boldsymbol{Y}', \mathbf{j}')$.
For any $w \in N(v_j)$, let $\mathcal{S}(w)$ be the set $\mathcal{S}_{T_{p(j)}}(w)$ computed by $v_j$ according to~\eqref{eq-def-Stu} based on $(\boldsymbol{Y}, \mathbf{j})$; and let 
$\mathcal{S}'(w)$ be the set $\mathcal{S}_{T_{p(j)}}(w)$ computed by $v_j$ according to~\eqref{eq-def-Stu} based on $(\boldsymbol{Y}', \mathbf{j}')$. Given $\mathcal{F}_D$, $\mathcal{F}_1$ and $\mathcal{F}_2$, it holds that 

%Let $\mathcal{T}-\epsilon$ be the moment right before the moment $\mathcal{T}$. Since $\Msg{p(j-1)}{v_j} \in \mathcal{M}$, then $\mathcal{T}$ is fixed. For any $w \in N(v_j)$, let $\mathcal{S}(w)$ be the set $\mathcal{S}_{T_{p(j)}}(w)$ computed by $v_j$ according to~\eqref{eq-def-Stu} at moment $\mathcal{T} - \epsilon$ and $\mathcal{S}'(w)$ the set $\mathcal{S}_{T_{p(j)}}(w)$ computed by $v_j$ according to~\eqref{eq-def-Stu} at moment $\mathcal{T}$. 
%Given $\mathcal{F}_D$, $\mathcal{F}_1$ and $\mathcal{F}_2$, we have $U_{p(j-1)}=v_{j-1}$, $v_{j-1} \in N(v_j)$ and the following properties:
%Consider the event $\mathcal{A}_2^{(j)}$. Recall $U_{p(j)} \in N(U_{p(j+1)})$ due to $\mathcal{F}_1$. Recall the event $\mathcal{A}_2^{(j)}$ occurs if node $ U_{p(j+1)} $ resolves its update $(T_{p(j+1)}, C_{p(j+1)})$ upon receiving the message $\mathcal{M}_{P(j)}(U_{p(j+1)})$ from its neighbor $U_{p(j)}$. Given $\mathcal{F}_1,\mathcal{F}_2,\mathcal{F}_3$, the moment $\mathcal{T}$ at which node $U_{p(j+1)}$ receives the message $\mathcal{M}_{P(j)}(U_{p(j+1)})$ is fixed. Let $\mathcal{T}-\epsilon$ be the moment right before the moment $\mathcal{T}$. For each neighbor $u \in N(U_{p(j+1)})$, let $\mathcal{S}(u)$ denote the set $\mathcal{S}_{T_{p(j+1)}}(u)$ in~\eqref{eq-def-Stu} evaluated  by node $U_{p(j+1)}$ at moment $\mathcal{T}-\epsilon$ and let $\mathcal{S}'(u)$ denote the set $\mathcal{S}_{T_{p(j+1)}}(u)$ in~\eqref{eq-def-Stu} evaluated  by node $U_{p(j+1)}$ at moment $\mathcal{T}$. The following holds for sets $\mathcal{S}(u), \mathcal{S}'(u)$ for all $u \in N(U_{p(j+1)})$.
\begin{itemize}
\item for all $w \in N(v_j)$, both $\mathcal{S}(w)$ and $\mathcal{S}'(w)$ are fixed,
\end{itemize}
furthermore, it holds that $U_{p(j-1)} = v_{j - 1}$ and
\begin{enumerate}[label=(P\arabic*)]
\item for all $w \in N(v_j) \setminus\{v_{j-1}\}$, $\mathcal{S}(w) = \mathcal{S}'(w)$;\label{P1}
\item $\mathcal{S}'(v_{j-1}) \subseteq \mathcal{S}(v_{j-1})$ and $|\mathcal{S}(v_{j-1})| - |\mathcal{S}'(v_{j-1})|\leq 1$;\label{P2}
\item for all $w \in N(v_j)$, $|\mathcal{S}'(w)| \geq 1$.\label{P3}
\end{enumerate}
Properties~\ref{P1} and~\ref{P2} hold because $v_j$ only processes one the message $\Msg{p(j-1)}{v_j}$ sent from $v_{j-1}$ at moment $\mathcal{T}$. Property~\ref{P3} holds because the set $\mathcal{S}_{T_{p(j)}}(w)$ contains at least one element due to the definition in~\eqref{eq-def-Stu}.

Let $c$ denote $\widehat{Y}_t(v_j)$ where $t = T_{p(j)}-\epsilon$. Note that $c$ is fixed given $\mathcal{F}_D$, $\mathcal{F}_1$ and $\mathcal{F}_2$.
Let $c'$ denote the proposal $C_{p(j)}$.
Let $\PAC, \PRE$ be the acceptance  and rejection thresholds for update $\UD_{p(j)}$ obtained from $(\mathcal{S}(w))_{w \in N(v_j)}$ and  $\PAC', \PRE'$ the thresholds obtained from $(\mathcal{S}'(w))_{w \in N(v_j)}$, formally
\begin{align*}
\PAC &= \min_{\tau \in \mathcal{C}}f^{v_j}_{c, c'}(\tau) & \PRE &= 1 - \max_{\tau \in \mathcal{C}}f^{v_{j}}_{c, c'}(\tau)\\
\PAC' &= \min_{\tau \in \mathcal{C}'}f^{v_j}_{c, c'}(\tau) & \PRE' &= 1 - \max_{\tau \in \mathcal{C}'}f^{v_j}_{c, c'}(\tau),
\end{align*}
where $\mathcal{C} = \bigotimes_{w \in N(v_j)}\mathcal{S}(w)$ and  $\mathcal{C}' = \bigotimes_{w \in N(v_j)}\mathcal{S}'(w)$.

If the event $\mathcal{A}_2^{(j)}$ occurs, then the following event $\mathcal{B}_j$ must occur.
\begin{itemize}
\item \textbf{event} $\mathcal{B}_j$:  $ (\PAC \leq \beta_{p(j)} < 1 -  \PRE) \land (\beta_{p(j)}< \PAC' \lor \beta_{p(j)} \geq 1 - \PRE')$.
\end{itemize}
Suppose the event $\mathcal{A}_2^{(j)}$ occurs. Node $v_j$ resolves the update $\UD_{p(j)}$ right after  moment $\mathcal{T}$, which implies  $\beta_{p(j)}< \PAC' \lor \beta_{p(j)} \geq 1 - \PRE'$.
%Recall $\mathcal{T}$ is the moment at which node $U_{p(j+1)}$ receives the message $\mathcal{M}_{p(j)}(U_{p(j+1)})$ from node $U_{p(j)}$.
And node $v_j$ cannot resolve the update $\UD_{p(j)}$ right before moment $\mathcal{T}$, which implies $ \PAC\leq \beta_{p(j)} < 1 -  \PRE$.  
%Node $U_{p(j+1)}$ resolves the update  at moment $\mathcal{T}$, this implies $\beta_{p(j+1)}< \PAC' \lor \beta_{p(j+1)} \geq 1 - \PRE'$. 

Given $\mathcal{F}_D$, $\mathcal{F}_1$ and $\mathcal{F}_2$,  note that $c' = C_{p(j)}$ is an independent random proposal from distribution $\nu_{v_j}$ and $\beta_{p(j)}$ is an independent random real number uniformly distributed over $[0,1)$. Then 
\begin{align}
\label{eq-bound-a2j}
\Pr[\,\mathcal{A}_2^{(j)} \mid \mathcal{F}_D \land\mathcal{F}_1\land\mathcal{F}_2\,]	&\leq \Pr[\,\mathcal{B}_j\mid \mathcal{F}_D \land\mathcal{F}_1\land\mathcal{F}_2\,]\notag\\
&\leq \EE{\,c' \sim \nu_{v_j} }{(\PAC' - \PAC) + (\PRE' - \PRE) \mid\mathcal{F}_D \land\mathcal{F}_1\land\mathcal{F}_2\,},
\end{align}
where the last inequality holds because $\beta_{p(j)}$ is uniformly distributed over $[0,1)$,  $\PAC' \geq \PAC$ and $\PRE' \geq \PRE$ (because $\mathcal{S}'(w) \subseteq \mathcal{S}(w)$ for all $w \in N(v_j)$). 

Finally, we bound the expectation in~\eqref{eq-bound-a2j}. Given $\mathcal{F}_D$, $\mathcal{F}_1$ and $\mathcal{F}_2$, the value $c = \widehat{Y}_{T_{p(j)} - \epsilon}(v_j)$ and all the sets $\mathcal{S}(w)$ and $\mathcal{S}'(w)$ are determined, and the Properties~\ref{P1},~\ref{P2} and~\ref{P3} hold. Recall
\begin{align*}
(\PAC' - \PAC) + (\PRE' - \PRE) = \min_{\tau \in \mathcal{C}'}f^{v_j}_{c, c'}(\tau) -  \min_{\tau \in \mathcal{C}}f^{v_j}_{c, c'}(\tau) + 	\max_{\tau \in \mathcal{C}}f^{v_{j}}_{c, c'}(\tau) - \max_{\tau \in \mathcal{C}'}f^{v_j}_{c, c'}(\tau).
\end{align*}
We introduce the following optimization problem to find the maximum value of the expectation in~\eqref{eq-bound-a2j} under the worst case of $\mathcal{F}_D$, $\mathcal{F}_1$ and $\mathcal{F}_2$.
Fix two nodes $\{v, u\} \in E$, define the optimization problem $\mathfrak{P}(v,u)$  as follows. 
%For each neighbor $w \in N_v$, let $S_u,S'_u \subseteq [q]$ be two variables. And let $c \in [q]$ be a variable.
\vspace{7pt}
\begin{center}
\fbox{
\parbox[c][][c]{38em}{
\vspace{-7pt}
\begin{align}
\label{eq-def-opt}
	\text{variables}\quad  &S_1(w) \subseteq [q], S_2(w) \subseteq [q] &\forall w \in N_v\\
						   &c \in [q]&\notag\\
	%\text{maximize}   \quad &\sum_{c' \in [q]}\nu_v(c')\min_{\tau \in \mathcal{C}_2}\F(\tau) - \min_{\tau \in \mathcal{C}_1}\F(\tau) + \min_{\tau \in \mathcal{C}_2}\F(\tau) - \min_{\tau \in \mathcal{C}_1}\F(\tau))\notag\\
	\text{maximize}\quad& \sum_{c' \in [q]}\nu_v(c')\left(\min_{\tau \in \mathcal{C}_2}\F(\tau) - \min_{\tau \in \mathcal{C}_1}\F(\tau) + \max_{\tau \in \mathcal{C}_1}\F(\tau) - \max_{\tau \in \mathcal{C}_2}\F(\tau)\right) \span \span \notag\\
	\text{subject to}\quad& \mathcal{C}_1 = \bigotimes_{w \in N_v}S_1(w),\quad \mathcal{C}_2 = \bigotimes_{w \in N_v}S_2(w)&\notag\\
					 &S_2(u) \subset S_1(u) \notag\\
					 &|S_1(u)|-|S_2(u)| =1 \notag\\
					 &|S_2(w)| \geq 1  &\forall w \in N_v \notag\\
					 &S_2(w) = S_1(w) &\forall w \in N_{v} \setminus \{u\}\notag
\end{align}
\vspace{-15pt}
}
}
\end{center}
\vspace{7pt}
%\begin{align}
%\label{eq-def-opt}
%	\text{variables}\quad  &S_1(w) \subseteq [q], S_2(w) \subseteq [q] &\forall w \in N_v\\
%						   &c \in [q]&\notag\\
%	%\text{maximize}   \quad &\sum_{c' \in [q]}\nu_v(c')\min_{\tau \in \mathcal{C}_2}\F(\tau) - \min_{\tau \in \mathcal{C}_1}\F(\tau) + \min_{\tau \in \mathcal{C}_2}\F(\tau) - \min_{\tau \in \mathcal{C}_1}\F(\tau))\notag\\
%	\text{maximize}\quad& \sum_{c' \in [q]}\nu_v(c')\left(\min_{\tau \in \mathcal{C}_2}\F(\tau) - \min_{\tau \in \mathcal{C}_1}\F(\tau) + \max_{\tau \in \mathcal{C}_1}\F(\tau) - \max_{\tau \in \mathcal{C}_2}\F(\tau)\right) \span \span \notag\\
%	\text{subject to}\quad& \mathcal{C}_1 = \bigotimes_{w \in N_v}S_1(w),\quad \mathcal{C}_2 = \bigotimes_{w \in N_v}S_2(w)&\notag\\
%					 &S_2(u) \subset S_1(u) \notag\\
%					 &|S_1(u)|-|S_2(u)| =1 \notag\\
%					 &|S_2(w)| \geq 1  &\forall w \in N_v \notag\\
%					 &S_2(w) = S_1(w) &\forall w \in N_{v} \setminus \{u\}\notag
%\end{align}

%The probability in~\eqref{eq-bound-a2j} must be upper bounded by the optimal value of the objective function in above optimization problem $\mathfrak{P}(v,u)$ with $v = U_{p(j+1)}$ and $u = U_{p(j)}$, because it takes the worst case over all possible sets $\mathcal{S}(w), \mathcal{S}'(w)$ (which are variables $S_1(w),S_2(w)$ in above optimization problem) for $w \in N(U_{p(j+1)})$ and the value $c = Y_{t}(U_{p(j+1)})$ where $t = T_{p(j+1)}-\epsilon$ that maximize the expectation in~\eqref{eq-bound-a2j}.
Remark that we use constraint $|S_1(u)| -|S_2(u)| =1$ rather than  $|S_1(u)| -|S_2(u)| \leq 1$ as Property~\ref{P2} because the value of the objective function is 0 if $|S_1(u)| =|S_2(u)| $.

We claim the  optimal value of the objective function in problem $\mathfrak{P}(v, u)$  is at most 
$$2\max_{a, b, c\in [q]} \EE{c' \sim \nu_v}{\delta_{u, a, b}\F}.$$ 
This result is proved in Section~\ref{section-prove-opt-lemma}.
The expectation in~\eqref{eq-bound-a2j} is upper bounded by the optimal value of the objective function in problem $\mathfrak{P}(v_{j}, v_{j- 1})$. 
By Condition~\ref{condition-Lipschitz}, we have
\begin{align*}
\Pr[\,\mathcal{A}_2^{(j)} \mid \mathcal{F}_D \land \mathcal{F}_1\land \mathcal{F}_2\,] \leq  2\max_{a, b, c\in [q]} \EE{\,c' \sim \nu_{v_j}}{\delta_{v_{j-1}, a, b}f^{v_j}_{c,c'}\,} \leq \frac{2C}{\Delta}.	
\end{align*} 
\end{proof}

%Note that the event $\mathcal{A}_1$ depends only on the randomness of the Poisson clocks. Suppose we fix any particular results of Poisson clocks such that $\mathcal{F} = m$ and the event $\mathcal{A}_1$ occurs. In \ps{} algorithm, conditioning on any results of Poisson clocks, each value $C_i$ and real number $\beta_{i}$ are sampled independently. Consider any index $1\leq j < \ell$ such that $v_j \neq v_{j+1}$. If we further fix the  values of $C_1,C_2,\ldots,C_{i_j}$ and $\beta_1,\beta_2,\ldots,\beta_{i_j}$, in this condition, we have
%\begin{itemize}
%\item for each $1\leq k\leq j-1$,  whether the event $\mathcal{A}_2^{(k)}$ occurs is fully determined;
%\item the values of random variables  are fully determined.
%\end{itemize}

\subsection{Analysis of the optimization problem}
\label{section-prove-opt-lemma}
\begin{lemma}
\label{lemma-opt-problem}
Fix an edge $\{v,u\}\in E$.
Let $\OPT$ denote the objective function value of the optimal solution to problem $\mathfrak{P}(v,u)$ defined in~\eqref{eq-def-opt}. It holds that
\begin{align*}
\OPT \leq 	2\max_{a, b, c\in [q]} \EE{c' \sim \nu_v}{\delta_{u, a, b}\F}.
\end{align*}
\end{lemma}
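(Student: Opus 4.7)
The plan is to exploit the rigid structure forced by the constraints of $\mathfrak{P}(v,u)$. Since $S_2(w)=S_1(w)$ for every $w\in N_v\setminus\{u\}$, $S_2(u)\subset S_1(u)$, and $|S_1(u)|-|S_2(u)|=1$, there is a unique element $a\in[q]$ with $S_1(u)=S_2(u)\cup\{a\}$ and $a\notin S_2(u)$. Consequently $\mathcal{C}_2\subset\mathcal{C}_1$, and the difference $\mathcal{C}_1\setminus\mathcal{C}_2$ is exactly the ``slab'' $\mathcal{C}_1^{a}\triangleq\{\tau\in\mathcal{C}_1:\tau_u=a\}$. Moreover, since $|S_2(u)|\ge 1$, I may fix any $b\in S_2(u)$ for use throughout the argument.

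The first step is a pointwise Lipschitz bound. For each $\sigma\in\mathcal{C}_1^{a}$, let $\sigma^{a\to b}$ denote the configuration obtained from $\sigma$ by replacing its $u$-coordinate $a$ with $b$. Since $b\in S_2(u)$ and $\sigma_w\in S_1(w)=S_2(w)$ for all $w\neq u$, we get $\sigma^{a\to b}\in\mathcal{C}_2$, and the definition of the operator $\delta_{u,a,b}$ immediately yields $|\F(\sigma)-\F(\sigma^{a\to b})|\le \delta_{u,a,b}\,\F$.

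The second step converts this into bounds on the min/max gaps that appear in the objective. Writing $m_i=\min_{\tau\in\mathcal{C}_i}\F(\tau)$ and $M_i=\max_{\tau\in\mathcal{C}_i}\F(\tau)$, containment gives $m_1\le m_2$ and $M_1\ge M_2$. To bound $m_2-m_1$: if the $\mathcal{C}_1$-minimum is attained in $\mathcal{C}_2$, then $m_1=m_2$; otherwise it is attained at some $\sigma^\star\in\mathcal{C}_1^{a}$, and because $(\sigma^\star)^{a\to b}\in\mathcal{C}_2$, the pointwise bound gives $m_2\le\F((\sigma^\star)^{a\to b})\le\F(\sigma^\star)+\delta_{u,a,b}\,\F=m_1+\delta_{u,a,b}\,\F$. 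Either way $m_2-m_1\le\delta_{u,a,b}\,\F$, and a symmetric argument (taking a $\mathcal{C}_1$-maximizer either in $\mathcal{C}_2$ or in $\mathcal{C}_1^a$) yields $M_1-M_2\le\delta_{u,a,b}\,\F$.

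Adding these two inequalities at each $c'\in[q]$ and averaging against $\nu_v$ bounds the $\mathfrak{P}(v,u)$-objective by $2\,\EE{c'\sim\nu_v}{\delta_{u,a,b}\,\F}$ for the particular $a$, $b$, $c$ associated to the feasible solution; taking the maximum over $a,b,c\in[q]$ on the right absorbs this dependence and gives $\OPT\le 2\max_{a,b,c\in[q]}\EE{c'\sim\nu_v}{\delta_{u,a,b}\,\F}$. I anticipate no serious obstacle; the only subtle point is ensuring that the auxiliary neighbor-value $b$ always exists, which is exactly what the constraint $|S_2(u)|\ge 1$ provides, and that both the min-gap and the max-gap can be attributed to the same single flipped coordinate $u$, which is what makes the one-site Lipschitz quantity $\delta_{u,a,b}$ the right object to invoke.
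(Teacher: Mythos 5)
Your proof is correct, and its engine is the same as the paper's: bound the min-gap and the max-gap between $\mathcal{C}_1$ and $\mathcal{C}_2$ by flipping the $u$-coordinate of a $\mathcal{C}_1$-extremizer from $a$ to some $b\in S_2(u)$, which lands in $\mathcal{C}_2$ and costs at most $\delta_{u,a,b}\,\F$. The route differs in a useful way, though. The paper first splits $\OPT\le\OPT_1+\OPT_2$ (treating the $\min$-part and $\max$-part as two separate optimization problems) and then invokes the structural Claim~\ref{claim-optsol}, proved by an exchange argument, to normalize an optimal solution so that $|S_2^\star(u)|=1$ and hence $S_1^\star(u)=\{a,b\}$ supplies the pair for the Lipschitz operator. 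You dispense with both steps: at an arbitrary feasible solution you identify the unique $a$ with $S_1(u)=S_2(u)\cup\{a\}$, pick any $b\in S_2(u)$ (available since $|S_2(u)|\ge 1$), observe that $\mathcal{C}_1\setminus\mathcal{C}_2$ is exactly the slab $\{\tau\in\mathcal{C}_1:\tau_u=a\}$, and bound both gaps by the same $\delta_{u,a,b}\,\F$ via the two-case analysis (extremizer already in $\mathcal{C}_2$, or in the slab). This yields a shorter proof with no normalization of the optimum, and in fact a marginally sharper intermediate bound, since a single triple $(a,b,c)$ tied to the solution controls both the $\min$ and $\max$ terms, whereas the paper's decomposition permits different maximizing triples for $\OPT_1$ and $\OPT_2$; both, of course, give the stated inequality, and since the rest of the paper never uses \eqref{eq-opt-bound-1}--\eqref{eq-opt-bound-2} separately, your argument would serve as a drop-in replacement.
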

\begin{proof}
Suppose we replace the objective function in problem $\mathfrak{P}(v,u)$ with 
\begin{align}
\label{eq-opt-1}
\text{maximize}\quad \sum_{c' \in [q]}\nu_v(c')\left(\min_{\tau \in \mathcal{C}_2}\F(\tau) - \min_{\tau \in \mathcal{C}_1}\F(\tau)\right)
\end{align}
and keep all variables and constraints unchanged. We obtain a new optimization problem.
Let $\OPT_1$ denote the objective function value of the optimal solution to this problem. 

Similarly, suppose we replace the objective function in problem $\mathfrak{P}(v,u)$ with
\begin{align*}
\text{maximize}\quad \sum_{c' \in [q]}\nu_v(c')\left(\max_{\tau \in \mathcal{C}_1}\F(\tau) - \max_{\tau \in \mathcal{C}_2}\F(\tau)\right)
\end{align*}
and keep all variables and constraints unchanged. We obtain another new optimization problem.
Let $\OPT_2$ denote the objective function value of the optimal solution to this problem. 

It is easy to verify 
\begin{align*}
\OPT \leq \OPT_1 + \OPT_2.	
\end{align*}
We show that
\begin{align}
\OPT_1	&\leq \max_{a, b, c\in [q]} \EE{c' \sim \nu_v}{\delta_{u, a, b}\F} \label{eq-opt-bound-1}\\
\OPT_2 &\leq \max_{a, b, c\in [q]} \EE{c' \sim \nu_v}{\delta_{u, a, b}\F}.\label{eq-opt-bound-2}
\end{align}
This proves the lemma.

We prove inequality~\eqref{eq-opt-bound-1}. Inequality~\eqref{eq-opt-bound-2} can be proved by going through a similar proof.

Consider the new optimization problem with objective function~\eqref{eq-opt-1}. We claim the following result for this problem.
\begin{claim}
\label{claim-optsol}
There exists an optimal solution $\SOL^\star = (\boldsymbol{S}_1^\star, \boldsymbol{S}_2^\star , c^\star)$ such that $|S^\star_2(u)| = 1$, where  $\boldsymbol{S}_1^\star = (S_1^\star(w))_{w \in N(v)}$ and $\boldsymbol{S}_2^\star = (S_2^\star(w))_{w \in N(v)}$.
\end{claim}
Thus, we have $|S^\star_1(u)| = 2$ due to the constraint of the problem. Suppose $S^\star_1(u)=\{a, b\}$ and $S^\star_2(u) = \{b\}$. Fix a value $c' \in [q]$, define
\begin{align*}
\gamma_1(c') &= \min_{\tau \in \mathcal{C}^\star_1}\FC(\tau) = \FC(\tau')\\
\gamma_2(c') &= \min_{\tau \in \mathcal{C}^\star_2}\FC(\tau) = \FC(\tau''),
\end{align*}
where
\begin{align*}
\mathcal{C}_1^\star = \bigotimes_{w \in N_v}S^\star_1(w),\quad \mathcal{C}_2^\star = \bigotimes_{w \in N_v}S^\star_2(w),	
\end{align*}
and $\tau' = \arg\min_{\tau \in \mathcal{C}^\star_1}\FC(\tau), \tau'' = \arg \min_{\tau \in \mathcal{C}^\star_2}\FC(\tau)$.
It must hold that $\tau''_u = b$ because $S^\star_2(u) = \{b\}$. There are two cases for $\tau'_u$: $\tau'_u = a$ or $\tau'_u = b$, because $S^\star_1(u) = \{a, b\}$. 

Suppose $\tau'_u = b$. Since  $S_1^\star(w) = S_2^\star(w)$ for all $w \in N_v \setminus \{u\}$, then we must have 
\begin{align*}
\gamma_2(c') - \gamma_1(c')	= 0 \leq \delta_{u,a,b}\FC.
\end{align*}

Suppose $\tau'_u = a$. We define $\tau''' \in [q]^{N_v}$ as
\begin{align*}
\tau'''_w = \begin{cases}
b &\text{if } w = u\\
\tau'_w &\text{if } w \neq u. 	
 \end{cases}
\end{align*}
Note that $b \in S_2^\star(u)$ and $\tau'_w  \in S_2^\star(w)$ for all $w \in N_v \setminus \{u\}$ (because $S_2^\star(w) = S_1^\star(w)$). We have $\tau''' \in \mathcal{C}_2^\star$, which implies $\FC(\tau''') \geq \FC(\tau'')$. Hence 
\begin{align*}
\gamma_2(c') - \gamma_1(c')	= \FC(\tau'') - \FC(\tau') \leq \FC(\tau''') - \FC(\tau') \leq  \delta_{u,a,b}\FC.
\end{align*}
The last inequality is because $\tau'$ and $\tau'''$ agree on all nodes except $u$ and $\tau'_u = a, \tau'''_u =b$.

Combining above two cases together, we have 
\begin{align*}
\OPT_1&=\sum_{c' \in [q]}\nu_v(c')\left(\min_{\tau \in \mathcal{C}_2^\star}\FC(\tau) - \min_{\tau \in \mathcal{C}_1^\star}\FC(\tau)\right)\\
& =\sum_{c' \in [q]}\nu_v(c') \left( \gamma_2(c') - \gamma_1(c')    \right)\\
& \leq \sum_{c' \in [q]} \nu_v(c')\delta_{u,a,b}\FC\\
& = \EE{c' \sim \nu_v}{\delta_{u, a, b}\FC}\\
&\leq \max_{ a, b, c\in [q]} \EE{c' \sim \nu_v}{\delta_{u, a, b}\F}.
\end{align*}
This proves the inequality~\eqref{eq-opt-bound-1}.
\end{proof}

\begin{proof}(Proof of Claim~\ref{claim-optsol})
Suppose $\SOL^* = (\boldsymbol{S}_1^* , \boldsymbol{S}_2^* , c^*)$ is an optimal solution with $|S_2^*(u)| > 1$. 
Note that $S_2^*(u) \subset S_1^*(u)$.
Let $b \in [q]$ be an arbitrary element in $S_2^*(u)$. We remove the element $b$ from both $S_1^*(u)$ and $S_2^*(u)$ to obtain a new solution $\SOL^\circ = \{ \boldsymbol{S}_1^\circ, \boldsymbol{S}_2^\circ, c^\circ\}$. Namely
\begin{align*}
	S_1^\circ(u) &= S_1^*(u) \setminus \{b\}\\
	S_2^\circ(u) &= S_2^*(u) \setminus \{b\}
\end{align*}
and $c^\circ = c^*$, $S_1^\circ(w) = S_1^*(w)$, $S_2^\circ(w) = S_2^*(w)$ for all $w \in N_v \setminus \{u\}$. It is easy to verify that the new solution $\SOL^\circ$ also satisfies all the constraints. We will prove that $\SOL^\circ$ is also an optimal solution. Since $|S^\circ_2(u)| = |S^*_2(u)| - 1$, then we can repeat this argument to find the optimal solution $\SOL^\star$ with $|S^\star_2(u)| = 1$.

We denote the objective function value of the solution $\SOL^*$ as
\begin{align*}
g(\SOL^*) = \sum_{c' \in [q]}\nu_v(c')\left(\min_{\tau \in \mathcal{C}^*_2}\FA(\tau) - \min_{\tau \in \mathcal{C}^*_1}\FA(\tau)\right),
\end{align*}
where
\begin{align*}
\mathcal{C}_1^* = \bigotimes_{w \in N_v}S^*_1(w),\quad \mathcal{C}_2^* = \bigotimes_{w \in N_v}S^*_2(w).	
\end{align*}
Similar, we denote objective function value of the solution $\SOL^\circ$ as $g(\SOL^\circ)$. Suppose $S^*_1(u) \setminus S^*_2(u) = \{a\}$. We define the following set of values $S_a \subseteq [q]$ as
\begin{align*}
S_a \triangleq \left\{c' \in [q] \mid \min_{\tau \in \mathcal{C}^*_2}\FA(\tau) >  \min_{\tau \in \mathcal{C}^*_1}\FA(\tau) \right\}.
\end{align*}
Note that $\mathcal{C}_2^* \subset \mathcal{C}_1^*$. We must have $\min_{\tau \in \mathcal{C}^*_2}\FA(\tau) \geq   \min_{\tau \in \mathcal{C}^*_1}\FA(\tau) $ for all $c' \in [q]$.
Then $g(\SOL^*)$ can be rewritten as 
\begin{align}
\label{eq-g-SOL*}
g(\SOL^*) = \sum_{c' \in S_a}\nu_v(c')\left(\min_{\tau \in \mathcal{C}^*_2}\FA(\tau) - \min_{\tau \in \mathcal{C}^*_1}\FA(\tau)\right).	
\end{align}

For each $c' \in S_a$, suppose $\min_{\tau \in \mathcal{C}^*_1}\FA(\tau) = \FA(\tau^*)$, then it must hold that $\tau^*_u = a$. This is because $S^*_1(u)$ and $S^*_2(u)$ differ only at element $a$ and $S^*_1(w) = S^*_2(w)$ for all $w \in N_v \setminus \{u\}$. If $\tau^*_u \neq a$, we must have $\min_{\tau \in \mathcal{C}^*_2}\FA(\tau) = \min_{\tau \in \mathcal{C}^*_1}\FA(\tau)$. This is contradictory to $c' \in S_a$. 

Consider the solution $\SOL^\circ$. Note that the set $S^\circ_1(u) = S^*_1(u) \setminus \{b\}$ and  $S^\circ_2(u) = S^*_2(u) \setminus \{b\}$, where $b \neq a$ because $b \in S^*_2(u)$. By the definition of $\SOL^\circ(u)$, we have
\begin{align}
\forall c' \in S_a: \quad &\min_{\tau \in \mathcal{C}^*_1}\FA(\tau) = \min_{\tau \in \mathcal{C}^\circ_1}\FB(\tau)\label{eq-C1}\\
\forall c' \in S_a: \quad &\min_{\tau \in \mathcal{C}^*_2}\FA(\tau) \leq \min_{\tau \in \mathcal{C}^\circ_2}\FB(\tau),\label{eq-C2}
\end{align}
where 
\begin{align*}
\mathcal{C}_1^\circ = \bigotimes_{w \in N_v}S^\circ_1(w),\quad \mathcal{C}_2^\circ = \bigotimes_{w \in N_v}S^\circ_2(w).		
\end{align*}
Recall that  $c^\circ = c^*$, $S_1^\circ(w) = S_1^*(w)$, $S_2^\circ(w) = S_2^*(w)$ for all $w \in N_v \setminus \{u\}$. Thus ~\eqref{eq-C1} holds because $a \in S_1^\circ(u)$ and~\eqref{eq-C2} holds because $S^\circ_2(u) \subset S_2^*(u)$ (hence $\mathcal{C}^\circ_2 \subset \mathcal{C}^*_2$). Combining~\eqref{eq-C1} and~\eqref{eq-C2} together, we have 
\begin{align*}
g(\SOL^\circ)&=  \sum_{c' \in [q]} \nu_v(c')\left(\min_{\tau \in \mathcal{C}^\circ_2}\FB(\tau) - \min_{\tau \in \mathcal{C}^\circ_1}\FB(\tau)\right)\\
&\geq \sum_{c' \in S_a} \nu_v(c')\left(\min_{\tau \in \mathcal{C}^\circ_2}\FB(\tau) - \min_{\tau \in \mathcal{C}^\circ_1}\FB(\tau)\right)\\
&\geq \sum_{c' \in S_a}\nu_v(c')\left(\min_{\tau \in \mathcal{C}^*_2}\FA(\tau) - \min_{\tau \in \mathcal{C}^*_1}\FA(\tau)\right)\\
&= g(\SOL^*),
\end{align*}
where the last equation holds due to~\eqref{eq-g-SOL*}.
Thus $\SOL^\circ$ is also an optimal solution.	
\end{proof}

\bibliographystyle{plain}
\bibliography{refs.bib}
\pagebreak

%\begin{appendices}
%\input{Appendix.tex}
%\end{appendices}

\end{document}